\documentclass[12pt]{article}

\usepackage{verbatim,color,amssymb}
\usepackage{amsmath}					
\usepackage{amsthm}					
\usepackage{natbib}
\usepackage{multirow}
\usepackage{setspace}
\usepackage[mathscr]{euscript}
\usepackage{fancyhdr}
\usepackage{enumitem}
\usepackage{graphicx}
\usepackage{lineno}
\usepackage{array,booktabs}
\usepackage{comment}
\usepackage{soul}
\usepackage{url}

\usepackage{multibib}
\newcites{latex}{References}

\usepackage{lscape}

\usepackage{setspace}
\usepackage{tikz}
\usetikzlibrary{arrows}
\usepackage{multirow}

\setlength{\textheight}{9in}
\setlength{\textwidth}{6.5in}
\setlength{\topmargin}{-36pt}
\setlength{\oddsidemargin}{0pt}
\setlength{\evensidemargin}{0pt}
\tolerance=500

\usepackage{subfigure}

\makeatletter
\renewcommand{\paragraph}{%
  \@startsection{paragraph}{4}%
  {\z@}{0.5ex \@plus 1ex \@minus 1ex}{-1em}%
  {\normalfont\normalsize\bfseries}%
}
\def\thm@space@setup{\thm@preskip=5pt
\thm@postskip=5pt}
\makeatother

\setlength{\bibsep}{2pt plus 0.3ex}



\newtheorem{Thm}{\underline{\bf Theorem}}
\newtheorem{Assmp}{\underline{\bf Assumptions}}

\newtheorem*{Proof*}{Proof}

\newtheorem{Rem}{\underline{\bf Remark}}

\newtheorem{Lem}{\underline{\bf Lemma}}
\newtheorem{Cor}{\underline{\bf Corollary}}

\def\eE{\mathbb{E}}

\def\C{{\cal C}}

\def\S{{\cal S}}

\def\W{{\cal W}}

\def\simind{\stackrel{\mbox{\scriptsize{ind}}}{\sim}}
\def\simiid{\stackrel{\mbox{\scriptsize{iid}}}{\sim}}

\def\diag{\hbox{diag}}

\def\wt{\widetilde}

\def\diag{\hbox{diag}}

\def\var{\hbox{var}}
\def\cov{\hbox{cov}}
\def\corr{\hbox{corr}}

\def\trace{\hbox{trace}}

\def\vect{\hbox{vec}}

\def\Ga{\hbox{Ga}}

\def\MVN{\hbox{MVN}}

\def\Normal{\hbox{Normal}}

\def\P_25_ICML{{\it Proceedings of the 25th international conference on Machine learning}}

\def\bse{\begin{eqnarray*}}
\def\ese{\end{eqnarray*}}
\def\be{\begin{eqnarray}}
\def\ee{\end{eqnarray}}
\def\bq{\begin{equation}}
\def\eq{\end{equation}}

\def\vec{\hbox{vec}}

\usepackage{verbatim,color,amssymb}

\def\trans{^{\rm T}}

\def\th{^{th}}
\def\bone{{\mathbf 1}}

\def\ba{{\mathbf a}}
\def\bA{{\mathbf A}}
\def\bb{{\mathbf b}}
\def\bB{{\mathbf B}}
\def\bc{{\mathbf c}}
\def\bC{{\mathbf C}}

\def\bD{{\mathbf D}}

\def\b1e{{\mathbf e}}

\def\bg{{\mathbf g}}
\def\bG{{\mathbf G}}
\def\bI{{\mathbf I}}

\def\bL{{\mathbf L}}
\def\bM{{\mathbf M}}

\def\bP{{\mathbf P}}
\def\bq{{\mathbf q}}
\def\bQ{{\mathbf Q}}
\def\br{{\mathbf r}}
\def\bR{{\mathbf R}}

\def\bU{{\mathbf U}}

\def\bx{{\mathbf x}}
\def\bX{{\mathbf X}}
\def\by{{\mathbf y}}
\def\bY{{\mathbf Y}}

\def\bzero{{\mathbf 0}}

\newcommand{\etam}{\mbox{\boldmath $\eta$}}
\newcommand{\bmu}{\mbox{\boldmath $\mu$}}
\newcommand{\bDelta}{\mbox{\boldmath $\Delta$}}

\newcommand{\bepsilon}{\mbox{\boldmath $\epsilon$}}
\newcommand{\btheta}{\mbox{\boldmath $\theta$}}
\newcommand{\bTheta}{\mbox{\boldmath $\Theta$}}
\newcommand{\bbeta}{\mbox{\boldmath $\beta$}}
\newcommand{\bgamma}{\mbox{\boldmath $\gamma$}}
\newcommand{\bzeta}{\mbox{\boldmath $\zeta$}}
\newcommand{\bsigma}{\mbox{\boldmath $\sigma$}}
\newcommand{\bSigma}{\mbox{\boldmath $\Sigma$}}
\newcommand{\balpha}{\mbox{\boldmath $\alpha$}}

\newcommand{\blambda}{\mbox{\boldmath $\lambda$}}
\newcommand{\bLambda}{\mbox{\boldmath $\Lambda$}}
\newcommand{\bOmega}{\mbox{\boldmath $\Omega$}}

\newcommand{\bpsi}{\mbox{\boldmath $\psi$}}
\newcommand{\bGamma}{\mbox{\boldmath $\Gamma$}}

\newcommand{\abs}[1]{\left\vert#1\right\vert}

\renewcommand\footnoterule{\kern-3pt \hrule \textwidth 2in \kern 2.6pt}

\def\colred#1{\textcolor{red}{#1}}

\def\boxit#1{\vbox{\hrule\hbox{\vrule\kern6pt \vbox{\kern6pt \textcolor{blue}{#1}\kern6pt}\kern6pt\vrule}\hrule}}

\def\authorfootnote#1{{\let\thefootnote\relax\footnotetext{#1}}}


\allowdisplaybreaks

\begin{document}
\thispagestyle{empty}
\baselineskip=28pt

\begin{center}
{\LARGE{\bf 
Bayesian Semiparametric\\ 
\vskip -9pt 
Orthogonal Tucker Factorized Mixed Models\\ 
for 
Multi-dimensional Longitudinal Functional Data
}}
\end{center}
\baselineskip=12pt

\vskip 2mm
\begin{center}
 Arkaprava Roy\\
 arkaprava.roy@ufl.edu\\
 Department of Biostatistics,
 University of Florida\\
 2004 Mowry Road, Gainesville, FL  32611, USA\\
 \vskip 2mm%
 Abhra Sarkar\\
 abhra.sarkar@utexas.edu \\
 Department of Statistics and Data Sciences,
 The University of Texas at Austin\\
 2317 Speedway D9800, Austin, TX 78712-1823, USA\\
 \vskip 2mm%
 and\\
 The Alzheimer's Disease Neuroimaging Initiative
\end{center}

\begin{abstract}
\baselineskip=12pt
We introduce a novel longitudinal mixed model for analyzing complex multi-dimensional functional data, addressing challenges such as high-resolution, structural complexities, and computational demands. Our approach integrates dimension-reduction techniques, including basis function representation and Tucker tensor decomposition, to model complex functional (e.g., spatial and temporal) variations, group differences, and individual heterogeneity while drastically reducing model dimensions. The model accommodates 
multiplicative random effects whose marginalization yields a novel Tucker-decomposed covariance-tensor framework. To ensure scalability, we employ semi-orthogonal mode matrices implemented via a novel graph-Laplacian-based smoothness prior with low-rank approximation, leading to an efficient posterior sampling method. A cumulative shrinkage strategy promotes sparsity and enables semi-automated rank selection. We establish theoretical guarantees for posterior convergence and demonstrate the method's effectiveness through simulations, showing significant improvements over existing techniques. Applying the method to Alzheimer’s Disease Neuroimaging Initiative (ADNI) neuroimaging data reveals novel insights into local brain changes associated with disease progression, highlighting the method's practical utility for studying cognitive decline and neurodegenerative conditions.
\end{abstract}

\vskip 20pt 
\baselineskip=12pt
\noindent\underline{\bf Key Words}: 
B-spline mixtures, 
Fractional anisotropy, 
Higher-order singular value decomposition, 
Functional data, 
Longitudinal mixed models, 
Multi-group data, 
Multiplicative random effects, 
Neuroimaging data, 
Orthogonal matrices, 
Tensor factorization



\clearpage\pagebreak\newpage
\pagenumbering{arabic}
\newlength{\gnat}
\setlength{\gnat}{25pt}
\baselineskip=\gnat

\section{Introduction}

High-resolution functional data over multi-dimensional compact grids are routinely collected in many modern applications, including in medical imaging \citep{Bi2021}, gait analysis \citep{pham2017tensor}, and climate studies \citep{li2020tensor}. 
In medical imaging alone, such data arise in various forms, including diffusion-weighted MRI (DW-MRI), functional MRI, positron emission tomography (PET), computed tomography, and ultrasound \citep{Gandy2011, Li2017, Sun2017, shi2023tensor}. 
Clinically, these data aid in identifying differences between patients and healthy controls. 
Additionally, longitudinal data sets can help pinpoint regions of differential changes over time, providing insights into disease progression and valuable translational benefits.

Despite their potential usefulness, 
the ultra-high-resolutions of such data sets, 
their complex functional nature, 
their group and individual-level heterogeneity 
pose daunting statistical challenges. 
This article introduces a flexible Bayesian orthogonal tensor factor mixed model for high-resolution multi-group 
multi-dimensional longitudinal functional data, addressing many outstanding challenges in a principled and efficient manner.

\paragraph{Motivation.} 
We are motivated particularly by a neuroimaging application based on DW-MRI. 
We run our analysis on two types of DW-MRI-extracted features using the recently completed Alzheimer's Disease Neuroimaging Initiative (ADNI-3) (2016–2022) \citep{weiner2017alzheimer} data. 
Unlike past ADNI phases, ADNI-3 data is multi-shell high angular resolution diffusion imaging (HARDI) which allows us to apply the neurite orientation dispersion and density imaging (NODDI) \citep{zhang2012noddi} model along with the commonly used diffusion tensor imaging (DTI) \citep{soares2013hitchhiker} to extract different types of the microstructural features. 
Specifically, we analyze four different outcomes, as detailed in Section~\ref{sec: adni analysis}.
This led to several interesting findings.
Our dataset for each outcome includes 264 images from 101 participants from 3 groups, 
\ul{each image containing over 250,000 voxels}. 
For Alzheimer's (AD), there is substantial subject-level heterogeneity in these outcomes.
Additionally, the observation times are highly variable across subjects.
The key scientific goals here are to quantify differences in both baseline and longitudinal trajectories caused by neurodegeneration at both group and subject levels.

\paragraph{Existing Methods.} 
Since the literature on longitudinal, functional, and imaging data analyses is extensive, 
\ul{we review here a selection of works, with an additional emphasis on Bayesian methods, 
which are most relevant to our proposed ideas.}

To our knowledge, there are three primary strategies for analyzing longitudinal functional data. 
The first is to run a multi-step analysis: 
A first-stage mass-univariate analysis followed by  
a second-stage smoothing of the univariate estimates \citep{cui2022fast}. 
Subject heterogeneity is still hard to incorporate efficiently. 
The second approach is to apply spatially varying coefficient models, which would broadly come under the longitudinal functional data analysis framework \citep{greven2010longitudinal,csenturk2010functional,montagna2012bayesian,zipunnikov2014longitudinal,park2015longitudinal,zhu2019fmem,shamshoian2022bayesian,kang2023joint} along with some spatio-temporal approaches \citep{gossl2001bayesian,woolrich2004fully,penny2005bayesian,bowman2007spatiotemporal,guo2008predicting,bowman2008bayesian,derado2010modeling,hyun2016stgp,zhang2016spatiotemporal,abi2020monotonic} 
{These approaches may suffer from one or more of the following limitations: 
Insufficiently flexible parametric assumptions; 
inability to accommodate irregularly spaced observation points in imbalanced study designs; 
multi-stage implementation, which may result in inefficient inference and uncertainty quantification; 
lack of scalability to handle ultra-high-resolution data; 
and inability to account for the inherently multi-dimensional structure of data.} 
A third line of research relies on tensor-factorization-based approaches \citep{tucker:1966,carroll1970analysis,kolda2009tensor} 
and is becoming popular for 
their ability to inherently handle multi-dimensional functional data \citep{zhou2013tensor}. 

Specifically, the canonical polyadic (CP) or parallel factor (PARAFAC) decomposition represents a tensor into a sum of rank-1 tensors. 
The Tucker decomposition generalizes this by introducing a core tensor that interacts with factor matrices, allowing for a more expressive representation.
Research to date has often focused on the simpler CP decomposition \citep{zhou2013tensor,zhao2015bayesian,khan2016bayesian,hore2016tensor,sun2017store,lock2018tensor,zhang2019tensor,spencer2020joint,guhaniyogi2021bayesian,zhang2021bayesian,liu2023joint,liu2023integrative,jiang2024bamita,luo2024bayesian}, including works on modeling longitudinal neuroimaging data \citep{zhang2019tensor,kundu2023bayesian,niyogi2024tensor,sort2024latent}. 
Smooth functional versions of the CP decomposition 
have also been introduced \citep{yokota2016smooth,wang2017generalized,han2024guaranteed,guan2024smooth}. 
Despite the promise of substantially greater model flexibility and compressibility, 
works employing the Tucker approach have been limited \citep{xu2013bayesian,li2018tucker,spencer2024bayesian,stolf2024bayesian}, 
likely due to their additional modeling and computational challenges 
even when used in their simplest form. 
Adapting the Tucker approach to build longitudinal functional mixed models in ultra-high-resolution settings is therefore both promising and daunting at the same time. 

\paragraph{Our Proposed Approach.} 
We present a novel longitudinal mixed model for multi-dimensional functional data 
that integrates Tucker tensor decomposition, basis function expansion, and latent factor representation to address the aforementioned challenges. 
We adopt a Bayesian approach to estimation and inference, which supports 
hierarchical model construction, information sharing between model components, prior-based model regularization, and finite-sample uncertainty assessment via Markov chain Monte Carlo (MCMC). 

To begin with, we collect the images from each individual 
and form an expanded tensor indexed, in addition to the original voxel axes, by their group label and time points, 
enabling us to characterize group-level differences and temporal dynamics in a unified framework. 
Next, we decompose these tensors into two additive components: One capturing the baseline at the beginning of the data collection period, 
the other characterizing the subsequent longitudinal changes. 
The time-varying component is further modeled using flexible mixtures of B-spline bases 
with 
tensor coefficients. 
To address the massive dimensionality challenges, 
we represent both the baseline and basis coefficient tensors using a reduced rank multiplicative mixed effects Tucker tensor decomposition model, comprising a random effects core tensor specific to each subject and dimension-associated mode matrices shared between them. 

We impose a semi-orthogonal structure for the mode matrices via a novel prior distribution. 
The prior results in an easy-to-sample-from posterior for the mode matrices. 
Crucially, semi-orthogonal mode matrices make the core tensor elements a-posteriori conditionally independent, 
which allows them to also be updated efficiently in parallel, 
overcoming the daunting computational challenges associated with them as well. 
Furthermore, the prior's simple design allows us to adapt it to a graph-Laplacian-operator-based product Gaussian framework, incorporating low-rank approximations and column-wise cumulative shrinkage, all while preserving its computational advantages. 
This construction captures spatio-temporal dependencies across voxels and time, promotes sparsity, and facilitates automatic, data-adaptive tensor rank selection.

Overall, we believe our model is built on a clear rationale for its different components: 
Low-rank Tucker tensor decomposition-based representations achieve dimension reduction, 
splines model smooth longitudinal evolution, while allowing for irregular subject-specific observation times, 
random effects distributions on the core tensors accommodate individual heterogeneity, 
graph Laplacian infused sparse Gaussian process models for the columns of the mode matrices capture spatial dependence while automating rank selection, and 
their semi-orthogonality ensures efficient scalable computation. 

Our approach is highly data-adaptive, requiring very few prior hyper-parameters deep inside the model hierarchy. 
Theoretically, we establish some useful distributional properties of our model, large support of our proposed priors, and posterior consistency in recovering the `true' parameters under mild regularity conditions.
Notably, integrating out the random effects from the proposed mixed effects mean model 
also leads to an interesting novel composite covariance-tensor model for longitudinal multi-dimensional functional data settings.




\paragraph{Key Statistical Contributions.}
%
(a) First, we advance the field using a Tucker tensor decomposition paradigm which, despite its modeling and computational demands compared to simpler CP models, offers greater flexibility and efficient parameter compression, but has received limited attention, particularly in the Bayesian context. 
(b) In doing so, we introduce a novel prior that enforces semi-orthogonality on the mode matrices, facilitating efficient posterior exploration even in ultra-high-resolution settings. 
(c) We further incorporate several realistic enhancements of this prior, 
including 
a smoothness-inducing low-rank basis decomposition with a graph-Laplacian for the imaging axes and time, and 
semi-automated rank selection via stochastic ordering of the columns. 
%
The combination of (a), (b), and (c) substantially improves estimation and inference results.  
(d) 
We also account for individual heterogeneity in both baseline and longitudinal trajectories within a principled multiplicative mixed model framework, differing from the typical focus on population-level effects in longitudinal functional data analyses.
(e) Specifically, we model these trajectories as flexible smooth functions over time using novel mixtures of B-splines with individual-specific coefficients satisfying identifiability constraints that separate them from the baselines. 
(f) Our approach also unifies models for tensor-valued means with 
models for covariance tensors within a cohesive hierarchical framework, inherently satisfying positive definiteness without additional adjustments. 
Specifically, we show that flexible Tucker decomposed models for covariance tensors arise naturally from Tucker factorized mean models with random effects core components with greater flexibility in multi-subject settings. 
(g) Finally, 
the flexibility and computational efficiency of our basic tensor decomposition-based model construction and implementation make them highly adaptable to a wide range of other (static and longitudinal; fixed and mixed model) settings involving tensors and orthogonal matrices. 

There exist some interesting works on priors for orthonormal matrices \citep{hoff2007model,hoff2009simulation,jauch2020random,north2024flexible,jauch2025prior} 
which may not be amenable to easy computation in complex hierarchical settings, requiring spatial smoothness, cumulative shrinkage for automatic rank selection, etc. 
Our prior relies on a Gram-Smidth-type decomposition, similar to \cite{north2024flexible}, 
but avoids orthonormality-related unit-norm projections, 
allowing a more direct use of subspace-constrained multivariate distributions, 
facilitating further hierarchical refinements while maintaining easy posterior computation.


\paragraph{Outline of the Paper.}
Section \ref{sec: tensor factorization} reviews some tensor basics; 
Section \ref{sec: models} discusses the main models; 
Section \ref{sec: priors} introduces the new priors; 
Section \ref{sec: post comp} provides an overview of the posterior computation;
Section \ref{sec: asymptotics} presents asymptotic results; 
Section \ref{sec: adni analysis} discusses the results of the ADNI-3 analysis; 
Section \ref{sec: sim study} illustrates results of simulation experiments;
Section \ref{sec: discussion} concludes with a discussion. 
%
Substantive details, 
including additional information about ADNI-3, 
hyper-parameter selection, 
posterior computation, 
and auxiliary results and proofs of theoretical results, 
are presented in the Supplementary Materials (SM from here on).

\vspace*{-3ex}
\section{Tensor Factorization Basics} \label{sec: tensor factorization}
\vspace*{-1ex}

In this section, we provide a brief review of the different main types of tensor factorizations \citep{hitchcock1927expression,tucker:1966,de_lathauwer_etal:2000,kolda2009tensor}. 


A $d_{1} \times \dots \times d_{p}$ dimensional tensor $\btheta = \{\theta_{h_{1},\dots,h_{p}}: h_{j}=1,\dots,d_{j}, j=1,\dots,p\}$ 
admits a canonical polyadic (CP) or parallel factor (PARAFAC) decomposition 
with rank $r$ (Figure \ref{fig: CP}) 
if its elements can be written (not necessarily uniquely) as 
\vspace*{-7ex}\\
\be
\textstyle\theta_{h_{1},\dots,h_{p}} = \sum_{z=1}^{r} \eta_{z}\prod_{j=1}^{p} a_{z}^{(j)}(h_{j})~~~\text{for each}~(h_{1},\dots,h_{p}), \label{eq: parafac}
\ee
\vspace*{-7ex}\\
where $\ba_{z}^{(j)} = [a_{z}^{(j)}(1),\dots,a_{z}^{(j)}(d_{j})]\trans, z=1,\dots, r, j=1,\dots,p$ are $d_{j}$ dimensional vectors. 
The effective model size here reduces from $\prod_{j=1}^{p}d_{j}$ to $r\sum_{j=1}^{p}d_{j}$ after the factorization. 
The factorization can also be compactly represented as 
$\btheta = \sum_{z=1}^{r} \blambda \circ \ba_{z}^{(1)} \circ \dots \circ \ba_{z}^{(p)}$, 
where the symbol $\circ$ represents the vector outer product.

\begin{figure}[ht!]
	\centering
	\hspace*{-0.25cm}\includegraphics[width=0.60\linewidth, trim=1cm 1.25cm 1cm 1cm]{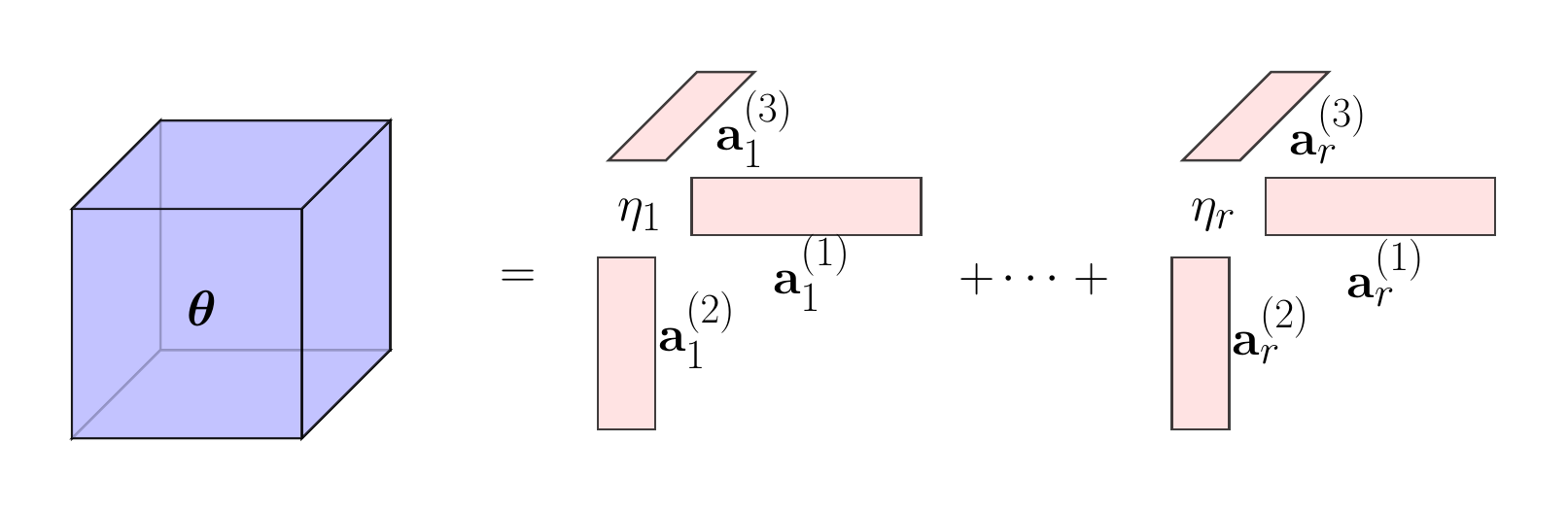}
	\caption{\baselineskip=10pt 
	Pictorial representation of CP decomposition of a three-dimensional tensor. 
	}
	\label{fig: CP}
\end{figure}

Alternatively, the tensor $\btheta$ admits a Tucker decomposition 
with multi-linear rank $(r_{1}, \dots,r_{p})$ (Figure \ref{fig: HOSVD}) 
if its elements can be represented (not necessarily uniquely) as 
\vspace*{-7ex}\\
\be
\textstyle\theta_{h_{1},\dots,h_{p}} = \sum_{z_{1}=1}^{r_{1}} \cdots\sum_{z_{p}=1}^{r_{p}} \eta_{z_{1},\dots,z_{p}} \prod_{j=1}^{p} a_{z_{j}}^{(j)}(h_{j})~~~\text{for each}~(h_{1},\dots,h_{p}), \label{eq: Tucker}
\ee
\vspace*{-7ex}\\
where $\etam = \{\eta_{z_{1},\dots,z_{p}}: z_{j}=1,\dots,r_{j}, j=1,\dots,p\}$ is a $r_{1} \times \dots \times r_{p}$ dimensional `core' tensor 
with $1 \leq r_{j} \leq d_{j}$ for each $j$, 
and $\bA^{(j)} = [\ba_{1}^{(j)},\dots,\ba_{r_{j}}^{(j)}]$, with $\ba_{z_{j}}^{(j)} = [a_{z_{j}}^{(j)}(1),\dots,a_{z_{j}}^{(j)}(d_{j})]\trans$, 
are $d_{j} \times r_{j}$ dimensional `mode' or `factor' matrices with full column rank $r_{j}$. 
The effective size of the model after the factorization is now $\prod_{j=1}^{p} r_{j} + \sum_{j=1}^{p}r_{j}d_{j} \approx \prod_{j=1}^{p} r_{j}$. 
A significant reduction in dimensions is therefore achieved by the decomposition when $\prod_{j=1}^{p} r_{j} \ll \prod_{j=1}^{p} d_{j}$, 
i.e., when the core $\etam$ is much smaller in size than the original tensor $\btheta$. 
The factorization can also be compactly represented as $\btheta = \etam \times_{1} \bA^{(1)} \dots \times_{p}\bA^{(p)}$, 
where $\etam \times_{j} \bA^{(j)}$ represents the $j$-mode tensor product such that $\etam \times_{j} \bA^{(j)}$ is an $r_{1} \times r_{j-1} \times d_{j} \times r_{j+1} \times \dots \times r_{p}$ dimensional tensor with elements $(\etam \times_{j} \bA^{(j)})_{z_{1},\dots,z_{j-1},\ell,z_{j+1},\dots,z_{p}} = \sum_{z_{j}=1}^{r_{j}} \eta_{z_{1},\dots,z_{p}} a_{z_{j}}^{(j)}(\ell)$.

\vskip -5pt
\begin{figure}[ht!]
	\centering
	\hspace*{-0.25cm}\includegraphics[width=0.50\linewidth, trim=1cm 1.25cm 1cm 1cm]{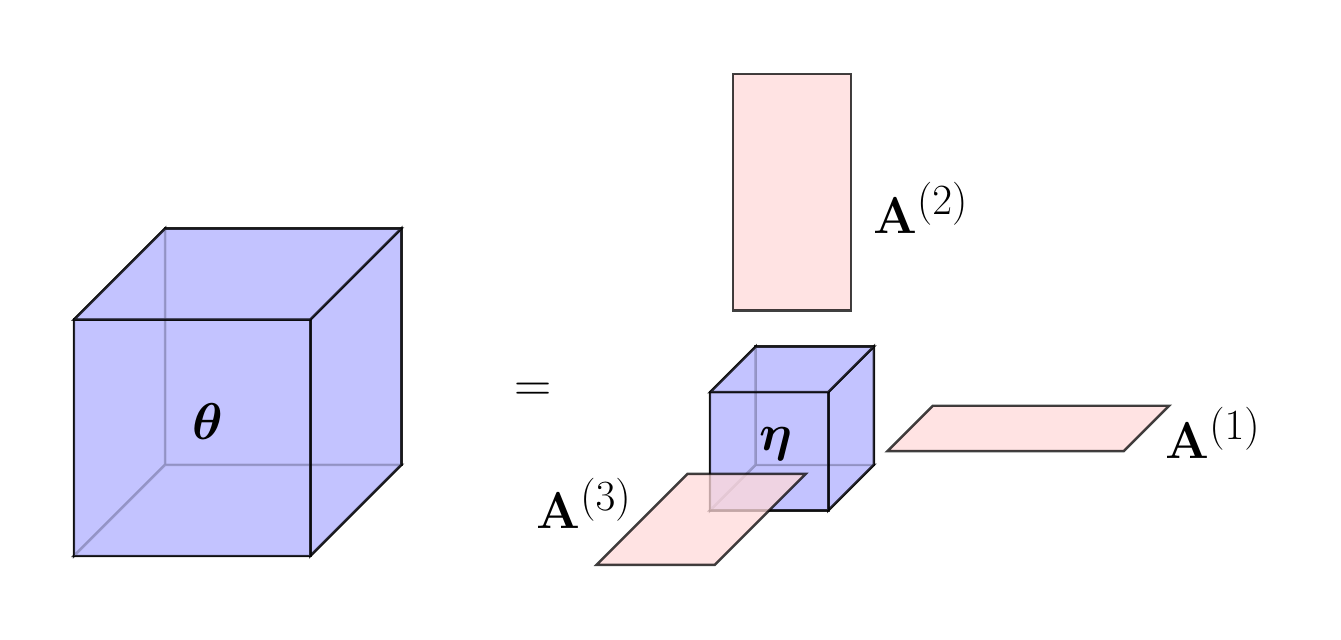}
	\caption{\baselineskip=10pt 
	Pictorial representation of HOSVD of a three-dimensional tensor. 
	}
	\label{fig: HOSVD}
\end{figure}
\vskip -10pt

The CP representation \eqref{eq: parafac} is obtained as a special case of the Tucker decomposition with 
a diagonal core $\eta_{z_{1},\dots,z_{p}}  = \eta_{z} 1\{z_{1} = \dots = z_{p}=z\}$ 
with $r_{1}=\dots=r_{p}=r$, and 
$z_{j}=z$ and $a_{z_{j}}^{(j)}(h_{j}) = a_{z}^{(j)}(h_{j})$ for all $j, h_{j}, z$. 
Conversely, the Tucker model in \eqref{eq: Tucker} has an equivalent CP presentation as 
$\theta_{h_{1},\dots,h_{p}} = \sum_{z=1}^{r} \vec(\etam)_{z(z_{1},\dots,z_{p})} \prod_{j=1}^{p}a_{z(z_{1},\dots,z_{p})}^{(j)}(h_{j})$, 
where $\vec(\etam)$ is the $r \times 1$ vectorized version of $\etam$ with $r=\prod_{j=1}^{p}r_{j}$, and 
$z(z_{1},\dots,z_{p})$ indexes the location of $\eta_{z_{1},\dots,z_{p}}$ in $\vec(\etam)$, 
and $a_{z(z_{1},\dots,z_{p})}^{(j)}(h_{j}) = a_{z_{j}}^{(j)}(h_{j})$ for all $j, h_{j}, z_{j}$ with many repeated entries. 

The compact higher order singular value decomposition (compact HOSVD) of a tensor 
is a special case of the Tucker decomposition, 
where the mode matrices $\bA^{(j)}$'s are all semi-orthogonal, 
i.e., $\bA^{(j)\hbox{\scriptsize T}} \bA^{(j)}$ is an $r_{j} \times r_{j}$ diagonal matrix for all $j$,  
resulting in a more interpretable form. 
%
The following simple but consequential result 
shows that the compact HOSVD is just as flexible as Tucker. 
The result is crucial in developing efficient computational strategies for our proposed approach, as described in Section \ref{sec: priors} below. 
The proof is straightforward and is presented in Section \ref{sec: sm proofs} in the Supplementary Materials. 

\begin{Lem} \label{lem: Tucker-HOSVD}
Every Tucker decomposition admits an equivalent compact HOSVD.
\end{Lem}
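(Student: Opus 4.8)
The plan is to reduce an arbitrary Tucker representation to one with semi-orthogonal mode matrices via a mode-wise QR (Gram--Schmidt) step, absorbing the triangular factors into the core. Starting from any Tucker decomposition $\btheta = \etam \times_{1} \bA^{(1)} \times_{2} \cdots \times_{p} \bA^{(p)}$ with core $\etam$ of size $r_{1} \times \cdots \times r_{p}$ and mode matrices $\bA^{(j)} \in \rR^{d_{j} \times r_{j}}$ of full column rank $r_{j}$, I would for each $j$ perform a thin QR factorization $\bA^{(j)} = \bQ^{(j)} \bR^{(j)}$, where $\bQ^{(j)} \in \rR^{d_{j} \times r_{j}}$ has orthonormal columns and $\bR^{(j)} \in \rR^{r_{j} \times r_{j}}$ is upper triangular; since $\bA^{(j)}$ has full column rank, $\bR^{(j)}$ is invertible.

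Next I would invoke the elementary algebra of the $j$-mode product recalled just above: mode products in distinct modes commute, and for a fixed mode $(\bX \times_{j} \bA) \times_{j} \bB = \bX \times_{j} (\bB\bA)$. Substituting $\bA^{(j)} = \bQ^{(j)}\bR^{(j)}$ into the Tucker representation and regrouping therefore yields
\[
\btheta = \big( \etam \times_{1} \bR^{(1)} \times_{2} \cdots \times_{p} \bR^{(p)} \big) \times_{1} \bQ^{(1)} \times_{2} \cdots \times_{p} \bQ^{(p)} =: \wt{\etam} \times_{1} \bQ^{(1)} \times_{2} \cdots \times_{p} \bQ^{(p)},
\]
where $\wt{\etam}$ is again $r_{1} \times \cdots \times r_{p}$ because each $\bR^{(j)}$ is square. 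One checks this identity either directly from the index-level definition of $\times_{j}$ given in the excerpt (tracking $h_{1},\dots,h_{p}$) or by citing the standard multilinearity of mode products. Since $\bQ^{(j)\trans}\bQ^{(j)} = \bI_{r_{j}}$ is diagonal and $\bQ^{(j)}$ has full column rank $r_{j}$, the expression $\wt{\etam} \times_{1} \bQ^{(1)} \times_{2} \cdots \times_{p} \bQ^{(p)}$ is a compact HOSVD in the paper's sense, and it represents the same tensor $\btheta$, which is the claim.

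There is no serious obstacle: the only points requiring care are (i) that full column rank of each $\bA^{(j)}$ is precisely what guarantees $\bR^{(j)}$ is invertible, so the multilinear rank $(r_{1},\dots,r_{p})$ is genuinely preserved and $\wt{\etam}$ does not degenerate, and (ii) keeping the mode-product bookkeeping consistent when regrouping factors across different modes. If instead one wanted the sharper textbook HOSVD (all-orthogonal core slices with ordered ``singular values''), I would take each $\bQ^{(j)}$ to be the leading left singular vectors of the mode-$j$ unfolding of $\btheta$ and define $\wt{\etam}$ accordingly; but the QR construction already suffices for the statement as defined here.
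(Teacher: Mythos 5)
Your proposal is correct and follows essentially the same route as the paper's proof: a thin QR factorization of each mode matrix, with the upper-triangular factors absorbed into a new core tensor of the same dimensions, yielding semi-orthogonal mode matrices. The extra remarks on full column rank and the mode-product bookkeeping are sound but not needed beyond what the paper already records.
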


For a more comprehensive review of these methods and their adaptations and applications in statistics, especially in a Bayesian setting, see \cite{guhaniyogi2020bayesian,shi2023tensor}. 

\vspace*{-3ex}
\section{Longitudinal Multi-Group Tensor Factor Model} \label{sec: models}
\vspace*{-1ex}
For $h_{g}\in \{1,\dots,d_{g}\}$, $i \in \{i_{h_{g},1},\dots,i_{h_{g},N_{h_{g}}}\}$, $t \in \{t_{i,1},\dots,t_{i,n_{i}}\}$ and $h_{s}\in \{1,\dots,d_{s}\},s=1,2,3$, 
let $Y^{(i)}(h_{g},h_{1},h_{2},h_{3},t)$ denote the image value at the $(h_{1},h_{2},h_{3})\th$ voxel for the $i\th$ individual from the $h_{g}\th$ group at time point $t$. 
While $h_{g}=h_{g}(i)$, since every subject belongs to a unique group, 
throughout we keep the $i$ implicit in $h_{g}$ for simplicity. 
Let $N=\sum_{h_{g}=1}^{d_{g}}N_{h_{g}}$ be the total number of subjects. 
Also, let $\S = \{1,2,3\}, \S_{g}=\{g,1,2,3\}$ and $\S_{g,t}=\{g,1,2,3,t\}$.

To flexibly model the image trajectories, 
while accommodating spatial dependence between adjacent voxel locations, systematic differences between groups, and individual heterogeneity within groups, 
we let 
\vspace*{-6ex}\\
\be \label{eq: mainmodel}
\begin{aligned}
& Y^{(i)}(h_{g},h_{1},h_{2},h_{3},t)=\alpha^{(i)}(h_{g},h_{1},h_{2},h_{3})+\beta^{(i)}(h_{g},h_{1},h_{2},h_{3},t)+\epsilon^{(i)}(h_{g},h_{1},h_{2},h_{3},t),\\
& \epsilon^{(i)}(h_{g},h_{1},h_{2},h_{3},t) \simiid \Normal(0,\sigma_{\epsilon}^{2}),
\end{aligned}
\ee
\vspace*{-5ex}\\
where $\alpha^{(i)}(h_{g},h_{1},h_{2},h_{3})$ are the group-subject-and-voxel-specific baseline, 
$\beta^{(i)}(h_{g},h_{1},h_{2},h_{3},t)$ are the group-subject-and-voxel-specific time-varying deviations from the baseline, 
and $\epsilon^{(i)}(h_{g},h_{1},h_{2},h_{3},t)$ are random errors.
With identifiability constraints discussed later, \ul{the proposed separation of the baseline and time-varying deviations allows capturing their distinct spatial sparsity patterns}.

\paragraph{Baseline Tensors $\balpha$.} 
To flexibly characterize the baseline tensors, we propose the following reduced rank Tucker decomposed multiplicative mixed model 
\vspace*{-4ex}\\
\be \label{eq: baseline effects}
\begin{aligned}
& \textstyle\alpha^{(i)}(h_{g},h_{1},h_{2},h_{3}) = 
\sum_{m \in \S_{g}}\sum_{z_{m}=1}^{r_{\alpha,m}} \eta_{\alpha,z_{g},z_{1},z_{2},z_{3}}^{(i)} \prod_{s \in \S_{g}} a^{(s)}_{\alpha,z_{s}}(h_{s}), \\ 
& \eta_{\alpha,z_{g},z_{1},z_{2},z_{3}}^{(i)} \simind \Normal(c_{\alpha,z_{g},z_{1},z_{2},z_{3}}, \tau^{2}_{\alpha}\sigma_{\alpha,z_{g},z_{1},z_{2},z_{3}}^{2}),
\end{aligned}
\ee
\vspace*{-4ex}\\
where $\etam_{\alpha}^{(i)} = ((\eta_{\alpha,z_{g},z_{1},z_{2},z_{3}}^{(i)}))$ is an $r_{\alpha,g} \times r_{\alpha,1} \times r_{\alpha,2}\times r_{\alpha,3}$ subject-specific random effects core tensor; 
$\bA_{\alpha}^{(g)} = ((a^{(g)}_{\alpha,z_{g}}(h_{g})))$ is an $ d_{g} \times r_{\alpha,g}$ mode matrix associated with the group labels, 
and $\bA_{\alpha}^{(s)} = ((a^{(s)}_{\alpha,z_{s}}(h_{s}))), s\in \S$ are $d_{s} \times r_{\alpha,s}$ mode matrices associated with the voxel directions. 
Integrating out the random effects, 
we arrive at a Tucker decomposed subject-averaged mean tensor, denoted $\alpha(h_{g},h_{1},h_{2},h_{3})$, as   
\vspace*{-7ex}\\
\be \label{eq: mean baseline effects}
\textstyle \alpha(h_{g},h_{1},h_{2},h_{3}) = \eE_{\eta_{\alpha}}\{\alpha^{(i)}(h_{g},h_{1},h_{2},h_{3})\} = \sum_{m \in \S_{g}}\sum_{z_{m}=1}^{r_{\alpha,m}} c_{\alpha,z_{g},z_{1},z_{2},z_{3}} \prod_{s \in \S_{g}} a^{(s)}_{\alpha,z_{s}}(h_{s}), 
\ee
\vspace*{-7ex}\\
where $\bC_{\alpha}=((c_{\alpha,z_{g},z_{1},z_{2},z_{3}}))$ is an $r_{\alpha,g} \times r_{\alpha,1} \times r_{\alpha,2} \times r_{\alpha,3}$ core tensor. 

\paragraph{Time-Varying Tensors $\bbeta$.} 
Additional challenges arise here from the time-varying nature of the deviation tensors, 
the fact that data are observed on individual specific irregular grids, 
and that they need to satisfy some constraints to be separately identifiable from the baselines.
To address these issues, we first represent these deviations as flexible mixtures of basis functions as  
\vspace*{-7ex}\\
\be
\textstyle \beta^{(i)}(h_{g},h_{1},h_{2},h_{3},t) = \sum_{h_{t}=1}^{d_{t}} \wt\beta^{(i)}(h_{g},h_{1},h_{2},h_{3},h_{t}) b_{q,h_{t}}(t), 
\ee
\vspace*{-6ex}\\
where $b_{q,j}(t)$ are 
B-spline bases of degree $q$
defined at the end of this subsection below. 
%

We then propose a Tucker decomposed mixed effects model for the basis coefficients as 
\vspace*{-6ex}\\
\be 
\begin{aligned}\label{eq: regression coeffs}
& \textstyle \wt\beta^{(i)}(h_{g},h_{1},h_{2},h_{3},h_{t}) = \sum_{m \in \S_{g,t}}\sum_{z_{m}=1}^{r_{\beta,m}} \eta_{\beta,z_{g},z_{1},z_{2},z_{3},z_{t}}^{(i)} \prod_{s \in \S_{g,t}} a_{\beta,z_{s}}^{(s)}(h_{s}),\\
%
&\eta_{\beta,z_{g},z_{1},z_{2},z_{3},z_{t}}^{(i)} \simind \Normal(c_{\beta,z_{g},z_{1},z_{2},z_{3},z_{t}},\tau_{\beta}^{2} \sigma_{\beta,z_{g},z_{1},z_{2},z_{3},z_{t}}^{2}),
\end{aligned}
\ee
\vspace*{-4ex}\\
where $\etam_{\beta}^{(i)} = ((\eta_{\beta,z_{g},z_{1},z_{2},z_{3},z_{t}}^{(i)}))$ is an $r_{\beta,g} \times r_{\beta,1} \times r_{\beta,2}\times r_{\beta,3} \times r_{\beta,t}$ subject-specific random effects core tensor; 
$\bA_{\beta}^{(g)} = ((a_{\beta,z_{g}}^{(g)}(h_{g})))$ is a $d_{g} \times r_{\beta,g} $ mode matrix associated with the group labels, 
$\bA_{\beta}^{(s)} = ((a_{\beta,z_{s}}^{(s)}(h_{s})))$, $s\in \S$, are $d_{s} \times r_{\beta,s} $ mode matrices associated with the voxel directions, 
and $\bA_{\beta}^{(t)} = ((a_{\beta,z_{t}}^{(t)}(h_{t})))$ is a $d_{t} \times r_{\beta,t} $ mode matrix associated with the B-spline bases, 
all mode matrices again being semi-orthogonal. 
Integrating out the random effects 
as before, we arrive again at a compact representation for the subject-averaged mean basis coefficient tensors, denoted $\wt\beta(h_{g},h_{1},h_{2},h_{3},h_{t})$, as   
\vspace*{-6ex}\\
\be
\begin{aligned} \label{eq: mean regression coeffs}
& \textstyle \wt\beta(h_{g},h_{1},h_{2},h_{3},h_{t}) = \eE_{\eta_{\beta}}\{\wt\beta^{(i)}(h_{g},h_{1},h_{2},h_{3},h_{t})\} \\
& ~~~~~ \textstyle = \sum_{m \in \S_{g,t}}\sum_{z_{m}=1}^{r_{\beta,m}} c_{\beta,z_{g},z_{1},z_{2},z_{3},z_{t}} \prod_{s \in \S_{g,t}} a_{\beta,z_{s}}^{(s)}(h_{s}),  
\end{aligned} \label{eq: mean regression coeffs}
\ee
\vspace*{-4ex}\\
where $\bC_{\beta}=((c_{\beta,z_{g},z_{1},z_{2},z_{3},z_{t}}))$ is an $r_{\beta,g} \times r_{\beta,1} \times r_{\beta,2}\times r_{\beta,3} \times r_{\beta,t}$ mean core tensor, leading to
\vspace*{-7ex}\\
\be
\textstyle \beta(h_{g},h_{1},h_{2},h_{3},t) = \sum_{h_{t}=1}^{d_{t}} \wt\beta(h_{g},h_{1},h_{2},h_{3},h_{t}) b_{q,h_{t}}(t). 
\ee

\vskip -15pt 
\begin{figure}[!h] 
\centering\includegraphics[trim=1.1cm 1.1cm 1.0cm 0cm, clip, scale=0.6]{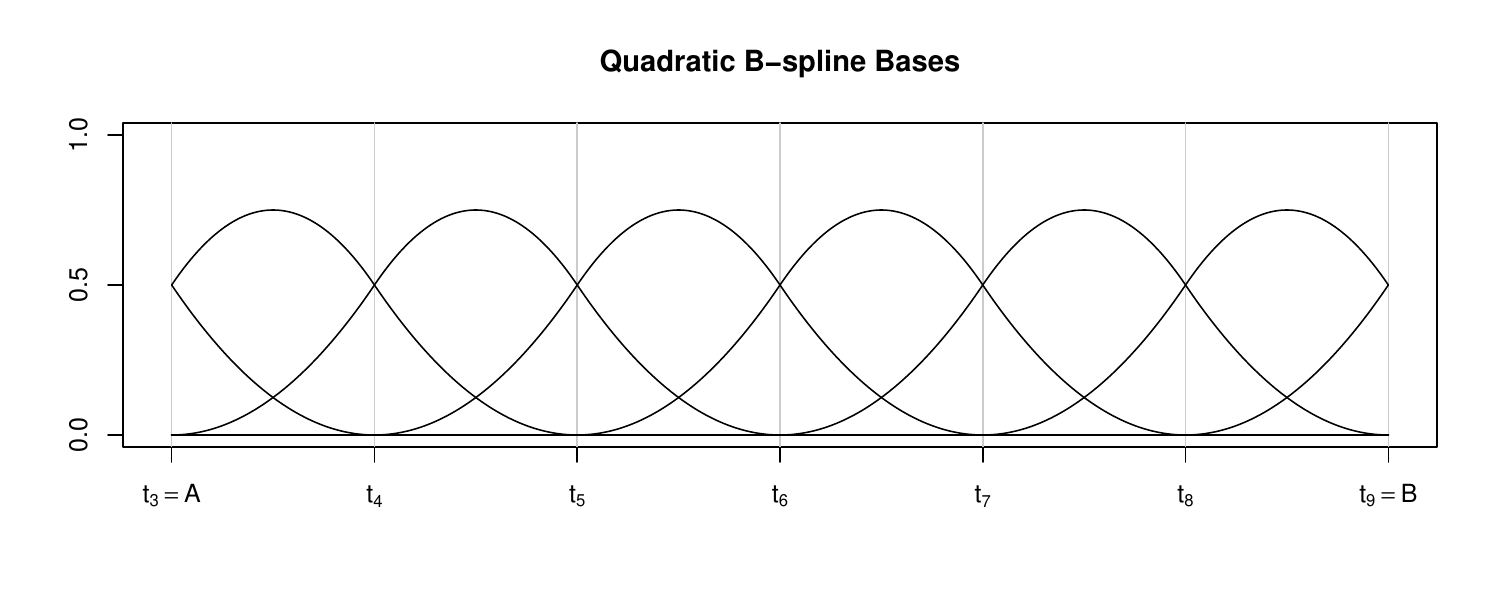}
\vspace*{-5pt}
\caption{Quadratic B-splines defined with $11$ equidistant knot-points partitioning the interval $[A,B]$ into $6$ equal sub-intervals.} \label{fig: Quadratic B-splines}
\vspace*{-5pt}
\end{figure}

\paragraph{Separation of $\balpha$ and $\bbeta$.} 
Next, to separately identify the $\alpha^{(i)}(h_{g},h_{1},h_{2},h_{3})$'s as the baseline effects, 
we set 
$\beta^{(i)}(h_{g},h_{1},h_{2},h_{3},0)=0 ~~\text{for all}~~ (i,h_{g},h_{1},h_{2},h_{3})$. 
We exploit local support and boundary properties of B-spline bases to achieve this. 
First, we define these bases by partitioning $[A,B]$ into $k_{t}$ sub-intervals using the knot points
$t_{1}=\dots=t_{q+1} = A < t_{q+2} < t_{q+3} < \dots < t_{q+k_{t}} < t_{q+k_{t}+1} = \dots = t_{2q+k_{t}+1} = B$.
Using these knot points, $(q+k_{t})=d_{t}$ bases of degree $q$, 
namely $\{b_{q,1},b_{q,2},\dots,b_{q,d_{t}}\}$, can be defined through the recursion relation given on page 90 of de Boor (2000). 
%
%
Note that, for any $t$,
$b_{q,h_{t}}$ is positive only inside the interval $[t_{h_{t}},t_{h_{t}+q+1}]$.

For our purposes $[A,B]=[0,1]$ and we use quadratic B-splines (i.e., $q=2$) with equidistant knot points (Figure \ref{fig: Quadratic B-splines}). 
This implies that only the first two bases, namely $b_{2,1}(t)$ and $b_{2,2}(t)$, are active at $t=0$ with $b_{2,1}(0) = b_{2,2}(0) = 0.5$. 
Hence, $\beta^{(i)}(h_{g},h_{1},h_{2},h_{3},0) = 0.5 \{\wt\beta^{(i)}(h_{g},h_{1},h_{2},h_{3},1)+\wt\beta^{(i)}(h_{g},h_{1},h_{2},h_{3},2)\}$. 
The condition $\beta^{(i)}(h_{g},h_{1},h_{2},h_{3},0) = 0$ can therefore be imposed by setting $a_{\beta,z_{t}}^{(t)}(1)+a_{\beta,z_{t}}^{(t)}(2)=0$ 
which is equivalent to reparametrizing $\bA_{\beta}^{(t)}\bM = \bA_{\beta,1}^{(t)}$, 
where $\bA_{\beta,1}^{(t)}$ is an $(d_{t}-1) \times r_{\beta,t}$ matrix and $\bM$ is $d_{t} \times (d_{t}-1)$ dimensional such that $\bM_{-1,-1}=\bI_{d_t-1}$ and $\bM_{1,1}=-1$.

\vskip 10pt
Here we restrict the mode matrices $\bA_{\alpha}^{(s)}, s \in \S_{g}$ and $\bA_{\beta}^{(s)}, s \in \S_{g,t}$ to be semi-orthogonal, 
ensuring the decompositions in \eqref{eq: baseline effects}, \eqref{eq: mean baseline effects}, \eqref{eq: regression coeffs} and \eqref{eq: mean regression coeffs} to all be compact HOSVD (Lemma \ref{lem: Tucker-HOSVD}). 
Additionally, in \eqref{eq: baseline effects} and \eqref{eq: regression coeffs}, we allowed individual heterogeneity through random-effects core tensors. 
Below, we discuss these modeling choices' motivations, implications, and challenges. 
The semi-orthogonality of the mode matrices with piece-wise smooth functional characteristics is achieved via a novel prior discussed separately in Section \ref{sec: priors} below.

\vspace*{-2ex}
\subsection{Semi-orthogonal Mode Matrices} \label{sec: semi-orthogonality}
\vspace*{-1ex}
Semi-orthogonal mode matrices endow our model with two key distributional properties, which we discuss below. 
Let $\bY^{(i)}(h_{g},\cdot,\cdot,\cdot,t)$ denote the $d_{1} \times d_{2} \times d_{3}$ dimensional image for the $i\th$ individual from the $h_{g}\th$ group at time $t$, 
and $\bB^{(i)}=\{b_{q,h_{t}}(t)\}_{t_{i,1}\leq t\leq t_{i,n_{i}}, 1\leq h_{t}\leq d_{t}}$ be the $n_{i} \times d_{t}$ B-spline basis matrix for subject $i$. 
Also, let $(\bA_{\alpha}^{(s)})\trans\bA_{\alpha}^{(s)} = \bDelta_{\alpha}^{(s)} = \diag[\delta_{\alpha,1}^{(s)},\dots,\delta_{\alpha,r_{\alpha,j}}^{(s)}]$ for $s \in \S$ and $\wt{\etam}^{(i)}_{\alpha} = ((\wt{\eta}^{(i)}_{\alpha}(h_{g},z_{1},z_{2},z_{3}))) ={\etam}^{(i)}_{\alpha}\times_{g} \bA_{\alpha}^{(g)}$; 
and similarly, $(\bA_{\beta}^{(s)})\trans\bA_{\beta}^{(s)} = \bDelta_{\beta}^{(s)} = \diag[\delta_{\beta,1}^{(s)},\dots,\delta_{\beta,r_{\beta,j}}^{(s)}]$ for $s \in \S$ and $\wt{\etam}^{(i)}_{\beta} = ((\wt{\eta}^{(i)}_{\beta}(h_{g},z_{1},z_{2},z_{3},t_{ij}))) = {\etam}^{(i)}_{\beta}\times_{g} \bA_{\beta}^{(g)}\times_t (\bB^{(i)}\bA^{(t)}_{\beta})$.
Finally, define the $r_{\alpha,1} \times r_{\alpha,2} \times r_{\alpha,3}$ and $r_{\beta,1} \times r_{\beta,2} \times r_{\beta,3}$ dimensional tensors 
\vspace*{-7ex}\\
\bse
& \bR^{(i)}(h_{g},\cdot,\cdot,\cdot,0) = \bY^{(i)}(h_{g},\cdot,\cdot,\cdot,0)\times_{1} (\bA_{\alpha}^{(1)})\trans\times_{2} (\bA_{\alpha}^{(2)})\trans\times_{3} (\bA_{\alpha}^{(3)})\trans, ~~~\text{and}\\ 
& \bR^{(i)}(h_{g},\cdot,\cdot,\cdot,t_{ij}) = \{\bY^{(i)}(h_{g},\cdot,\cdot,\cdot,t_{ij})-\bR^{(i)}(h_{g},\cdot,\cdot,\cdot,0)\}\times_{1} (\bA_{\beta}^{(1)})\trans\times_{2} (\bA_{\beta}^{(2)})\trans\times_{3} (\bA_{\beta}^{(3)})\trans.
\ese
\vspace*{-7ex}

\begin{Lem} \label{lem: distprop alpha}
$R^{(i)}(h_{g},z_{1},z_{2},z_{3},0) \overset{ind}{\sim} \Normal\left\{\wt{\eta}_{\alpha}^{(i)}(h_{g},z_{1},z_{2},z_{3}) \prod_{s\in \S}\delta_{\alpha,z_{s}}^{(s)}, \sigma^{2}_{\epsilon}\prod_{s\in \S}\delta_{\alpha,z_{s}}^{(s)}\right\}$ across $(z_{1},z_{2},z_{3})$. 
\end{Lem}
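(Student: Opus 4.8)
The plan is to argue conditionally on the subject-specific random-effects core tensor $\etam^{(i)}_{\alpha}$ (equivalently on $\wt\etam^{(i)}_{\alpha}=\etam^{(i)}_{\alpha}\times_{g}\bA^{(g)}_{\alpha}$) and on all the mode matrices, so that the only randomness remaining in $\bR^{(i)}(h_{g},\cdot,\cdot,\cdot,0)$ is that of the Gaussian errors; the one point worth flagging up front is that the asserted mean is precisely such a conditional-on-$\wt\eta^{(i)}_{\alpha}$ quantity. First I would invoke the identifiability constraint $\beta^{(i)}(h_{g},h_{1},h_{2},h_{3},0)=0$ established in the ``Separation of $\balpha$ and $\bbeta$'' paragraph to collapse the observation model \eqref{eq: mainmodel} at $t=0$ to $\bY^{(i)}(h_{g},\cdot,\cdot,\cdot,0)=\balpha^{(i)}(h_{g},\cdot,\cdot,\cdot)+\bepsilon^{(i)}(h_{g},\cdot,\cdot,\cdot,0)$, and rewrite the baseline component of \eqref{eq: baseline effects} in $j$-mode-product form as $\balpha^{(i)}(h_{g},\cdot,\cdot,\cdot)=\wt\etam^{(i)}_{\alpha}(h_{g},\cdot,\cdot,\cdot)\times_{1}\bA^{(1)}_{\alpha}\times_{2}\bA^{(2)}_{\alpha}\times_{3}\bA^{(3)}_{\alpha}$, the mode-$g$ contraction having already been absorbed into $\wt\etam^{(i)}_{\alpha}$.

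Next, substituting this into the definition of $\bR^{(i)}(h_{g},\cdot,\cdot,\cdot,0)$ and using that each $j$-mode product is linear in its tensor argument, I would split $\bR^{(i)}(h_{g},\cdot,\cdot,\cdot,0)$ into a ``signal'' tensor and a ``noise'' tensor. For the signal tensor, associativity of the $j$-mode products gives $\wt\etam^{(i)}_{\alpha}(h_{g},\cdot,\cdot,\cdot)\times_{1}\{(\bA^{(1)}_{\alpha})\trans\bA^{(1)}_{\alpha}\}\times_{2}\{(\bA^{(2)}_{\alpha})\trans\bA^{(2)}_{\alpha}\}\times_{3}\{(\bA^{(3)}_{\alpha})\trans\bA^{(3)}_{\alpha}\}$; invoking the semi-orthogonality imposed in Section~\ref{sec: semi-orthogonality}, namely $(\bA^{(s)}_{\alpha})\trans\bA^{(s)}_{\alpha}=\bDelta^{(s)}_{\alpha}$ diagonal (the compact-HOSVD structure justified by Lemma~\ref{lem: Tucker-HOSVD}), each mode product becomes a diagonal rescaling, so the $(z_{1},z_{2},z_{3})$ entry collapses to $\wt\eta^{(i)}_{\alpha}(h_{g},z_{1},z_{2},z_{3})\prod_{s\in\S}\delta^{(s)}_{\alpha,z_{s}}$, which is the claimed mean.

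For the noise tensor, $\bepsilon^{(i)}(h_{g},\cdot,\cdot,\cdot,0)\times_{1}(\bA^{(1)}_{\alpha})\trans\times_{2}(\bA^{(2)}_{\alpha})\trans\times_{3}(\bA^{(3)}_{\alpha})\trans$ is a fixed linear image of the i.i.d.\ $\Normal(0,\sigma^{2}_{\epsilon})$ errors, hence jointly mean-zero Gaussian, so it suffices to compute its covariance. Using $\cov\{\epsilon^{(i)}(h_{g},h_{1},h_{2},h_{3},0),\epsilon^{(i)}(h_{g},h_{1}',h_{2}',h_{3}',0)\}=\sigma^{2}_{\epsilon}\prod_{s\in\S}1\{h_{s}=h_{s}'\}$ and factoring the resulting triple sum over $(h_{1},h_{2},h_{3})$, the covariance between the $(z_{1},z_{2},z_{3})$ and $(z_{1}',z_{2}',z_{3}')$ entries is proportional to $\prod_{s\in\S}(\bDelta^{(s)}_{\alpha})_{z_{s},z_{s}'}$; diagonality of each $\bDelta^{(s)}_{\alpha}$ forces this to vanish unless $(z_{1},z_{2},z_{3})=(z_{1}',z_{2}',z_{3}')$, so the entries are pairwise uncorrelated and, being jointly Gaussian, mutually independent across $(z_{1},z_{2},z_{3})$, with variance read off from the diagonal. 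Adding the constant signal mean to this independent Gaussian noise yields the stated conditional law.

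I do not expect a genuine analytic obstacle — consistent with the paper's remark that the proof is straightforward. The main thing to get right is the index bookkeeping of the $j$-mode products: making sure each transposed mode matrix is paired with the correct mode and that the diagonality identity $(\bA^{(s)}_{\alpha})\trans\bA^{(s)}_{\alpha}=\bDelta^{(s)}_{\alpha}$ is applied in that same mode, after which the argument reduces to the elementary facts that a linear image of i.i.d.\ Gaussians is Gaussian and that uncorrelated jointly Gaussian coordinates are independent.
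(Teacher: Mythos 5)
Your overall strategy is the right one, and it is evidently the argument the authors intend (the paper states this lemma without a written proof): condition on $\wt\etam^{(i)}_{\alpha}$ and the mode matrices, use the separation constraint $\beta^{(i)}(h_{g},h_{1},h_{2},h_{3},0)=0$ to reduce the $t=0$ observation to $\balpha^{(i)}+\bepsilon^{(i)}$, push the transposed mode matrices through by linearity, and let semi-orthogonality turn $(\bA^{(s)}_{\alpha})\trans\bA^{(s)}_{\alpha}=\bDelta^{(s)}_{\alpha}$ into a diagonal rescaling for the signal and into vanishing cross-covariances for the noise. The independence claim and the mean $\wt{\eta}^{(i)}_{\alpha}(h_{g},z_{1},z_{2},z_{3})\prod_{s\in\S}\delta^{(s)}_{\alpha,z_{s}}$ both come out correctly from this.

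There is, however, one concrete problem in your last step. Your own covariance formula for the transformed noise, $\sigma^{2}_{\epsilon}\prod_{s\in\S}\bigl[(\bA^{(s)}_{\alpha})\trans\bA^{(s)}_{\alpha}\bigr]_{z_{s},z_{s}'}$, evaluated on the diagonal gives variance $\sigma^{2}_{\epsilon}\prod_{s\in\S}\delta^{(s)}_{\alpha,z_{s}}$, not the $\sigma^{2}_{\epsilon}$ appearing in the lemma; these agree only if the columns are orthonormal, which the paper explicitly disclaims ("varying rather than unit lengths"). So the sentence "with variance read off from the diagonal\ldots yields the stated conditional law" asserts agreement that your computation does not deliver. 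The scaled variance is in fact the one consistent with the Gibbs update for $\eta^{(i)}_{\alpha}$ in the supplement, whose likelihood precision contribution is $\sigma^{-2}\prod_{s}\delta^{(s)}_{\alpha,z_{s}}=\bigl(\prod_{s}\delta^{(s)}_{\alpha,z_{s}}\bigr)^{2}\big/\bigl(\sigma^{2}\prod_{s}\delta^{(s)}_{\alpha,z_{s}}\bigr)$. You should therefore either record the variance as $\sigma^{2}_{\epsilon}\prod_{s\in\S}\delta^{(s)}_{\alpha,z_{s}}$ (treating the lemma's $\sigma^{2}_{\epsilon}$ as a typo or as shorthand for the orthonormal case) or explicitly flag the discrepancy rather than claim the computation reproduces the statement as written. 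Everything else in the proposal is sound.
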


\begin{Lem} \label{lem: distprop beta}
$R^{(i)}(h_{g},z_{1},z_{2},z_{3},t_{ij}) \overset{ind}{\sim} \Normal\left\{\wt{\eta}_{\beta}^{(i)}(h_{g},z_{1},z_{2},z_{3},t_{ij}) \prod_{s\in \S}\delta_{\beta,z_{s}}^{(s)}, \sigma^{2}_{\epsilon}\prod_{s\in \S}\delta_{\beta,z_{s}}^{(s)}\right\}$ across $(z_{1},z_{2},z_{3})$. 
\end{Lem}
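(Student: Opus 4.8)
The plan is to mirror the argument for Lemma~\ref{lem: distprop alpha} (Lemma~\ref{lem: distprop alpha}), since the time-varying component $\bbeta$ plays here the same structural role that the baseline component $\balpha$ plays there, with the additional twist that the B-spline basis matrix $\bB^{(i)}$ enters through a mode-$t$ product. First I would start from the sampling model \eqref{eq: mainmodel} evaluated at a fixed observed time $t_{ij}\neq 0$ and subtract the $t=0$ tensor: using the identifiability constraint $\beta^{(i)}(h_g,h_1,h_2,h_3,0)=0$, the difference $\bY^{(i)}(h_g,\cdot,\cdot,\cdot,t_{ij}) - \bY^{(i)}(h_g,\cdot,\cdot,\cdot,0)$ equals $\bbeta^{(i)}(h_g,\cdot,\cdot,\cdot,t_{ij})$ plus the difference of the two independent Gaussian error tensors. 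Substituting the basis expansion $\beta^{(i)}(h_g,h_1,h_2,h_3,t_{ij}) = \sum_{h_t} \wt\beta^{(i)}(h_g,h_1,h_2,h_3,h_t) b_{q,h_t}(t_{ij})$ and then the Tucker form \eqref{eq: regression coeffs} for $\wt\beta^{(i)}$, the mean of this difference tensor becomes $\etam_\beta^{(i)} \times_g \bA_\beta^{(g)} \times_1 \bA_\beta^{(1)} \times_2 \bA_\beta^{(2)} \times_3 \bA_\beta^{(3)} \times_t (\bB^{(i)}\bA_\beta^{(t)})$, evaluated at the row of $\bB^{(i)}$ corresponding to $t_{ij}$.

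Next I would apply the linear map $\times_1 (\bA_\beta^{(1)})\trans \times_2 (\bA_\beta^{(2)})\trans \times_3 (\bA_\beta^{(3)})\trans$ that defines $\bR^{(i)}(h_g,\cdot,\cdot,\cdot,t_{ij})$. On the mean, commuting mode products in distinct modes and using semi-orthogonality, $(\bA_\beta^{(s)})\trans \bA_\beta^{(s)} = \bDelta_\beta^{(s)} = \diag[\delta_{\beta,1}^{(s)},\dots]$, collapses each voxel mode to a scaling by the diagonal entries; combined with the definition $\wt\etam_\beta^{(i)} = \etam_\beta^{(i)} \times_g \bA_\beta^{(g)} \times_t (\bB^{(i)}\bA_\beta^{(t)})$, the $(z_1,z_2,z_3)$ entry of the transformed mean is exactly $\wt\eta_\beta^{(i)}(h_g,z_1,z_2,z_3,t_{ij}) \prod_{s\in\S}\delta_{\beta,z_s}^{(s)}$, as claimed. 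For the noise, I would note that $\epsilon^{(i)}(h_g,\cdot,\cdot,\cdot,t_{ij})$ and $\epsilon^{(i)}(h_g,\cdot,\cdot,\cdot,0)$ are independent i.i.d.\ $\Normal(0,\sigma_\epsilon^2)$ tensors, so their difference has i.i.d.\ $\Normal(0,2\sigma_\epsilon^2)$ entries; applying the linear operator $\times_s (\bA_\beta^{(s)})\trans$ across the three voxel modes produces a Gaussian tensor whose covariance across the $(z_1,z_2,z_3)$ index is the Kronecker product $\bigotimes_{s\in\S} (\bA_\beta^{(s)})\trans \bA_\beta^{(s)} = \bigotimes_{s\in\S} \bDelta_\beta^{(s)}$, scaled by $2\sigma_\epsilon^2$ --- which is diagonal, giving independence across $(z_1,z_2,z_3)$ with the $(z_1,z_2,z_3)$ variance equal to $2\sigma_\epsilon^2\prod_{s\in\S}\delta_{\beta,z_s}^{(s)}$.

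This is where I would have to be slightly careful: the variance expression I just derived, $2\sigma_\epsilon^2\prod_s\delta_{\beta,z_s}^{(s)}$, does not literally match the stated $\sigma_\epsilon^2$, so the intended reading must be that $\bR$ is defined (or the statement is to be interpreted) after an implicit rescaling by $\prod_s \delta^{(s)}_{\beta,z_s}$, or equivalently that the mode matrices are taken column-orthonormal at this stage so that $\bDelta_\beta^{(s)} = \bI$; I would reconcile this with the exactly analogous normalization used in Lemma~\ref{lem: distprop alpha} and state the result with whichever convention is consistent there. The main obstacle, then, is not any deep computation --- the algebra is routine multilinear bookkeeping --- but rather getting the bookkeeping of the mode products and Kronecker-structured covariance exactly right (in particular tracking that the two subtracted error tensors are independent, hence the variance factor, and that the $t$-mode product with $\bB^{(i)}$ is simply absorbed into the definition of $\wt\etam_\beta^{(i)}$ and plays no role in the across-$(z_1,z_2,z_3)$ independence). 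The remaining work is to cite Lemma~\ref{lem: Tucker-HOSVD} to justify that assuming semi-orthogonal (indeed column-orthonormal) mode matrices is without loss of generality, and to record the conditional independence across $(z_1,z_2,z_3)$ that makes the elementwise Gaussian statement meaningful.
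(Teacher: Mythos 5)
Your multilinear algebra is exactly what this lemma requires, and it is carried out correctly: commute the voxel-mode products, use semi-orthogonality so each Gram matrix $(\bA_{\beta}^{(s)})\trans\bA_{\beta}^{(s)}=\bDelta_{\beta}^{(s)}$ is diagonal, absorb the B-spline/$t$-mode product into $\wt\etam_{\beta}^{(i)}$, and read off independence across $(z_{1},z_{2},z_{3})$ from the diagonal (Kronecker-of-diagonals) covariance of the transformed noise. The one substantive divergence concerns what is subtracted before transforming. You subtract the observed baseline image $\bY^{(i)}(h_{g},\cdot,\cdot,\cdot,0)$, which produces the $2\sigma_{\epsilon}^{2}$ factor and, more consequentially, correlates the residuals across distinct visits $t_{ij}$ through the shared $\bepsilon^{(i)}(h_{g},\cdot,\cdot,\cdot,0)$ --- harmless for the stated independence across $(z_{1},z_{2},z_{3})$ at a fixed visit, but not what the parallel core-tensor updates downstream rely on. The displayed definition of $\bR^{(i)}(h_{g},\cdot,\cdot,\cdot,t_{ij})$ is admittedly dimensionally garbled ($\bR^{(i)}(h_{g},\cdot,\cdot,\cdot,0)$ lives in the reduced space), but the sampler in Section \ref{sec: sm post comp}, where the $\beta$-residual is $Y^{(i)}-\alpha^{(i)}-\cdots$, reveals the intent: one conditions on the latent baseline tensor $\balpha^{(i)}$ and subtracts it, so only the single error $\bepsilon^{(i)}(h_{g},\cdot,\cdot,\cdot,t_{ij})$ survives, with no factor of $2$ and no cross-visit dependence. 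Your flag on the variance is well taken: even under that reading the variance is $\sigma_{\epsilon}^{2}\prod_{s\in\S}\delta_{\beta,z_{s}}^{(s)}$ rather than $\sigma_{\epsilon}^{2}$, since the paper does not normalize columns to unit length (that is precisely why the $\delta$'s appear in the mean); the stated variance looks like a slip in the lemma rather than in your derivation. One caution: do not invoke Lemma \ref{lem: Tucker-HOSVD} to take the columns orthonormal ``without loss of generality'' --- the lemma is a statement about the particular semi-orthogonal parametrization fixed by the model, and renormalizing would change both the mean and the variance it asserts.
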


\paragraph{Principal Tensor Components.} 
These results demonstrate that, akin to SVD or PCA for independently distributed vector-valued 
data, semi-orthogonal mode matrices in our proposed compact HOSVD model enable decorrelation of the baseline and the longitudinal changes for spatially and spatiotemporally correlated tensor-valued data.
{However, unlike standard SVD or PCA, 
the columns of the mode matrices now 
have varying (rather than unit) lengths.} 
Rank selection is now achieved through a complementary strategy of ordering these columns with stochastically diminishing variances, 
as described in Section \ref{sec: priors}. 
The $\bR^{(i)}(h_{g},\cdot,\cdot,\cdot,0)$'s and $\bR^{(i)}(h_{g},\cdot,\cdot,\cdot,t_{ij})$'s are thus broadly interpretable as \emph{principal tensor components} in the lower-dimensional tensor space obtained by transforming the observed images $\bY^{(i)}(h_{g},\cdot,\cdot,\cdot,0)$ and $\bY^{(i)}(h_{g},\cdot,\cdot,\cdot,t_{ij})$. 
This construction is different from other approaches such as \cite{allen2012sparse} due to our random core model which is more along the lines of probabilistic PCA \citep{tipping1999probabilistic}.


\paragraph{Challenges.} 
Significant challenges remain in defining appropriate prior distributions that enforce semi-orthogonality, simultaneously inducing distinct smoothness properties for different domains (groups, voxels, and time), 
and efficiently sampling the mode matrices from the posterior. 
We tackle these issues in Section \ref{sec: priors} and Section \ref{sec: sm post comp} in the SM.

\paragraph{Benefits.} 
\ul{Despite their added challenges, 
semi-orthogonal mode matrices actually markedly reduce the overall computational burden by facilitating efficient sampling of the individual-specific core tensors $\etam_{\alpha}^{(i)}$ and $\etam_{\beta}^{(i)}$, which contribute the majority of model parameters.} 
Exploiting 
the Lemmas \ref{lem: distprop alpha} and \ref{lem: distprop beta}, 
we enable fast and efficient parallel sampling of the elements of $\etam_{\alpha}^{(i)}$ and $\etam_{\beta}^{(i)}$, as detailed in Section \ref{sec: sm post comp} in the SM.


\vspace*{-2ex}
\subsection{Random Effects Core Tensors} \label{sec: random cores}
\vspace*{-1ex}
The literature on tensor methods with random effects components that accommodate subject heterogeneity is sparse. 
Existing approaches have considered structuring the data with the subjects constituting an additional tensor dimension, followed by a decomposition with a low-dimensional mode-matrix for the subjects \citep{hore2016tensor,guan2024smooth}. 
The next remark shows the connection between this strategy and our proposed subject-specific core tensor model, 
highlighting the greater flexibility the latter approach offers.

\begin{Rem}
We can stack the $\alpha^{(i)}(h_{g},h_{1},h_{2},h_{3})$'s from \eqref{eq: mainmodel} to form an $N\times d_{g} \times d_{1}\times d_{2}\times d_{3}$ tensor which can then be factorized as $\etam^{(0)}_{\alpha}\times_{N} \bA^{(N)}_{\alpha}\times_{g}\bA_{\alpha}^{(g)}\times_{1}\bA_{\alpha}^{(1)}\times_{2}\bA_{\alpha}^{(2)}\times_{3}\bA_{\alpha}^{(3)}$ with a mode matrix $\bA^{(N)}_{\alpha}$ with individual-specific rows. 
This can be viewed as a special case of our proposed model with individual-specific core tensors with $\etam_{\alpha}^{(N)}=\etam^{(0)}_{\alpha}\times_{N} \bA^{(N)}_{\alpha}$, 
where $\etam_{\alpha}^{(N)}$ is an $N\times r_{\alpha,g} \times r_{\alpha,1}\times r_{\alpha,2}\times r_{\alpha,3}$ tensor obtained by stacking the $\eta_{\alpha,z_{g},z_{1},z_{2},z_{3}}^{(i)}$'s from \eqref{eq: baseline effects}. 
A similar result holds for the $\bbeta^{(i)}$'s as well. 
\label{rem:decomposubmode}
\end{Rem}


{Furthermore, decomposing the subject-specific core-tensor as $\eta^{(i)}_{\alpha,z_{g},z_{1},z_{2},z_{3}}=c_{\alpha,z_{g},z_{1},z_{2},z_{3}} + \eta^{(i)}_{\alpha,diff,z_{g},z_{1},z_{2},z_{3}}$, where $\eta^{(i)}_{\alpha,diff,z_{g},z_{1},z_{2},z_{3}}\sim\Normal(0, \sigma_{\alpha,z_{g},z_{1},z_{2},z_{3}})$, we can get the population-level baseline and subject-specific differences as $\bC_{\alpha}\times_{g}\bA_{\alpha}^{(g)}\times_{1}\bA_{\alpha}^{(1)}\times_{2}\bA_{\alpha}^{(2)}\times_{3}\bA_{\alpha}^{(3)}$ and $\etam_{\alpha,diff}^{(i)}\times_{g}\bA_{\alpha}^{(g)}\times_{1}\bA_{\alpha}^{(1)}\times_{2}\bA_{\alpha}^{(2)}\times_{3}\bA_{\alpha}^{(3)}$, respectively. While both models share the same mode matrices, the following Lemma ensures that the construction is still highly flexible. 

\begin{Lem}
If the mode matrices of the closest Tucker approximations of the population-level baseline and the associated individual-specific differences under a given set of rank tuple are in the column spaces of a set of common mode matrices, 
the random-effects core model  \eqref{eq: baseline effects} can flexibly represent both the population-level and individual-specific difference tensors.
A similar result also holds for the $\wt\bbeta^{(i)}$'s.
\label{lem:indicompo}
\end{Lem}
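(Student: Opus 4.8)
The plan is to read the claim as a statement about representational capacity. Write $\balpha^{\mathrm{pop}}$ for the target population-level baseline tensor and $\balpha^{(i),\mathrm{diff}}$ for the target individual-specific difference tensors, and let $\balpha^{\mathrm{pop},\ast}$ and $\balpha^{(i),\mathrm{diff},\ast}$ be their closest (Frobenius) Tucker approximations at the given rank tuple. I would fix Tucker factorizations $\balpha^{\mathrm{pop},\ast} = \bC^{\mathrm{pop}}\times_1\bU^{(1)}\times_2\bU^{(2)}\times_3\bU^{(3)}\times_g\bU^{(g)}$ and $\balpha^{(i),\mathrm{diff},\ast} = \bC^{(i)}\times_1\bV^{(i,1)}\times_2\bV^{(i,2)}\times_3\bV^{(i,3)}\times_g\bV^{(i,g)}$. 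By hypothesis, for each mode $s\in\S_{g}$ there is a common full-column-rank matrix $\bA^{(s)}$ whose column space contains the columns of $\bU^{(s)}$ and of every $\bV^{(i,s)}$; invoking Lemma \ref{lem: Tucker-HOSVD} (or just a QR step on $\bA^{(s)}$ together with absorbing the triangular factor into the cores) I would take each $\bA^{(s)}$ semi-orthogonal, as the model in \eqref{eq: baseline effects} requires.

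Next I would record, for each $s$, coefficient matrices $\bP^{(s)}$ and $\bQ^{(i,s)}$ with $\bU^{(s)} = \bA^{(s)}\bP^{(s)}$ and $\bV^{(i,s)} = \bA^{(s)}\bQ^{(i,s)}$, which exist precisely by the column-space assumption. Using associativity of the $j$-mode product with matrix multiplication, $\balpha^{\mathrm{pop},\ast} = \big(\bC^{\mathrm{pop}}\times_1\bP^{(1)}\times_2\bP^{(2)}\times_3\bP^{(3)}\times_g\bP^{(g)}\big)\times_1\bA^{(1)}\times_2\bA^{(2)}\times_3\bA^{(3)}\times_g\bA^{(g)}$, and likewise for $\balpha^{(i),\mathrm{diff},\ast}$ with $\bC^{\mathrm{pop}}$ and the $\bP^{(s)}$ replaced by $\bC^{(i)}$ and the $\bQ^{(i,s)}$. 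Then I would set $\bC_{\alpha} := \bC^{\mathrm{pop}}\times_1\bP^{(1)}\times_2\bP^{(2)}\times_3\bP^{(3)}\times_g\bP^{(g)}$ and $\etam^{(i)}_{\alpha,\mathrm{diff}} := \bC^{(i)}\times_1\bQ^{(i,1)}\times_2\bQ^{(i,2)}\times_3\bQ^{(i,3)}\times_g\bQ^{(i,g)}$; with the single semi-orthogonal mode-matrix set $\{\bA^{(s)}\}_{s\in\S_{g}}$ these reproduce $\balpha^{\mathrm{pop},\ast}$ and every $\balpha^{(i),\mathrm{diff},\ast}$ exactly, and the subject-specific core $\etam^{(i)}_{\alpha} = \bC_{\alpha} + \etam^{(i)}_{\alpha,\mathrm{diff}}$ is exactly the quantity appearing in the decomposition that states the lemma. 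Since the $c_{\alpha,z_{g},z_{1},z_{2},z_{3}}$ are free location parameters and each $\eta^{(i)}_{\alpha,\mathrm{diff},z_{g},z_{1},z_{2},z_{3}}$ is a mean-zero Gaussian with full support on $\rR$, any such finite configuration of core values lies in the support of the model, giving the asserted flexibility.

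The only genuine obstacle is the rank/subspace bookkeeping buried in the hypothesis: $\bA^{(s)}$ must have at least $\dim\!\big(\mathrm{col}(\bU^{(s)}) + \textstyle\sum_i \mathrm{col}(\bV^{(i,s)})\big)$ columns for the $\bP^{(s)}$ and $\bQ^{(i,s)}$ to exist, i.e., the model ranks $r_{\alpha,s}$ must be taken at least this large --- which is exactly what ``are in the column spaces of a set of common mode matrices'' encodes, so no extra work is needed (this is also the viewpoint behind Remark \ref{rem:decomposubmode}). For the time-varying tensors $\wt\bbeta^{(i)}$ I would run the identical argument with the extra time mode present and with the constrained factor $\bA^{(t)}_{\beta,1}$ (defined by $\bA^{(t)}_{\beta}\bM = \bA^{(t)}_{\beta,1}$) playing the role of the time mode matrix, so that the baseline-separation constraint $\beta^{(i)}(h_{g},h_{1},h_{2},h_{3},0)=0$ is preserved automatically; the column-space hypothesis is then imposed on $\bA^{(t)}_{\beta,1}$, and absorbing the coefficient matrices into $\bC_{\beta}$ and $\etam^{(i)}_{\beta,\mathrm{diff}}$ goes through verbatim.
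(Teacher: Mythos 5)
Your proposal is correct and formalizes exactly the argument the paper sketches (and then omits as "straightforward"): because the population-level and individual-specific mode matrices lie in the column space of the common mode matrices, one can write $\bU^{(s)}=\bA^{(s)}\bP^{(s)}$, $\bV^{(i,s)}=\bA^{(s)}\bQ^{(i,s)}$ and absorb the coefficient matrices into $\bC_{\alpha}$ and $\etam^{(i)}_{\alpha,\mathrm{diff}}$, which is precisely what the authors mean by the cores being able to "flexibly down-weigh any irrelevant column from the shared mode matrices." Your additional care about semi-orthogonalization via a QR step (Lemma \ref{lem: Tucker-HOSVD}), the rank bookkeeping, and the full support of the Gaussian core entries fills in details the paper leaves implicit, but the route is the same.
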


The proof is straightforward, relying on the fact that  $\bC_{\alpha}$ and $\etam_{\alpha,diff}^{(i)}$ can flexibly down-weigh any irrelevant column from the shared mode matrices 
and hence omitted. 
On a related note, the special case discussed in Remark~\ref{rem:decomposubmode} with  $\etam_{\alpha}^{(N)}=\etam^{(0)}_{\alpha}\times_{N} \bA^{(N)}_{\alpha}$ implies that the population-level baseline and subject-specific differences are $\etam^{(0)}_{\alpha}\times_{N} \bmu_{\alpha,a}\times_{g}\bA_{\alpha}^{(g)}\times_{1}\bA_{\alpha}^{(1)}\times_{2}\bA_{\alpha}^{(2)}\times_{3}\bA_{\alpha}^{(3)}$ and $\etam^{(0)}_{\alpha}\times_{N} (\ba^{(N)}_{\alpha,i}-\bmu_{\alpha,a})\times_{g}\bA_{\alpha}^{(g)}\times_{1}\bA_{\alpha}^{(1)}\times_{2}\bA_{\alpha}^{(2)}\times_{3}\bA_{\alpha}^{(3)}$, respectively, 
where the rows $\ba^{(N)}_{\alpha,i}$ of $\bA^{(N)}_{\alpha}$ are assumed to be random with a common mean $\bmu_{\alpha,a}$.
This is again clearly less flexible in capturing complex population-level baseline and subject-specific differences, requiring stronger low-rank conditions to hold for both of these quantities.

\vspace*{-2ex} 
\subsection{Factor Model Representation}
\label{sec: factor model}
\vspace*{-1ex} 

With some abuse of notation, 
let  $\wt\bY^{(i)} (h_{g},t) = \vec\{\bY^{(i)}(h_{g},\cdot,\cdot,\cdot,t)\}$. 
Let the Kronecker product of two matrices be denoted by $\otimes$. 
We can write the model for vectorized images as a functional latent factor model as 
\vspace*{-7ex}\\
\be
\wt\bY^{(i)}(h_{g},t) = \bLambda_{\alpha}(h_{g}) \vec(\etam_{\alpha}^{(i)}) + \bLambda_{\beta}(h_{g},t) \vec(\etam_{\beta}^{(i)}) + \vec\{\bepsilon^{(i)}(h_{g},t)\}, \label{eq: latent factor}
\ee
\vspace*{-7ex}\\
where $\bLambda_{\alpha}(h_{g}) = \ba^{(g)}_{\alpha,h_{g}} \otimes \bA^{(1)}_{\alpha}\otimes \bA^{(2)}_{\alpha} \otimes \bA^{(3)}_{\alpha}$ and 
$\bLambda_{\beta}(h_{g},t) = \ba^{(g)}_{\beta,h_{g}} \otimes \bA^{(1)}_{\beta} \otimes \bA^{(2)}_{\beta} \otimes\bA^{(3)}_{\beta} \otimes \{\bA^{(t)}\bb(t)\}$, 
where $\ba^{(g)}_{\alpha,h_{g}}$ and $\ba^{(g)}_{\beta,h_{g}}$ are the $h_{g}\th$ column in $\bA_{\alpha}^{(g)}$ and $\bA_{\beta}^{(g)}$, respectively, 
with  
$\vec(\etam_{\alpha}^{(i)})\sim\Normal[\vec(\bC_{\alpha}),\diag\{\vec(\bSigma_{\alpha})\}]$ and 
$\vec(\etam_{\beta}^{(i)})\sim\Normal[\vec(\bC_{\beta}),
\diag\{\vec(\bSigma_{\beta})\}]$, 
where $\bSigma_{\alpha}=((\sigma^{2}_{\alpha,z_{g},z_{1},z_{2},z_{3}}))$, $\bSigma_{\beta}=((\sigma^{2}_{\beta,z_{g},z_{1},z_{2},z_{3},z_{t}}))$ are $r_{\alpha,g} \times r_{\alpha,1} \times r_{\alpha,2}\times r_{\alpha,3}$ and $r_{\beta,g} \times r_{\beta,1} \times r_{\beta,2}\times r_{\beta,3}\times r_{\beta,t}$ tensors. 
The formulation helps to derive some additional distributional properties of our model, which we utilize in our large sample analysis in Section~\ref{sec: asymptotics}.

Further marginalizing in \eqref{eq: latent factor}, we get $\wt{\bY}^{(i)}(h_{g},t)
\sim \MVN\{\bmu(h_{g},t),\bSigma(h_{g},t)\}$, where $\bmu(h_{g},t) = \bLambda_{\alpha}(h_{g})\vec(\bC_{\alpha})+\bLambda_{\beta}(h_{g},t)\vec(\bC_{\beta})$ and $\bSigma(h_{g},t) = \bLambda_{\alpha}(h_{g})\diag\{\vec(\bSigma_{\alpha})\}\bLambda_{\alpha}(h_{g})\trans + \bLambda_{\beta}(h_{g},t)\diag\{\vec(\bSigma_{\beta})\}\bLambda_{\beta}(h_{g},t)\trans + \diag(\sigma_{\epsilon}^{2}\bone_{d_{1}d_{2}d_{3}})$. 
Furthermore, if the observation times are the same for all subjects, each being observed a total of $n$ times, 
then for the $i\th$ subject in the $h_{g}\th$ group, the marginal density becomes $\vec(\wt{\bY}^{(i)})\sim \MVN (\bmu_{h_{g}},\bSigma_{h_{g}})$, where $\bmu_{h_{g}}=\bLambda_{\alpha,h_{g}}\vec(\bC_{\alpha})+\bLambda_{\beta,h_{g}}\vec(\bC_{\beta})$ and $\bSigma_{h_{g}}=\bLambda_{\alpha,h_{g}}\bSigma_{\alpha,\eta}\bLambda_{\alpha,h_{g}}^T+\bLambda_{\beta,h_{g}}\bSigma_{\beta,\eta}\bLambda_{\beta,h_{g}}^T+\sigma^{2}_{\epsilon}\bI_{nd_{1}d_{2}d_{3}}$.

\vspace*{-2ex} 
\subsection{Covariance Tensor Model Derivation} \label{sec: sm covaritensor}
\vspace*{-1ex} 
Integrating out the random effects $\etam_{\alpha}^{(i)}$ and $\etam_{\beta}^{(i)}$ from our proposed model, we also arrive at an interesting model for the covariance tensor. 
Specifically, for $h_{g} = h_{g} (i) = h_{g}^{\prime}(i) = h_{g}^{\prime}$ (by design) but arbitrary $(h_{1},h_{2},h_{3},t)$ and $(h_{1}^{\prime},h_{2}^{\prime},h_{3}^{\prime},t^{\prime})$, 
the marginal variance-covariance at baseline, 
conditional on 
$\bA_{\alpha},\bsigma_{\alpha}^{2},\tau_{\alpha}^{2}$, is given by 
\vspace*{-7ex}\\
\be
\begin{aligned}
& \textstyle \var\{\alpha^{(i)}(h_{g},h_{1},h_{2},h_{3})\} = \tau_{\alpha}^{2}\sum_{m \in \S_{g}}\sum_{z_{m}=1}^{r_{\alpha,m}} \sigma_{\alpha,z_{g},z_{1},z_{2},z_{3}}^{2} \left\{\prod_{s \in \S_{g}} a^{(s)}_{\alpha,z_{s}}(h_{s})\right\}^{2}, \\
&\cov\{\alpha^{(i)}(h_{g},h_{1},h_{2},h_{3}),\alpha^{(i)}(h_{g}^{\prime},h_{1}^{\prime},h_{2}^{\prime},h_{3}^{\prime})\} \\ 
&\textstyle~~~~~~~= \tau_{\alpha}^{2}\sum_{m \in \S_{g}}\sum_{z_{m}=1}^{r_{\alpha,m}} \sigma_{\alpha,z_{g},z_{1},z_{2},z_{3}}^{2} \left\{\prod_{s \in \S_{g}} a^{(s)}_{\alpha,z_{s}}(h_{s})\right\} \left\{\prod_{s \in \S_{g}} a^{(s)}_{\alpha,z_{s}}(h_{s}^{\prime})\right\}.
\end{aligned} \label{eq: cov tensor alpha}
\ee
\vspace*{-5ex}\\
Similarly, the marginal variance-covariance 
at time $t$ in addition to the baseline is given by 
\vspace*{-7ex}\\
\be
\begin{aligned}
& \textstyle \var\{\beta^{(i)}(h_{g},h_{1},h_{2},h_{3},t)\} = \sum_{h_{t}=1}^{d_{t}}\sum_{m \in \S_{g,t}}\sum_{z_{m}=1}^{r_{\beta,m}} \sigma_{\beta,z_{g},z_{1},z_{2},z_{3},z_{t}}^{2} \left\{ b_{h_{t}}(t) \prod_{j \in \S_{g,t}} a^{(s)}_{\mu,z_{s}}\right\}^{2} \\
&\cov\{\beta^{(i)}(h_{g},h_{1},h_{2},h_{3},t),\beta^{(i)}(h_{g}^{\prime},h_{1}^{\prime},h_{2}^{\prime},h_{3}^{\prime},t^{\prime})\} \\ 
&~~~~~~~= \tau_{\beta}^{2} \sum_{h_{t}=1}^{d_{t}}\sum_{h_{t}^{\prime}=1}^{d_{t}} \sum_{m \in \S_{g,t}}\sum_{z_{m}=1}^{r_{\alpha,m}} \sigma_{\beta,z_{g},z_{1},z_{2},z_{3},z_{t}}^{2} b_{h_{t}}(t) b_{h_{t}^{\prime}}(t') \left\{\prod_{s \in \S_{g,t}} a^{(s)}_{\beta,z_{s}}(h_{s})\right\} \left\{\prod_{s \in \S_{g,t}} a^{(s)}_{\beta,z_{s}}(h_{s}^{\prime})\right\}. 
\end{aligned} \label{eq: cov tensor beta}
\ee
\vspace*{-5ex}

The literature on higher-order moment tensors is extremely sparse. 
A CP decomposition-based model for correlation tensors for two-dimensional spatial domains in static settings 
was recently introduced in \cite{deng2023correlation}. 
Our proposal, while for covariance tensors, is substantially more general as it relies on the more versatile HOSVD decomposition and is designed for higher-dimensional spaces. 
Our model is also naturally {\it semi-symmetric}, 
i.e.,
\vspace*{-7ex}\\
\bse
\cov\{\alpha^{(i)}(h_{g},h_{1},h_{2},h_{3}),\alpha^{(i)}(h_{g}^{\prime},h_{1}^{\prime},h_{2}^{\prime},h_{3}^{\prime})\} = \cov\{\alpha^{(i)}(h_{g}^{\prime},h_{1}^{\prime},h_{2}^{\prime},h_{3}^{\prime}),\alpha^{(i)}(h_{g},h_{1},h_{2},h_{3})\}. 
\ese
\vspace*{-7ex}\\
Unlike the probabilistic Tucker model of \cite{chu2009probabilistic}, ours incorporates subject-specific cores and allows for non-identically distributed core-tensor elements.
Following Lemma~\ref{lem:indicompo}, varying variances control the magnitudes of $\etam_{\alpha,diff}^{(i)}$'s, thereby the subject-specific variations from the population average.
Jointly, \eqref{eq: cov tensor alpha}-\eqref{eq: cov tensor beta} thus leads to a very flexible tensor covariance decomposition model for longitudinal functional data.

For the rest of the article, to simplify notation, 
we will often use concise symbols without 
subscripts to represent pertinent sets of parameters and variables, e.g., $\alpha$ will denote $\alpha^{(i)}(h_{g},h_{1},h_{2},h_{3})$ and $\alpha(h_{g},h_{1},h_{2},h_{3})$ for different possible values of $(i,g,h_{1},h_{2},h_{3})$, etc.

\vspace*{-3ex}
\section{Prior Specification} \label{sec: priors}
\vspace*{-1ex}
\begin{enumerate}[wide, labelwidth=0pt, labelindent=0pt]
\item {\bf Mode Matrices:} 
Defining appropriate prior distributions for semi-orthogonal mode matrices, especially when their columns span general functional (discrete, spatial, and temporal) domains, presents a significant challenge. 
While there exists some literature for priors for orthonormal matrices 
\citep[e.g.,][]{hoff2007model,hoff2009simulation,jauch2020random,north2024flexible,jauch2025prior}, 
these methods, with the exception of \cite{north2024flexible}, 
are not amenable to easy refinements or efficient posterior computation. 
One key innovation of our proposed research is a cumulative subspace-constrained (CSC) product-independent Gaussian (PING) process that addresses the above-mentioned issues. 
Before we introduce this new prior, 
we first discuss the CSC distribution for general semi-orthogonal matrices, 
and then briefly review the PING process as originally introduced in \cite{roy2021spatial}. 

{\bf CSC Distribution for Semi-Orthogonal Matrices:} 
Let $\bX = [\bx_{1},\dots,\bx_{r}]$ be a $d \times r$ semi-orthogonal matrix $(d \geq r)$, 
i.e., $\bX\trans\bX$ is diagonal. 
An alternative specification of the orthogonality constraint may be via a column-wise sequence of linear constraints as $\{\bGamma_{2}\bx_{2}=\bzero,\bGamma_{3}\bx_{3}=\bzero,\ldots,\bGamma_{r}\bx_{r}=\bzero\}$, where $\bGamma_{\ell}=[\bx_{1},\dots,\bx_{\ell-1}]$ is a $d \times (\ell-1)$ dimensional matrix for $\ell=2,\dots,r$.
Thus, we can specify a distribution on $\bX$ as $P(\bX) = P(\bx_1)P(\bx_2\mid\bx_{1})\cdots P(\bx_{r}\mid \bx_{r-1},\ldots\bx_{1})$ 
with $P(\bx_{\ell}\mid\bx_{\ell-1},\ldots,\bx_{1})=\Normal(\bmu_{\ell},\bSigma_{\ell} \mid \bx_{\ell}\trans\bx_{\ell'}=0, \ell<\ell') = \Normal(\bmu_{\ell},\bSigma_{\ell} \mid \bGamma_{\ell}\bx_{\ell}=\bzero)$.
This mimics the well-known Gram-Schmidt-type construction, similar to \cite{north2024flexible}, 
but avoids unit norm constraints, and hence computationally more tractable. 
%
The full conditional for $\bx_{\ell}$ is then given by $P(\bx_{\ell}\mid \bX_{-\ell})=\Normal(\bmu_{\ell},\bSigma_{\ell} \mid \bX_{-\ell}\bx_{\ell}=\bzero)$, where $\bX_{-\ell}=[\bx_{1},\dots,\bx_{\ell-1},\bx_{\ell+1},\ldots,\bx_{r}]$.
We use this strategy in Section \ref{sec: sm post comp} to efficiently sample the mode matrices from the posterior. 

A significant additional advantage of the CSC 
is that the marginal distributions are multivariate normal, 
which makes them amenable to 
desirable structural constraints via judicious choices of their covariance matrices. 
We exploit this feature to induce smoothness in the columns of the spatial and temporal mode matrices using the PING process \citep{roy2021spatial} 
as well as to induce cumulative shrinkage with increasing column index for semi-automated rank selection \citep{bhattacharya2011sparse}, as described next.

{\bf PING for Piece-wise Smooth Processes:} 
The PING process is constructed as the location-wise product of Gaussian processes (GPs) with a smoothness-inducing covariance kernel. 
Let $\theta_{1}(v),\dots,\theta_{q}(v)$ be $q$ iid GPs with mean $\eE\{\theta_{k}(v)\} = 0$, 
variance $\var\{\theta_{k}(v)\} = 1$, and 
a covariance kernel $\cov\{\theta_{k}(v),\theta_{k}(v')\} = K(v,v')$. 
The PING process is then defined as $\theta(v) = \sigma \prod_{k=1}^{q}\theta_{k}(v)$. 
For $q=1$, the PING reduces to a simple GP. 
For $q>1$, the distribution of $\theta(v)$ has 
a higher concentration around zero, promoting sparsity, and 
heavier tails, preserving large signals. 
Also, if $K(v,v')$ is smooth, $\theta(v)$ will also be smooth, effectively capturing spatial dependence. 
The parameter $\sigma$ characterizes the overall scale of $\theta(v)$.

A major advantage of the PING prior is that it is computationally easily tractable. 
For any model leading to a conditionally normal likelihood function, the posterior full conditionals of the PING component processes are also all multivariate normal, 
allowing them to be easily recursively sampled from the posterior using efficient Gibbs sampling schemes. 

{\bf CSC-PING Prior for Semi-Orthogonal Matrices:}
To adapt the CSC distribution and the PING process to our setting, 
for $s \in \S_{g,t}$, 
we let $a^{(s)}_{z_{s}}(h_{s})=\sigma_{z_{s}}^{(s)}\prod_{k=1}^{q_{s}} a^{(s)}_{z_{s},k}(h_{s})$, 
where $a^{(s)}_{z_{s},1}(h_{s}),\dots,a^{(s)}_{z_{s},q_{s}}(h_{s})$ are iid GPs 
with mean $0$, variance $1$, and covariance kernel $K^{(s)}(h_{s},h_{s}')$. 
We then set $P(\ba_{z_{s}}^{(s)} \mid \ba_{z_{s}-1}^{(s)},\dots,\ba_{1}^{(s)})=\Normal(\bzero,\bSigma^{(s)} \mid \ba_{\ell}^{(s) \text{T}}\ba_{z_{s}}^{(s)}=0, \ell<z_{s})=\Normal(\bzero,\bSigma^{(s)} \mid \bGamma_{z_{s}}^{(s)}\ba_{z_{s}}^{(s)}=\bzero)$, where $\bGamma_{z_{s}}^{(s)} = [\ba_{1}^{(s)},\dots,\ba_{z_{s}-1}^{(s)}]\trans$ is an $(z_{s}-1)\times d_{s}$ matrix, and $\bSigma^{(s)}$ is the covariance kernel of the PING process induced on $\ba^{(s)}_{z_{s}}$. 

{With different choices of $q_{s}$ and $K^{(s)}$, the CSC-PING prior can be conditioned to adapt to different sparsity and smoothness levels as appropriate for functional domains. 
The treatment of the mode matrices $\bA_{\alpha}^{(s)}$ and $\bA_{\beta}^{(s)}$ for $s\in \S$, whose columns span spatial domains, 
that of $\bA_{\beta}^{(t)}$, 
whose columns span a temporal domain, 
as well as that of $\bA_{\alpha}^{(g)}$ and $\bA_{\beta}^{(g)}$, whose columns span a discrete space, 
can all be covered in a unified manner, 
as we describe below.}

{\bf Temporal Domain:} 
For $s=t$, since we do not anticipate sudden changes over time, 
we do not induce strong shrinkage but set $q_{t}=1$ whereby the CSC-PING reduces to a CSC-GP.

{\bf Discrete Domain:} 
For the case $s=g$, where the columns span the categorical group labels, 
we set $q_{g}=1$ with the GP covariance kernel set to be an identity matrix.

{\bf Spatial Domains:} 
For the spatially varying cases $s\in \S$, 
where the columns are expected to be smooth but sparse, we let $q_{s}>1$. 
However, instead of letting the components $a^{(s)}_{z_{s},k}(h_{s})$ follow a simple GP, as in the original PING, 
we consider a further low-rank approximation, as described below, 
which helps speed up the computation. 

To simplify notation, we omit the suffixes $\alpha$ and $\beta$ in this paragraph. 
Specifically, we let $a^{(s)}_{z_{s},k}(h_{s})=\sum_{m=1}^{m_{s}}\gamma^{(s)}_{z_{s},k,m} g_{m}^{(s)}(h_{s}) = \bg^{(s)}(h_{s})\trans\bgamma^{(s)}_{z_{s},k}$, where
$\bg^{(s)}(h_{s})=[g_{1}^{(s)}(h_{s}),\dots,g_{m_{s}}^{(s)}(h_{s})]\trans$ constitute a basis, and 
$\bgamma^{(s)}_{z_{s},k}=[\gamma^{(s)}_{z_{s},k,1},\dots,\gamma^{(s)}_{z_{s},k,m_{s}}]\trans\sim\Normal(\bzero, \bQ^{(s)})$ are the associated basis coefficients.
Let $\bU^{(s)}=((u_{\ell,\ell'}))$ be an $m_{s}\times m_{s}$ adjacency matrix, where $u^{(s)}_{\ell,\ell-1}=u^{(s)}_{\ell,\ell+1}=1$ and $u_{\ell,\ell'}=0$ otherwise.
Also, let the corresponding diagonal degree matrix be $\bD^{(s)}$. 
We set the graph Laplacian 
$\bQ^{(s)}=(\bD^{(s)}-\eta^{(s)}\bU^{(s)})^{-1}$
with $\eta^{(s)}=0.99$ 
which makes the $\bQ^{(s)}$'s valid covariance matrices with banded inverses. 
%
One may set a prior on $\eta^{(s)}$ for additional flexibility \citep{nychka2015multiresolution}. 
However, due to the orthogonality restriction on the mode matrices, 
we do not pursue it here but use a fixed value in favor of computational simplicity and stability. 
We then set   
$g_{m}^{(s)}(h_{s}) = \wt{g}_{m}^{(s)}(h_{s})/\wt{w}_{m}^{(s)}(h_{s})$, with 
$\wt{g}_{m}^{(s)}(h_{s}) = \exp\left(-\frac{h_{s}^{2}}{2h_{s}^{2}}\right)1\{\abs{h_{s}}\leq 3h_{s}\}$ 
and $\wt{w}_{m}^{(s)}(h_{s})=\{\wt\bg^{(s)}(h_{s})\trans\bQ^{(s)}\wt\bg^{(s)}(h_{s})\}^{1/2}$, where
$\wt\bg^{(s)}(h_{s})=[\wt{g}_{1}^{(s)}(h_{s}),\dots,\wt{g}_{m_{s}}^{(s)}(h_{s})]\trans$.
%
%
%
If we ignore the orthogonality restrictions, 
we now have $a^{(s)}_{z_{s}, k}(h_{s}) \sim \Normal\{0,\bg^{(s)}(h_{s})\trans\bQ^{(s)}\bg^{(s)}(h_{s})\} = \Normal(0,1)$. 
Fixing the variances of the $a^{(s)}_{z_{s}, k}(h_{s})$'s this way ensures that the parameters $\sigma_{z_{s}}^{(s)}$ maintain their interpretation as the overall scale of the $a^{(s)}_{z_{s}}(h_{s})$'s. 
Also, $\cov\{a^{(s)}_{z_{s}, k}(h_{s,1}),a^{(s)}_{z_{s}, k}(h_{s,2})\} = \corr\{a^{(s)}_{z_{s}, k}(h_{s,1}),a^{(s)}_{z_{s}, k}(h_{s,2})\} = \bg^{(s)}(h_{s,1})\trans\bQ^{(s)}\bg^{(s)}(h_{s,2})$. 
Let the resulting covariance kernel for the $\ba_{z_{s}}^{(s)}$'s be denoted by $\bSigma^{(s)}$. 
Combining with the CSC structure, we can now assign CSC-PING priors on the $\ba_{z_{s}}^{(s)}$'s 
that simultaneously incorporate spatial sparsity and column-wise orthogonality.

To deal with the cases $s=g,t$ later under a common framework, we set, for these cases, $m_{s}=d_{s}$ and $\bG^{(s)}=\bI_{d_{s}}$, thereby defining $\gamma^{(s)}_{z_{s},1,h_{s}}=a_{z_{s},1}^{(s)}(h_{s})$ with $\bQ^{(g)} = \bI_{d_{g}}$, 
and $\bQ^{(t)}$ an appropriately chosen GP covariance kernel.

{\bf Cumulative Shrinkage on the Columns:} 
Finally, we impose a cumulative shrinkage prior on the scale parameters $\sigma^{(s)}_{z_{s}}$'s. 
Specifically, we set $\sigma^{(s)}_{z_{s}}=\prod_{k=1}^{z_{s}}\varsigma^{(s)}_{k}$ and $\varsigma^{(s)}_{1}\sim\Ga(\kappa_{1},1), \varsigma^{(s)}_{k}\sim\Ga(\kappa_{2},1)$ for $k>1$. 
Here $\Ga(a,b)$ denotes a Gamma distribution with mean $a/b$ and variance $a/b^{2}$. 
The $\sigma^{(s)}_{z_{s}}$'s are stochastically increasing when $\kappa_{2}>1$, which favors more shrinkage as the column index $z_{s}$ increases, implying their diminishing importance \citep{bhattacharya2011sparse,durante2017note}. 
This stochastic ordering enables semi-automated rank selection whereby we start with liberally large numbers of columns, eliminating the redundant ones by shrinking them toward zero. 

{In recent Bayesian Tucker decomposition-based models \citep[e.g.,][etc.]{spencer2024bayesian, stolf2024bayesian}, rank selection is achieved via sparsity-inducing priors on the core tensors. 
In light of Lemma~\ref{lem: Tucker-HOSVD}, here we perform this task using the mode matrices instead.}

The CSC-PING prior, as introduced above, inherits the computational advantages of both the CSC distribution and the PING process. 
Additionally, the component processes $\ba_{z_{s},k}^{(s)}$ admit closed from multivariate normal posterior full conditionals 
and hence can be easily recursively sampled via an efficient Gibbs sampler described in Supplementary Section \ref{sec: sm post comp}.

%

    \item {\bf Core Tensors:} We let $c_{\alpha,z_{g}, z_{1}, z_{2}, z_{3}} \sim \Normal(0, \sigma_{\alpha}^{2})$ with $\sigma^{-2}_{\alpha}\sim\Ga(0.1,0.1)$. 
    Similarly, $c_{\beta,z_{g}, z_{1}, z_{2}, z_{3}, z_{4}} \sim \Normal(0, \sigma_{\beta}^{2})$ with $\sigma^{-2}_{\mu}\sim\Ga(0.1,0.1)$.

    \item {\bf Variance Parameters:} Finally, we let $\sigma_{\epsilon}^{-2}\sim\Ga(a_{\epsilon},b_{\epsilon})$ for the error variance, and $\sigma^{-2}_{\alpha,z_{g},z_{1},z_{2},z_{3}}\sim\Ga(a_{s,\alpha},b_{s,\alpha})$, $\sigma^{-2}_{\beta,z_{g},z_{1},z_{2},z_{3},z_{t}}\sim\Ga(a_{s,\beta},b_{s,\beta})$, and $\tau^{-2}_{\alpha},\tau^{-2}_{\beta}\sim \Ga(a_{\tau},b_{\tau})$.

    


\end{enumerate}

\section{Additional Discussions on Hyper-parameters} \label{sec: sm hyper-parameters1}
\vspace*{-1ex} 
With some repetition from the paper for completeness and readability, we recall that our implementation involves only a small number of prior hyperparameters: $\kappa_1, \kappa_2$ for cumulative shrinkage; $q_s$ for the PING prior; and $a_s, b_s$, $a_\tau, b_\tau$ for inverse Gamma distributions. We use $\kappa_1 = 2.1$, $\kappa_2 = 3.1$, $a_s = 10$, $b_s = 0.1$ (for all $s$), and $a_\tau = b_\tau = 0.1$. 
For the components of $\bbeta$ ($s = 1, 2, 3$), we set $q_s = 3$, while for $\balpha$, $q_s = 1$ to reflect reduced spatial sparsity in the baseline.
The $\text{Ga}(10, 0.1)$ priors encourage shrinkage, while the $\text{Ga}(0.1, 0.1)$ priors on the global precisions $\tau_\alpha^{-2}$ and $\tau_\beta^{-2}$ are diffuse. We also set the number of B-spline bases $d_t$ to approximate the maximum number of patient visits, and the number of Gaussian bases $m_s = \lfloor d_s / 2 \rfloor$ with spacing $v_s = d_s / m_s$.

The choices for $a_{\tau}, b_{\tau}$ are non-informative.
The choices of $\kappa_{1}$ and $\kappa_{2}$ are as recommended by \cite{durante2017note}. 
The choice for $a_{s}$ and $b_{s}$ induces shrinkage that offers numerical stability and is partly motivated by \cite{roy2021perturbed}. 
The choice $q = 3$ for the PING was found to perform best across various settings in \cite{roy2021spatial}. 
Although \cite{roy2021spatial} also explored cross-validation methods for setting $q$, this approach was computationally prohibitive for our purposes. 
Instead, we conducted a preliminary exploration and again found $q_{s} = 3$ to yield very robust performances. 
Our choice of the number of B-splines is supported by \citet{ruppert2002selecting}, who showed that beyond a minimum threshold, the fit quality is largely insensitive to the exact number of splines -- a finding consistent with our own numerical experiments. 
The choice for the number of Gaussian bases 
is also motivated by the numerical experiments from \cite{roy2021spatial} using low-rank approximations, and also found to be working well in our current numerical experiments. 

\vspace*{-5ex}
\section{Posterior Computation} \label{sec: post comp}
\vspace*{-1ex} 
For finite sample inference, we use MCMC samples drawn from the posterior. 
A major advantage of our proposed approach is that all parameters have closed-form posterior full conditionals, enabling efficient Gibbs sampling. 
Moreover, {\ul{the core tensor elements can be updated in parallel (for $\alpha$) or semi-parallel (for $\beta$)} leveraging the distributional properties induced by semi-orthogonal mode matrices (Lemmas \ref{lem: distprop alpha} and \ref{lem: distprop beta}). 
\ul{Together, these aspects overcome major computational bottlenecks in exploring the full posterior of Tucker tensor factorization models, even in complex ultra-high-resolution settings with over $250,000$ voxels per image like the ADNI-3 data set.}
We thus view this as a major contribution of our work. 
Due to space constraints, details are presented in Section \ref{sec: sm post comp} in the SM.

\vspace*{-3ex} 
\section{Large-sample Properties} \label{sec: asymptotics}
\vspace*{-1ex} 
We study prior support and posterior consistency in recovering the marginal mean parameters $\alpha(h_{g}, h_{1}, h_{2}, h_{3})$ and $\beta(h_{g},h_{1},h_{2},h_{3},t)$ as $N$, the total number of subjects, goes to $\infty$. 
For simplicity, we assume that the true model is the one in (3) and that the observation times are the same $\{t_1,\ldots,t_n\}$ with $n_{i}=n$, 
and do not change with $N$. 
For any vector $\ba$, let $\|\ba\|_{2} = \sqrt{\sum_{j}a_{j}^{2}}$ denote the Euclidean norm, and $\|\ba\|_{\infty} = \max_{j} \abs{a_{j}}$ the supremum norm. 
Likewise, for any matrix $\bA$, let $\|\bA\|_{\infty} = \|\vect(\bA)\|_{\infty}$ denote the supremum norm, and $\|\bA\|_{F} = \|\vect(\bA)\|_{2}$ the Frobenius norm, where $\vect(\bA)$ is the vectorization of $\bA$. 
The symbol $\lesssim$ will stand for domination by a constant multiple for two sequences. 

Let $\bTheta=\{\etam_{\alpha},\bA^{(s)}_{\alpha}, s \in \S_{g}\} \cup \{\etam_{\beta},\bA^{(s)}_{\beta}, s \in \S_{g,t}\}\cup\{\bC_{\alpha},\bC_{\beta},\bSigma_{\alpha},\bSigma_{\beta},\sigma_{\epsilon}^{2}\}$ denote the complete set of parameters, $\bTheta_{0}$ be the set of corresponding true parameters, 
and $\eE_{0}$ be the expectation taken under the true probability law. 


To present a simplified posterior consistency proof, 
we invoke the latent factor representation of the proposed model from Section~\ref{sec: factor model} and define some simplified notations. 
Let $\wt\bY^{(i)}=[\wt\bY^{(i)} (h_{g},t_{1})\trans,\ldots,\wt\bY^{(i)} (h_{g},t_{n})\trans]\trans$. 
Then, $\wt\bY^{(i)} \sim \MVN(\bmu^{(i)},\bSigma^{(i)})$, where, if the $i\th$ subject belongs to the $h_{g}\th$ group, 
$\bmu^{(i)}=\bmu_{h_{g}}=\bone_{n}\otimes\{\bLambda_{\alpha,h_{g}}\vec(\bC_{\alpha})\}+\bLambda_{\beta,h_{g}}\vec(\bC_{\beta})$ and $\bSigma^{(i)} =\bSigma_{h_{g}}=(\bone_{n}\bone_{n}\trans)\otimes \{\bLambda_{\alpha,h_{g}}\bSigma_{\alpha,\eta}\bLambda_{\alpha,h_{g}}\trans\}+\bLambda_{\beta,h_{g}}\bSigma_{\beta,\eta}\bLambda_{\beta,h_{g}}\trans+\sigma^{2}_{\epsilon}\bI_{nd_{1}d_{2}d_{3}}$.
Then, $\sum_{i}\|\bmu^{(i)}-\bmu_{0}^{(i)}\|_{2}^{2}=n\sum_{h_{g}=1}^{d_{g}} N_{h_{g}}\|\balpha_{h_{g}}-\balpha_{0,h_{g}}\|^{2}_{F} + \sum_{h_{g}=1}^{d_{g}}N_{h_{g}} \sum_{k=1}^{n}\|\bbeta_{h_{g},t_{k}}-\bbeta_{0,h_{g},t_{k}}\|^{2}_{F}$. 
Here $\bone_{n}$ stands for the vector of all ones with length $n$, 
$\balpha_{h_{g}}=\balpha(h_{g},\cdot,\cdot,\cdot)$, 
$\bbeta_{h_{g},t_{k}} = \bbeta(h_{g},\cdot,\cdot,\cdot,t_{k})$, 
$\bLambda_{\beta,h_{g}}=[\bLambda_{\beta}(h_{g},t_{1})\trans,\ldots,\bLambda_{\beta}(h_{g},t_{n})\trans]\trans$.

\begin{Assmp}
The true coefficients admit a Tucker form and the mode matrices satisfy:
\begin{itemize} \setlength\itemsep{-0.5em}
    \item[(1)] (Bounded Modes) 
    $\|\bA^{(s)}_{\alpha,0}\|_{\infty}, \|(\bA^{(s)}_{\alpha,0})\trans\bA^{(s)}_{\alpha,0}\|_{\infty} \leq C_1$ for all $s \in \S_{g}$,\\ 
    and $\|\bA^{(s)}_{\beta,0}\|_{\infty}, \|(\bA^{(s)}_{\beta,0})\trans\bA^{(s)}_{\beta,0}\|_{\infty}\leq C_2$ for all $s \in \S_{g,t}$.
    \item[(2)] (Smooth Modes) 
    The functions $a_{\alpha,z_{s},0}^{(s)}(\cdot)$ and $a_{\beta,z_{s},0}^{(s)}(\cdot)$ for all $s \in \S$ belong to the H\"older class of regularity level $\iota$ on $[0,1]$. 
    Additionally, the functions $a_{\beta,z_{t},0}^{(t)}(\cdot)$ in the temporal mode-matrix belongs to the H\"older class of regularity level $\iota'$ on $[0,1]$. 
    \item[(3)] (Growing Dimensions) 
    $\log d_{s} \lesssim \log N$ for all $s \in \S$. 
    \item[(4)] (Known Growing Ranks) The Tucker ranks are known and satisfy $r_{\alpha,s}, r_{\beta,s} \le d_{s}$ with $\log r_{\alpha,s}, \log r_{\beta,s} \lesssim \log N$. 
\end{itemize}
\label{assmp::reg_coef}
\end{Assmp}
\vspace*{-2ex}
Assumption (2) is standard in functional analysis \citep{shen2015adaptive}.
Assumptions (1), (3), and (4) are standard in the asymptotics for tensor-decomposition guided models \citep{guhaniyogi2017bayesian, sun2017store, deng2023correlation}. 
However, we require one extra condition on the cross-product term in Assumption (1) to control the eigenvalues of $\bSigma_{h_{g},0}$. 
It can be relaxed by controlling true random effect variances but is omitted here for simplicity. 
Also, due to assumption (4), we drop the cumulative shrinkage component from the prior, 
further simplifying the analysis. 
We recall that a function $f$ 
is in H\"older space following the Definition C.4 of \cite{GhosalBook}. 
The rank inequality in the last assumption is not restrictive since $r_{\alpha,s},r_{\beta,s}\ll d_{s}$ are generally needed to leverage tensor decomposition models while being typically sufficient to achieve strong results. 
The second part of the assumption allows the ranks to grow with $N$.
On the number $d_{t}$ of B-spline bases and $m_{1},m_{2},m_{3}$ of Gaussian bases, we assume a prior with a Poisson-like tail $e^{-x \log x}$, 
as commonly done in posterior consistency analysis for such models \citep{shen2015adaptive}.
In practice, with data-adaptive smoothness, 
the number of bases is usually not crucial beyond a minimum threshold \citep{ruppert2002selecting}. 
Therefore, to simplify posterior computation, 
we fix $m_{1},m_{2},m_{3}$ and $d_t$ at sufficiently large values 
that ensure efficient function approximation.
The number of components in the PING prior is also kept fixed. 

Let $\Pi(\cdot)$ and $\Pi_{N}(\cdot \vert \bY^{(1:N)})$ be generic notations for the priors and the posterior, respectively. 
The following two lemmas establish the large sup-norm supports of the PING and CSC-PING priors, respectively, highlighting their theoretical flexibility, 
followed by our main Theorem on the consistency of the posterior. 

\begin{Lem}[Large Sup-Norm Support for PING] \label{lem: sup-norm PING}
For any $\bbeta_{\star}$ with $\|\bbeta_{\star}\|_{\infty}< C$, for a constant $C>0$ and any $\epsilon>0$, $\bbeta\sim$PING$(q)$ satisfies $\Pi(\big\|\bbeta-\bbeta_{\star}\big\|_{\infty}<\epsilon)>0$. 
\label{thm::largesupping}
\end{Lem}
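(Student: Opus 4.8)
The plan is to establish the positivity of the prior probability $\Pi(\|\bbeta - \bbeta_\star\|_\infty < \epsilon)$ for a PING$(q)$ process by reducing the statement about the product process to a statement about each of its $q$ i.i.d.\ Gaussian-process factors, and then invoking the well-known large sup-norm support property of GPs with a smooth covariance kernel. Recall the PING construction: $\beta(v) = \sigma \prod_{k=1}^q \theta_k(v)$ with $\theta_1,\dots,\theta_q$ i.i.d.\ mean-zero, unit-variance GPs with covariance kernel $K(v,v')$. On the discrete/finite index set on which $\bbeta$ lives (it is a finite-dimensional random vector here), each $\theta_k$ is a nondegenerate multivariate Gaussian vector whose support is the whole ambient space (or the relevant subspace); in particular, for any target vector $\bm$ and any $\delta>0$, $\Pi(\|\btheta_k - \bm\|_\infty < \delta) > 0$.

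First I would reduce to the case $\|\bbeta_\star\|_\infty > 0$ (if $\bbeta_\star = \bzero$, take one factor near $\bzero$ and the rest near the all-ones vector, or simply note the event is an open neighborhood of an interior point). Write $\bbeta_\star = \sigma_\star \bu_\star$ where I choose $\sigma_\star$ slightly above $\|\bbeta_\star\|_\infty$ so that $\|\bu_\star\|_\infty < 1$ — this keeps the target for each GP factor strictly inside the unit box, matching the unit-variance normalization, and avoids boundary issues. Then I would aim for the factor targets $\btheta_1 \approx \bu_\star$ and $\btheta_2 = \dots = \btheta_q \approx \bone$ (the all-ones vector), and also $\sigma \approx \sigma_\star$ using the prior on the scale $\sigma$, which has full support on $(0,\infty)$. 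The second step is a quantitative continuity/perturbation estimate: if $\|\btheta_1 - \bu_\star\|_\infty < \delta$, $\|\btheta_k - \bone\|_\infty < \delta$ for $k \ge 2$, and $|\sigma - \sigma_\star| < \delta$, then by telescoping the product $\prod_k \theta_k(v)$ against $\prod_k (\text{target})_k(v)$ and bounding each factor uniformly (all factors lie in a box of radius $1+\delta$), one gets $\|\bbeta - \bbeta_\star\|_\infty \le C(\sigma_\star, q)\,\delta$ for a constant depending only on $q$ and $\sigma_\star$. Choosing $\delta = \epsilon / (2 C(\sigma_\star,q))$ makes the right-hand side less than $\epsilon$.

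Finally I would conclude by independence: the event $\{\|\btheta_1 - \bu_\star\|_\infty < \delta\} \cap \bigcap_{k\ge2}\{\|\btheta_k - \bone\|_\infty < \delta\} \cap \{|\sigma - \sigma_\star| < \delta\}$ has probability equal to the product of the individual probabilities, each of which is strictly positive (GP full-support for the $\btheta_k$ factors, full-support scale prior for $\sigma$), hence the whole intersection has positive probability; since it is contained in $\{\|\bbeta - \bbeta_\star\|_\infty < \epsilon\}$, the claim follows. The main obstacle — really the only nontrivial point — is the quantitative product-perturbation bound: one must be careful that the unit-variance normalization of the GP factors does not force the product's natural scale to collapse, which is exactly why I introduce $\bu_\star$ with $\|\bu_\star\|_\infty<1$ and absorb the magnitude into the separately-modeled scale $\sigma$, and one must verify the telescoping bound is uniform over the (finite) index set with a constant not depending on $\epsilon$. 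Everything else is a routine appeal to Gaussian full support on finite-dimensional spaces.
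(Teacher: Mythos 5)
Your proof is correct, and its skeleton — Gaussian full support of each factor on the finite index set, a telescoping perturbation bound for the product, and independence to multiply positive probabilities — is the same as the paper's. Where you differ is in the choice of factor targets. The paper factorizes the target itself as $\bbeta_{\star}=\prod_{k=1}^{q}\bbeta_{\star}^{1/q}$ and sends every $\bbeta_{k}$ to $\bbeta_{\star}^{1/q}$; since entries of $\bbeta_{\star}$ can be negative, this forces a case split on the parity of $q$ (for even $q$ the paper writes $\bbeta_{\star}=\prod_{k=1}^{q-1}\bbeta_{\star}^{1/(q-1)}$ and sends the last factor to $\bone$). You instead put all of the shape into a single factor, $\btheta_{1}\approx\bu_{\star}$, send the remaining $q-1$ factors to $\bone$, and absorb the magnitude into the scale $\sigma\approx\sigma_{\star}$. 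This buys you a uniform argument with no parity case split and no need for real $q$-th roots of possibly negative entries; the trade-off is that you invoke positive prior mass for $\sigma$ near $\sigma_{\star}$, which the paper's version does not need (its argument works with $\sigma$ fixed). That reliance is harmless here — the paper's cumulative-shrinkage construction gives $\sigma$ full support on $(0,\infty)$, and even with $\sigma$ fixed you could simply take $\bu_{\star}=\bbeta_{\star}/\sigma$ and drop the scale event. Both proofs share the same implicit hypothesis that each Gaussian factor places positive mass on sup-norm neighborhoods of the chosen target (i.e., the target lies in the support of the possibly low-rank Gaussian), which you at least flag explicitly.
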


\begin{Lem}[Large Sup-Norm Support for CSC-PING]\label{lem: sup-norm CSC-PING}
Let a $p\times q$ ($p>q$) semi-orthogonal matrix $\bA$ follow CSC-PING. 
Then, $\Pi(\big\|\bA-\bA_{\star}\big\|_{\infty}<\epsilon)>0$ for any semi-orthogonal $\bA*$ of same dimension with $\|\bA_{\star}\|_{\infty}<C$ for a constant $C>0$ and any $\epsilon>0$.  
\label{thm::largesupcsc}
\end{Lem}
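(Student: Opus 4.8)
The plan is to build the sup-norm neighborhood event one column at a time, exploiting the sequential Gram--Schmidt-type structure of the CSC-PING prior together with the large sup-norm support of the PING process already established in Lemma~\ref{lem: sup-norm PING}. Write $\bA=[\ba_{1},\dots,\ba_{q}]$ and $\bA_{\star}=[\ba_{\star,1},\dots,\ba_{\star,q}]$, and note that $\|\bA-\bA_{\star}\|_{\infty}=\max_{1\le\ell\le q}\|\ba_{\ell}-\ba_{\star,\ell}\|_{\infty}\le\max_{\ell}\|\ba_{\ell}-\ba_{\star,\ell}\|_{2}$, so it suffices to place each column in a small Euclidean ball around the corresponding target column. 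Since $\bA_{\star}$ is semi-orthogonal, after discarding any trivial (zero) columns we may assume all of its columns are nonzero; then for each $\ell$ the matrix $[\ba_{\star,1},\dots,\ba_{\star,\ell-1}]$ has full column rank with smallest singular value bounded below by a positive constant depending only on $\bA_{\star}$ (kept under control by $\|\bA_{\star}\|_{\infty}<C$), i.e.\ the first $\ell-1$ target columns form a well-conditioned basis of their span.

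I would then prove, by induction on $\ell=1,\dots,q$, the claim: \emph{for every $\rho>0$, $\Pi\big(\max_{j\le\ell}\|\ba_{j}-\ba_{\star,j}\|_{2}<\rho\big)>0$}; the lemma follows by taking $\ell=q$, $\rho=\epsilon$, since $\|\cdot\|_{\infty}\le\|\cdot\|_{2}$. The base case $\ell=1$ is immediate: the first column is a PING process, so applying Lemma~\ref{lem: sup-norm PING} with tolerance $\rho/\sqrt{p}$ (using $\|\cdot\|_{2}\le\sqrt{p}\,\|\cdot\|_{\infty}$, with $p$ the number of rows, and $\|\ba_{\star,1}\|_{\infty}\le\|\bA_{\star}\|_{\infty}<C$) gives $\Pi(\|\ba_{1}-\ba_{\star,1}\|_{2}<\rho)>0$. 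For the inductive step, fix $\rho>0$ and an auxiliary tolerance $\rho_{0}\in(0,\rho)$ to be shrunk below. By the induction hypothesis, $A:=\{\max_{j\le\ell-1}\|\ba_{j}-\ba_{\star,j}\|_{2}<\rho_{0}\}$ has positive probability. On $A$, for $\rho_{0}$ small, $[\ba_{1},\dots,\ba_{\ell-1}]$ is a small perturbation of the well-conditioned $[\ba_{\star,1},\dots,\ba_{\star,\ell-1}]$, hence of full column rank, so the CSC-PING conditional law of $\ba_{\ell}$ given $\ba_{1},\dots,\ba_{\ell-1}$ — the mean-zero Gaussian with the PING covariance conditioned on $\bGamma_{\ell}\ba_{\ell}=\bzero$ — is a nondegenerate Gaussian whose support is exactly the orthogonal complement $V_{\ell}$ of $\mathrm{span}(\ba_{1},\dots,\ba_{\ell-1})$.

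Let $\bar\ba_{\star,\ell}\in V_{\ell}$ be the orthogonal projection of $\ba_{\star,\ell}$ onto $V_{\ell}$. Since $\ba_{\star,\ell}$ is orthogonal to $\ba_{\star,1},\dots,\ba_{\star,\ell-1}$, a standard perturbation bound for orthogonal projectors, in terms of $\rho_{0}$ and the (bounded-below) smallest singular value of $[\ba_{\star,1},\dots,\ba_{\star,\ell-1}]$, gives $\|\bar\ba_{\star,\ell}-\ba_{\star,\ell}\|_{2}\le\omega_{\ell}(\rho_{0})$ with $\omega_{\ell}(\rho_{0})\to0$ as $\rho_{0}\to0$; choose $\rho_{0}$ small enough that $\omega_{\ell}(\rho_{0})<\rho$. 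Because the conditional law of $\ba_{\ell}$ is nondegenerate Gaussian with support $V_{\ell}\ni\bar\ba_{\star,\ell}$, it assigns positive conditional probability to $\{\|\ba_{\ell}-\bar\ba_{\star,\ell}\|_{2}<\rho-\omega_{\ell}(\rho_{0})\}$, and on that event
\[
\|\ba_{\ell}-\ba_{\star,\ell}\|_{2}\le\|\ba_{\ell}-\bar\ba_{\star,\ell}\|_{2}+\|\bar\ba_{\star,\ell}-\ba_{\star,\ell}\|_{2}<\rho,
\]
while $\|\ba_{j}-\ba_{\star,j}\|_{2}<\rho_{0}<\rho$ for $j<\ell$ on $A$. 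Integrating this positive conditional probability over $A$ (tower property) yields $\Pi\big(\max_{j\le\ell}\|\ba_{j}-\ba_{\star,j}\|_{2}<\rho\big)>0$, completing the induction and hence the proof. Any internal hyperparameters of the prior — the cumulative-shrinkage scales $\sigma^{(s)}_{z_{s}}$ and the PING component count — enter only as mixing variables with full support, so one may restrict to a small positive-probability neighborhood of any convenient value without affecting the argument (and Lemma~\ref{lem: sup-norm PING} already absorbs this randomness for the first column).

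I expect the main obstacle to be the inductive step: making the orthogonal-projector perturbation estimate quantitatively clean — i.e.\ producing an explicit modulus $\omega_{\ell}$ with $\omega_{\ell}(0^{+})=0$ — while \emph{simultaneously} guaranteeing that the conditioned Gaussian on the \emph{random} subspace $V_{\ell}$ remains nondegenerate. The latter is handled by observing that on $A$ the random columns inherit the well-conditioning of the target columns, so $\bGamma_{\ell}$ has full row rank and the conditional covariance is positive definite on $V_{\ell}$; the former is where the nonzero-column reduction and $\|\bA_{\star}\|_{\infty}<C$ are used, since a semi-orthogonal $\bA_{\star}$ need not have orthonormal columns and the bounds must be phrased via $\sigma_{\min}([\ba_{\star,1},\dots,\ba_{\star,\ell-1}])$. (If for $q_{s}>1$ one prefers to treat the conditional column law as a PING restricted to $V_{\ell}$ rather than as the Gaussian written in the construction, the conclusion is unchanged, as such a restriction likewise has full relative support on $V_{\ell}$.)
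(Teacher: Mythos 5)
Your proof is correct and follows essentially the same route as the paper's: an induction over columns in which the target column $\ba_{\star,\ell}$ is compared to its projection onto the orthogonal complement of the already-sampled columns, the projector perturbation is controlled by the closeness of those earlier columns to their targets, and the conditional (subspace-restricted) Gaussian/PING support on that complement supplies the remaining positive probability. Your write-up is more explicit about nondegeneracy and the singular-value conditioning, but your decomposition $\|\ba_{\ell}-\ba_{\star,\ell}\|\le\|\ba_{\ell}-(\bI-\bP)\ba_{\star,\ell}\|+\|(\bP_{\star}-\bP)\ba_{\star,\ell}\|$ is exactly the three-term bound the paper uses.
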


\begin{Thm}
Under Assumptions~\ref{assmp::reg_coef}, 
for every $\epsilon>0$,
\vspace*{-7ex}\\
\bse
\textstyle \eE_{0}\left[\Pi_{N}\left(\frac{1}{N}\sum_{i}\|\bmu^{(i)}-\bmu_{0}^{(i)}\|_{2}^{2}>\epsilon \mid \bY^{(1:N)} \right)\right] \rightarrow 0 \text{ as } N \rightarrow \infty.
\ese
\vspace*{-8ex}
\label{thm: consistency}
\end{Thm}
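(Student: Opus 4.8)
The plan is to verify the general posterior consistency criterion for independent, non-identically distributed observations (e.g., \citealp{GhosalBook}, Theorem 8.23, adapted to the $L_2$-type averaged distance here), which reduces the claim to three ingredients: (i) a \emph{prior mass} (Kullback--Leibler) condition, ensuring that the prior puts enough mass on neighborhoods of $\bTheta_0$; (ii) a \emph{sieve construction} with controlled metric entropy; and (iii) \emph{existence of exponentially powerful tests} separating $\bTheta_0$ from the complement of the desired neighborhood over the sieve. Because each $\wt\bY^{(i)}\sim\MVN(\bmu^{(i)},\bSigma^{(i)})$, the KL divergence and the test-construction bounds can both be expressed through $\|\bmu^{(i)}-\bmu_0^{(i)}\|_2$ and $\|\bSigma^{(i)}-\bSigma_0^{(i)}\|$, provided the eigenvalues of $\bSigma_{h_g}^{(i)}$ and $\bSigma_{h_g,0}^{(i)}$ stay bounded away from $0$ and $\infty$ — this is exactly what the extra cross-product clause in Assumption~\ref{assmp::reg_coef}(1) buys us (together with the $\Ga$ priors on $\sigma_\epsilon^{-2}$ ensuring a uniform lower bound on the noise floor). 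I would first record this eigenvalue control as a preliminary lemma, then handle (i), (ii), (iii) in turn.

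\textbf{Step 1 (prior mass).} For the KL condition, I would show $\Pi(\text{KL ball around }\bTheta_0)>0$, and more precisely get the quantitative rate needed: the prior assigns mass at least $e^{-c N \epsilon_N^2}$ to a set on which the average KL divergence and its variance are $\lesssim \epsilon_N^2$, for a suitable $\epsilon_N\to 0$ with $N\epsilon_N^2\to\infty$. Here I would invoke Lemma~\ref{lem: sup-norm CSC-PING} to approximate each true semi-orthogonal mode matrix $\bA_{\alpha,0}^{(s)},\bA_{\beta,0}^{(s)}$ in sup-norm by draws from the CSC-PING prior (using Assumption (2) on Hölder smoothness to control the low-rank Gaussian-basis approximation error and Lemma~\ref{lem: sup-norm PING} for the PING components), approximate $\bC_\alpha,\bC_\beta$ by their Gaussian priors, and approximate the variance tensors $\bSigma_\alpha,\bSigma_\beta,\sigma_\epsilon^2$ by the inverse-Gamma priors. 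Small perturbations of all mode matrices and cores propagate to small perturbations of $\bmu^{(i)}$ and $\bSigma^{(i)}$ (Kronecker products of bounded factors, using Assumptions (1),(3),(4) for the dimensions/ranks), which, combined with the eigenvalue bounds, gives the KL estimate. The polynomial growth in $N$ of $d_s$ and $r_s$ (Assumptions (3),(4)) only costs logarithmic factors, absorbable into $\epsilon_N$.

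\textbf{Step 2 (sieve and entropy) and Step 3 (tests).} I would take the sieve $\mathcal{F}_N$ to be parameters with $\|\bA^{(s)}\|_\infty$, $\|\bC\|_\infty$, $\|\bSigma\|_\infty$ bounded by slowly growing constants and $\sigma_\epsilon^2$ in a compact interval bounded away from $0$; the prior probability of its complement is exponentially small by the Gaussian/Gamma tails (and the cumulative-shrinkage component is dropped, per the remark after Assumption~\ref{assmp::reg_coef}(4), simplifying this). On $\mathcal{F}_N$ the map from $\bTheta$ to $(\bmu^{(i)},\bSigma^{(i)})_i$ is Lipschitz in sup-norm with a polynomial-in-$N$ constant, so a sup-norm covering of the (finite-dimensional, dimension $\lesssim \text{poly}(N)$) parameter cube at scale $e^{-cN\epsilon_N^2}$ yields $\log$-covering number $\lesssim N\epsilon_N^2$. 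Finally, for the tests I would use the standard Gaussian likelihood-ratio / Hellinger testing bounds for multivariate normals with controlled eigenvalues: for each $\bTheta$ in the sieve with $\frac1N\sum_i\|\bmu^{(i)}-\bmu_0^{(i)}\|_2^2>\epsilon$, there is a test with type-I and type-II errors $\le e^{-c' N \epsilon}$, and these patch together over the covering. Combining Steps 1--3 via the Ghosal--van der Vaart machinery gives the stated convergence.

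\textbf{Main obstacle.} The delicate point is controlling the \emph{covariance} contribution uniformly — i.e., ensuring that the averaged KL neighborhoods in Step 1 and the test separation in Step 3 are genuinely governed by $\frac1N\sum_i\|\bmu^{(i)}-\bmu_0^{(i)}\|_2^2$ and not swamped by covariance mismatch, since the theorem's conclusion is only about the mean. This requires (a) the uniform eigenvalue bounds on $\bSigma^{(i)}$ over the sieve, which is precisely why the cross-product clause in Assumption (1) is needed and why the argument would break without some control on random-effect variances; and (b) showing the prior places adequate mass near the true $\bSigma_0^{(i)}$ so that the KL condition can be met simultaneously for mean and covariance. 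Handling the Kronecker-product structure of $\bLambda_{\alpha,h_g},\bLambda_{\beta,h_g}$ when bounding perturbations — keeping track of how errors in each mode matrix and in $\bB^{(i)}$ compound across the $p$ modes and the B-spline block — is the main technical bookkeeping, but it is routine once the eigenvalue control and the sup-norm support lemmas are in hand.
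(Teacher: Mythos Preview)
Your plan is essentially the paper's own: verify a Schwartz/Ghosal--van der Vaart consistency criterion via (i) KL prior mass through the CSC-PING sup-norm support lemmas and basis-approximation bounds, (ii) a sieve with polynomial-in-$N$ entropy, and (iii) exponentially consistent tests, all under eigenvalue control of the Gaussian covariances. The paper invokes Theorem~6.39 of \cite{GhosalBook} rather than 8.23, but the ingredients are the same.

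One point deserves care. The paper does \emph{not} construct tests directly against alternatives with large $\tfrac1N\sum_i\|\bmu^{(i)}-\bmu_0^{(i)}\|_2^2$ as you propose; instead it uses the simple likelihood-ratio test $\mathbf{1}\{Q_{\bTheta_1}/Q_{\bTheta_0}>1\}$ whose error is $\exp\{-N d_R(Q_{\bTheta_1},Q_{\bTheta_0})\}$, patches type~II over each covering piece via Cauchy--Schwarz with the bounded $\chi^2$-factor $\eE_{\bTheta_1}(Q_{\bTheta_2}/Q_{\bTheta_1})^2$ from Lemma~\ref{lem: normal divergences}(4), and only \emph{afterwards} converts R\'enyi consistency to mean consistency by citing Part~II of the proof of Theorem~3.1 in \cite{ning2020bayesian}. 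This matters because your route requires uniform eigenvalue bounds on $\bSigma^{(i)}$ \emph{over the sieve}, but Assumption~\ref{assmp::reg_coef}(1) concerns only the \emph{true} mode matrices; the paper's Lemma~\ref{lem:eigenvalues} accordingly bounds only $\bSigma_{h_g,0}$, and on the sieve $\W_N$ of Lemma~\ref{lem: covering} the entries grow like $N^c$, so $\|\bSigma^{(i)}\|_{op}$ is not uniformly bounded. Your direct-testing step would therefore need either a tighter sieve or an explicit argument handling the growing covariance norm; the paper's R\'enyi-then-Ning route sidesteps this by isolating the mean contribution inside $d_R$ and appealing to Ning's comparison.
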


Since $\{n\|\balpha_{h_{g}}-\balpha_{0,h_{g}}\|^{2}_{F}+\sum_{k=1}^{n}\|\bbeta_{h_{g},t_{k}}-\bbeta_{0,h_{g},t_{k}}\|^{2}_{F}\} \lesssim \frac{1}{N}\sum_{i}\|\bmu^{(i)}-\bmu_{0}^{(i)}\|_{2}^{2}$ 
when $N_{h_{g}}/N\rightarrow p_{h_{g}}$, 
Theorem~\ref{thm: consistency} straightforwardly extends to the following Corollary, 
establishing consistent recovery separately for each group as well.

\begin{Cor}
Furthermore, if $N_{h_{g}}/N\rightarrow p_{h_{g}}$, where $p_{h_{g}}\in(0,1)$ as $N \rightarrow \infty$, we have 
\vspace*{-7ex}\\
\bse
\textstyle \eE_{0}\left[\Pi_{N}\left(n\|\balpha_{1,g}-\balpha_{0,g}\|^{2}_{F}+ \sum_{k=1}^{n}\|\bbeta_{1,t_{k},g}-\bbeta_{0,t_{k},g}\|^{2}_{F}>\epsilon\mid\bY^{(1:N)} \right)\right] \rightarrow 0.
\ese
\vspace*{-8ex}
\end{Cor}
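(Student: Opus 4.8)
The plan is to derive the per-group statement directly from Theorem~\ref{thm: consistency} through a short event-inclusion argument, using the non-negativity of the group-wise summands together with the hypothesis $p_{h_{g}}>0$; no new analytic machinery is needed beyond the Theorem itself. First I would recall the group decomposition recorded just above the Corollary,
\[
\frac{1}{N}\sum_{i}\|\bmu^{(i)}-\bmu_{0}^{(i)}\|_{2}^{2}
= \sum_{h_{g}=1}^{d_{g}} \frac{N_{h_{g}}}{N}\left\{ n\|\balpha_{h_{g}}-\balpha_{0,h_{g}}\|^{2}_{F} + \textstyle\sum_{k=1}^{n}\|\bbeta_{h_{g},t_{k}}-\bbeta_{0,h_{g},t_{k}}\|^{2}_{F}\right\},
\]
and fix an arbitrary group $g$, abbreviating the bracketed quantity for that group by $T_{g}=n\|\balpha_{g}-\balpha_{0,g}\|^{2}_{F} + \sum_{k=1}^{n}\|\bbeta_{g,t_{k}}-\bbeta_{0,g,t_{k}}\|^{2}_{F}$, which is precisely the object appearing in the Corollary. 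Since every summand on the right is non-negative, discarding all terms except $h_{g}=g$ gives the pointwise lower bound $\frac{N_{g}}{N}\,T_{g}\le \frac{1}{N}\sum_{i}\|\bmu^{(i)}-\bmu_{0}^{(i)}\|_{2}^{2}$.

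Next I would upgrade this to a deterministic domination with a constant independent of both the parameters and $N$. Because $N_{g}/N\to p_{g}>0$, there is an $N_{0}$ with $N_{g}/N\ge p_{g}/2$ for all $N\ge N_{0}$, whence $T_{g}\le (2/p_{g})\,\frac{1}{N}\sum_{i}\|\bmu^{(i)}-\bmu_{0}^{(i)}\|_{2}^{2}$ for all such $N$; this is exactly the relation $T_{g}\lesssim \frac{1}{N}\sum_{i}\|\bmu^{(i)}-\bmu_{0}^{(i)}\|_{2}^{2}$ already asserted in the text, with the implicit constant $2/p_{g}$ made explicit. As the inequality holds at every parameter value, it holds almost surely under the posterior, so for $N\ge N_{0}$ the event $\{T_{g}>\epsilon\}$ is contained in $\{\frac{1}{N}\sum_{i}\|\bmu^{(i)}-\bmu_{0}^{(i)}\|_{2}^{2}>\epsilon p_{g}/2\}$. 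Monotonicity of the posterior measure then yields
\[
\Pi_{N}\!\left(T_{g}>\epsilon\mid\bY^{(1:N)}\right)\le \Pi_{N}\!\left(\frac{1}{N}\textstyle\sum_{i}\|\bmu^{(i)}-\bmu_{0}^{(i)}\|_{2}^{2}>\epsilon p_{g}/2 \,\Big|\, \bY^{(1:N)}\right).
\]

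Finally I would take $\eE_{0}$ of both sides and invoke Theorem~\ref{thm: consistency} at the fixed threshold $\epsilon'=\epsilon p_{g}/2>0$; the right-hand side then tends to $0$, forcing the left-hand side to do the same. Since $g$ was arbitrary, this establishes the Corollary for each group. The argument is essentially routine, and the only delicate point---indeed the sole place the hypothesis $p_{h_{g}}\in(0,1)$ enters---is that the comparison constant $2/p_{g}$ be uniform in $N$; this is secured by the limit $N_{g}/N\to p_{g}$ being strictly positive, which bounds $N/N_{g}$ for large $N$. Were some $p_{g}=0$, the constant would diverge and the reduction would fail, consistent with the fact that a vanishing group cannot be expected to be recovered consistently.
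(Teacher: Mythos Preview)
Your proof is correct and follows exactly the approach sketched in the paper: the paper simply notes that $n\|\balpha_{h_{g}}-\balpha_{0,h_{g}}\|^{2}_{F}+\sum_{k=1}^{n}\|\bbeta_{h_{g},t_{k}}-\bbeta_{0,h_{g},t_{k}}\|^{2}_{F} \lesssim \frac{1}{N}\sum_{i}\|\bmu^{(i)}-\bmu_{0}^{(i)}\|_{2}^{2}$ when $N_{h_{g}}/N\rightarrow p_{h_{g}}$ and declares the Corollary to follow straightforwardly from Theorem~\ref{thm: consistency}. You have simply made explicit the event-inclusion and the constant $2/p_{g}$ that underlie the paper's $\lesssim$ symbol.
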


The proofs of the Lemmas \ref{lem: sup-norm PING}, \ref{lem: sup-norm CSC-PING}, and Theorem \ref{thm: consistency} are 
presented in Section \ref{sec: sm proofs} in the SM. 
It is based on applying Theorem~6.39 of \cite{GhosalBook} which is an extension of
Schwartz's original result \citep{schwartz1965bayes} with both the truth and the priors depending on the sample size. 
Here, we verify the required testing conditions directly. 

\vspace*{-3ex} 
\section{ADNI-3 Data Analysis} \label{sec: adni analysis}
\vspace*{-1ex} 
Here we discuss the analysis of ADNI-3 DW-MRI data using our method. 
Additional details about the scientific background and the data set can be found in Section \ref{sec: sm sci bg} in the SM.

The raw DW-MRI data were processed, as described in detail in Section \ref{sec: sm sci bg} in the SM, producing $90 \times 108 \times 90$ initial dimensional images for each case. 
Since statistical analyses of neuroimaging data can be adversely influenced by empty voxels outside the brain, 
we followed \cite{feng2021brain} and 
cropped out these parts, 
resulting in final images of dimension $60\times 70\times 60$. 
The observation times are scaled to [0,1] by dividing observed times with the highest observation time of our study cohort. 

While our method generates detailed voxel-level estimates for the whole brain, 
voxel-wise visualizations are not easily discernible. 
We thus present here whole-brain-averaged and specific-region-averaged estimates for 
summarized inference. 
Specifically, we focus on five white-matter dense regions in the main paper: corpus callosum (CC), right corticospinal tract (CST), left CST, right frontal-parietal-temporal (FPT), and left FPT. 
Among these, CC is the largest, with around 13 thousand voxels. The other regions on average consist of $4$-$7$ thousand voxels.
In Section S.3 in the SM, we provide results similar to Tables~\ref{tab: arc-length} and \ref{tab: CGD} for a total of 21 tracts including these 5.

\begin{figure}[htbp]
\centering
\includegraphics[width = 0.45\textwidth, clip=true]{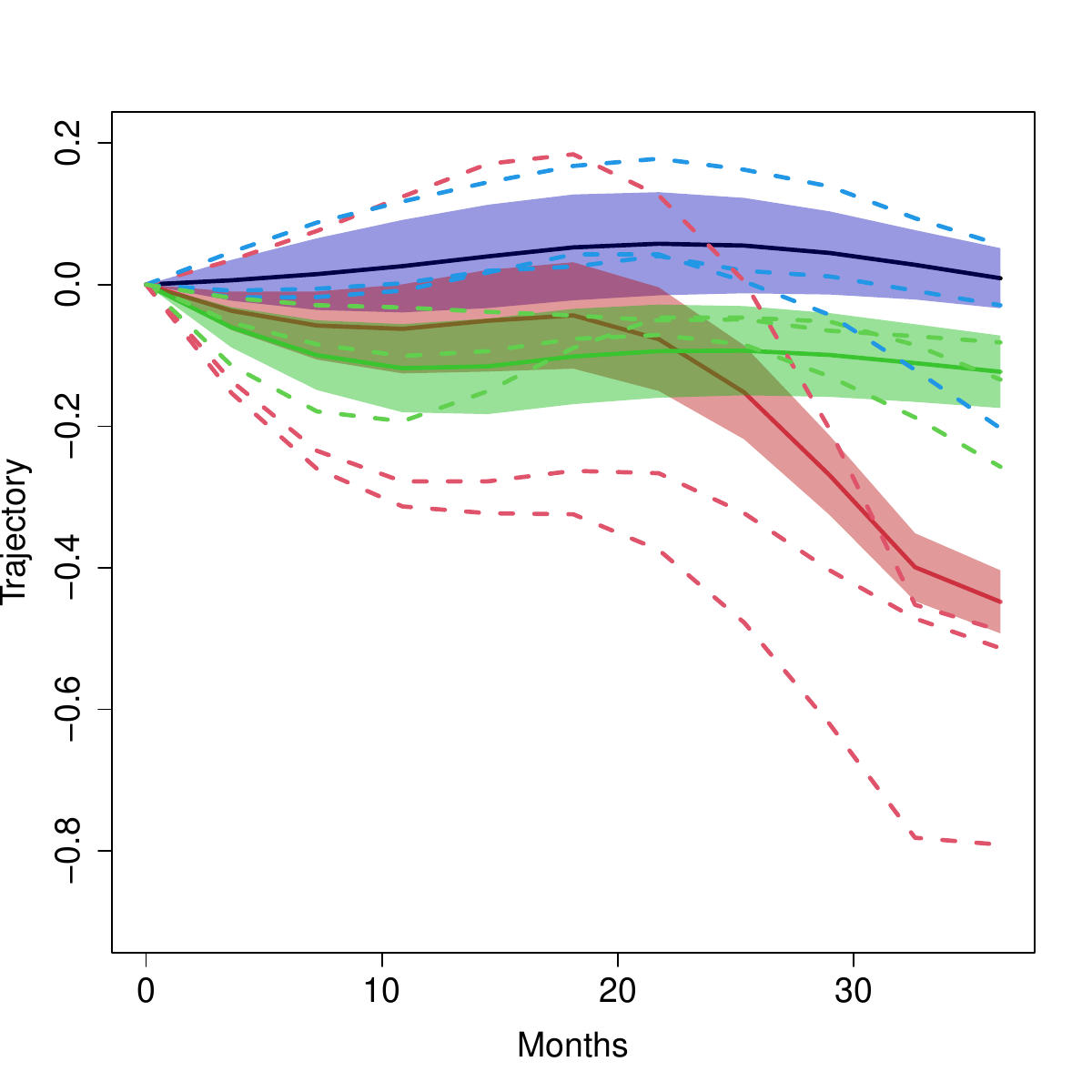}\quad\quad
\includegraphics[width = 0.45\textwidth, clip=true]{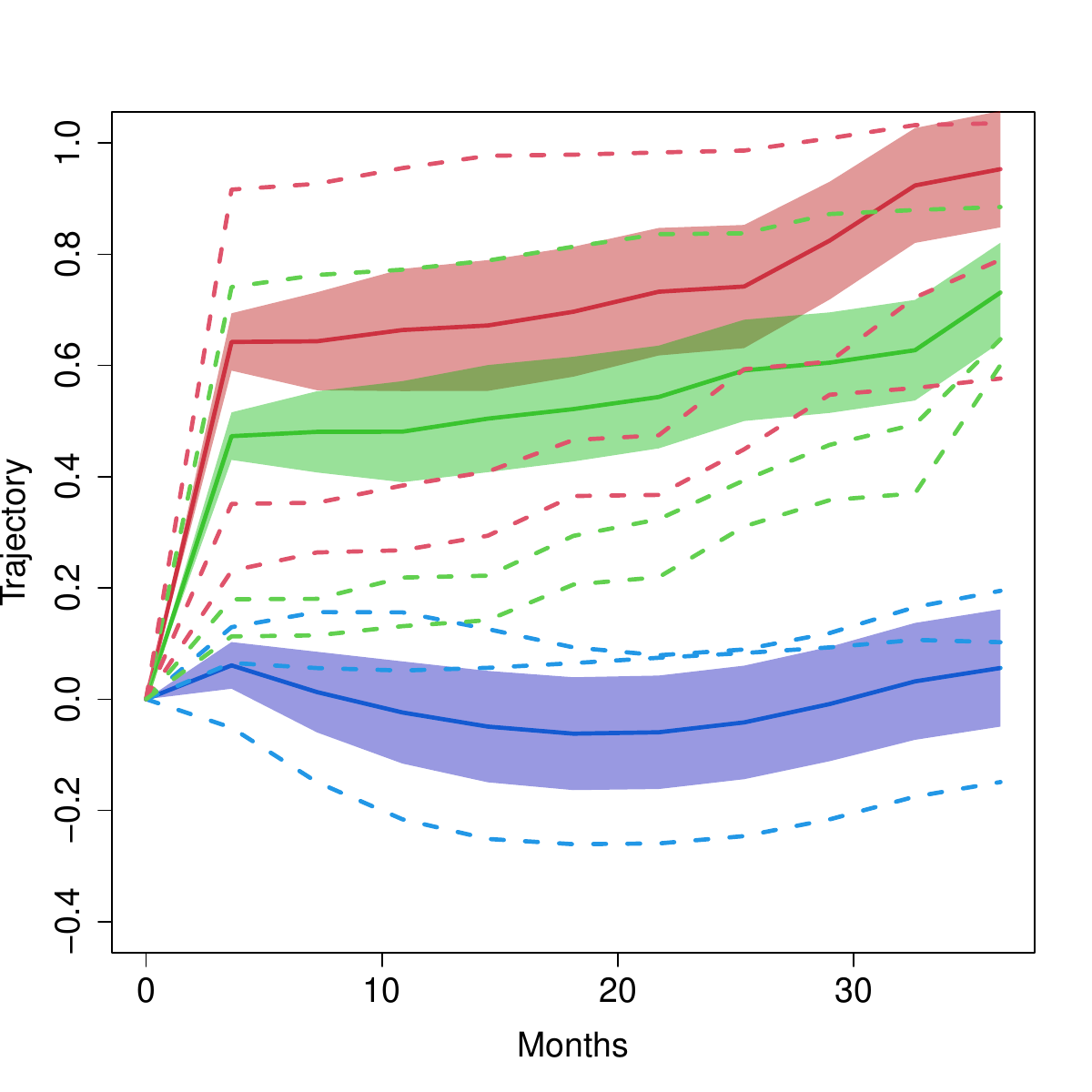}
\vspace*{-5pt}
\caption{Inference for $\beta$'s: Estimated median trajectories of FA (left panel) and ODI (right panel) in the corpus callosum (CC) region for normal cognition (NC, blue),  mild cognitive impairment (MCI, green), and Alzheimer's (AD, red) subjects. 
The solid lines show the population-level trajectories; 
the shaded regions show the corresponding 90\% point-wise credible intervals; 
the dotted lines show three randomly selected subjects from each group. 
\vspace*{-10pt}
}
\label{fig: realFA2 and realODI2}
\end{figure}

Analyses for the baselines, captured by the $\alpha$'s, are in Section \ref{sec: sm add figs real} in the SM, where we find the ordering of the groups based on baseline estimates in the CC region to broadly align with their respective degrees of cognitive decline that define these groups, specifically for ODI (Figure~\ref{fig: realbaseFA2 and realbaseODI2}). 
For the rest of this section, we focus on the longitudinal changes over the study period, 
captured by the $\beta$'s, 
where our main interests lie. 
They all start at zero due to the separability condition imposed on them (Section \ref{sec: models}). 
Figure~\ref{fig: realFA2 and realODI2} illustrates the estimated median trajectories of the CC region evaluated by taking the median over the approximately 14 thousand voxels in CC. 
Figure~\ref{fig: realFWF2 and realNDI2} in the SM provide similar results for FWF and NDI. 
We plot these trajectories up to 36 months since observations beyond that point are extremely sparse. 
Figure \ref{fig: realFA_and_ODI} shows the summarized differences across different tracts in FA and ODI. 
Figure \ref{fig: realFWF_and_NDI} in the SM provides similar figures for FWF and NDI. 
To facilitate comparison across the four markers, these figures all use the same scale. 
We also report the arc length of their estimated population-level longitudinal trajectories and the cross-group differences (CGD) for each pair of groups over the study period. 
These markers are computed as 
\vspace*{-7ex}\\
\bse
& \text{Arc-Length}_{\beta}(h_{g}) = \frac{1}{Td_{1}d_{2}d_{3}}\sum_{h_{1},h_{2},h_{3}}\sum_{\ell=1}\trans\abs{\frac{d}{dt} \beta(h_{g},h_{1},h_{2},h_{3},t)}_{t=\ell/T},\\
& \text{CGD}_{\beta}(h_{g},h_{g}^{\prime}) = \frac{1}{Td_{1}d_{2}d_{3}}\sum_{h_{1},h_{2},h_{3}} \sum_{\ell=1}\trans\{\beta(h_{g},h_{1},h_{2},h_{3},\ell/T)-\beta(h_{g}^{\prime},h_{1},h_{2},h_{3},\ell/T)\}^{2}. 
\ese
\vspace*{-7ex}\\
Tables \ref{tab: arc-length} and \ref{tab: CGD} report these values computed with $T=20$ equidistant time points. 

Our analysis suggests that (1) the longitudinal changes are highly non-linear with considerable individual heterogeneity, justifying a nonlinear mixed effects model. 
(2) 
FA, ODI, and FWF show greater variations than NDI; here for brevity, we focus mainly on the first two markers and the changes in the CC region. 
(3) 
The largest changes are observed in the CC brain region.
(4) 
The trajectories also change the most for AD patients; they do not change much overall for the NC subjects; 
and for the MCI subjects, the change is positive but smaller than the AD individuals in magnitude.
(5)
The AD patients' FA trajectories show faster decay than the other two groups; conversely, their ODI trajectories increase, and they increase faster than the others. 
This aligns with the existing understanding of AD-related differences compared to NC and MCI subjects. 
(6) 
While the ODI trajectories show slow but persistent changes over time, FA shows drastic changes after a period of early slow trend.
(7) 
The ODI trajectories for different disease groups are consistently separated throughout the entire period. 
In contrast, the FA trajectories between the groups are similar in the initial part and then exhibit large differences.
(8) The FWF trajectory shows an increasing pattern in CC as shown in Figure S.1. Although the pattern is very similar to ODI, FWF seems to increase moderately even for the NC subjects, showing some effect of aging. 
However, the changes in NDI are highly non-linear and show an overall decreasing pattern among the MCI and AD subjects.  
(9) 
In general, as with the case of the initial values, the AD group is well-separated from the NC and MCI groups in terms of their longitudinal changes. 
The separations between the NC and AD trajectories and the AD and MCI trajectories are comparable but are considerably smaller for the NC and MCI groups across all four markers. 
(10) Cross-group differences are in general monotonic as shown in Table~\ref{tab: CGD}. 
(11) Additional results for a total of 21 white matter tracts similar to Tables~\ref{tab: arc-length} and \ref{tab: CGD}, presented in Section S.3 in the SM (Tables S.1-S.42), 
also exhibit similar patterns. 

\vspace*{-10pt}
\begin{table}[htbp]
\centering
\caption{$\text{Arc-Length}_{\beta}(h_{g})$ for different disease status groups $h_{g}$ across the whole brain.}
\begin{tabular}{rrrr}
  \hline
 & NC & MCI & AD \\ 
  \hline
FA & 0.589 & 0.642 & 0.909 \\ 
  ODI & 0.750 & 0.853 & 1.012 \\ 
  FWF & 0.687 & 0.733 & 0.990 \\ 
  NDI & 0.494 & 0.568 & 0.718 \\ 
   \hline
\end{tabular}
\label{tab: arc-length}
\end{table}
\vspace*{-15pt}

\begin{table}[htbp]
\centering
\caption{$\text{CGD}_{\beta}(h_{g},h_{g}^{\prime})$ between pairs of disease status groups $(h_{g},h_{g}^{\prime})$  across the whole brain.}
\begin{tabular}{rrrr}
  \hline
  & NC-AD & AD-MCI & NC-MCI \\ 
  \hline
  FA  & 0.061 & 0.059 & 0.030 \\ 
  ODI & 0.062 & 0.054 & 0.019 \\ 
  FWF & 0.064 & 0.054 & 0.020 \\ 
  NDI & 0.043 & 0.037 & 0.014 \\  
   \hline
\end{tabular}
\label{tab: CGD}
\end{table}


\begin{figure}[!ht]
\centering
\begin{minipage}{0.47\textwidth}
\subfigure[AD vs NC]{\includegraphics[width = 0.4\textwidth, trim=1cm 1cm 1cm 2cm, clip=true]{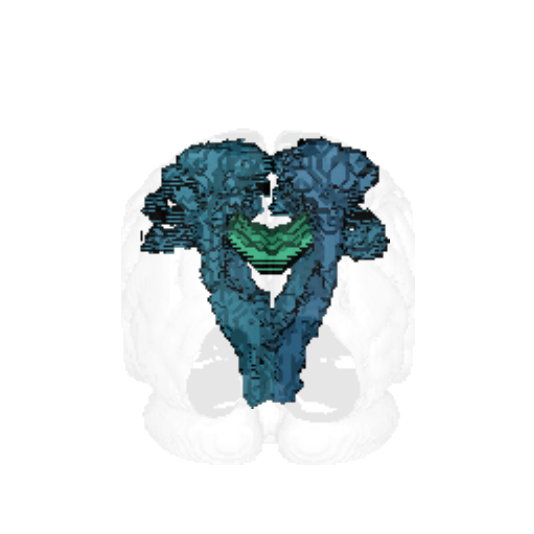}}
\subfigure[AD vs NC]{\includegraphics[width = 0.35\textwidth, trim=1cm 1cm 1cm 1cm, clip=true]{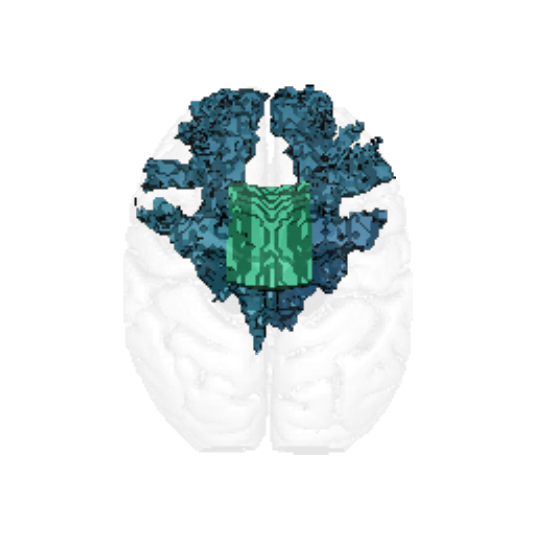}}\\
\vspace{-0pt} 
\subfigure[AD vs MCI]{\includegraphics[width = 0.4\textwidth, trim=1cm 1cm 1cm 2cm, clip=true]{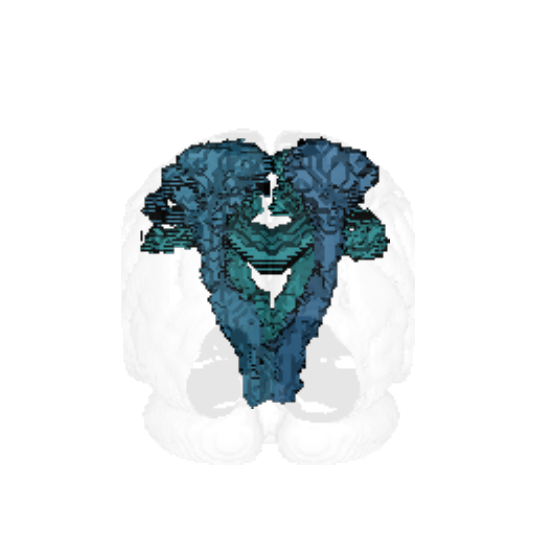}}
\subfigure[AD vs MCI]{\includegraphics[width = 0.35\textwidth, trim=1cm 1cm 1cm 1cm, clip=true]{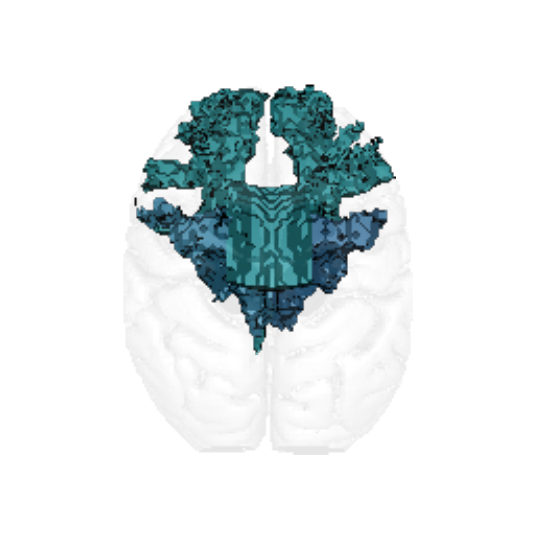}}\\
\vspace{-0pt} 
\subfigure[MCI vs NC]{\includegraphics[width = 0.4\textwidth, trim=1cm 1cm 1cm 2cm, clip=true]{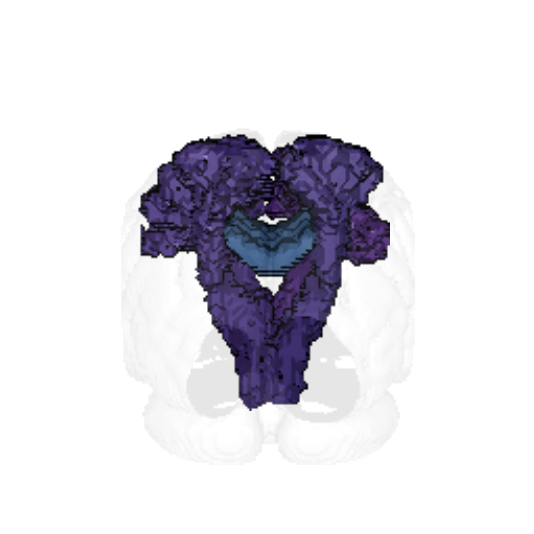}}
\subfigure[MCI vs NC]{\includegraphics[width = 0.35\textwidth, trim=1cm 1cm 1cm 1cm, clip=true]{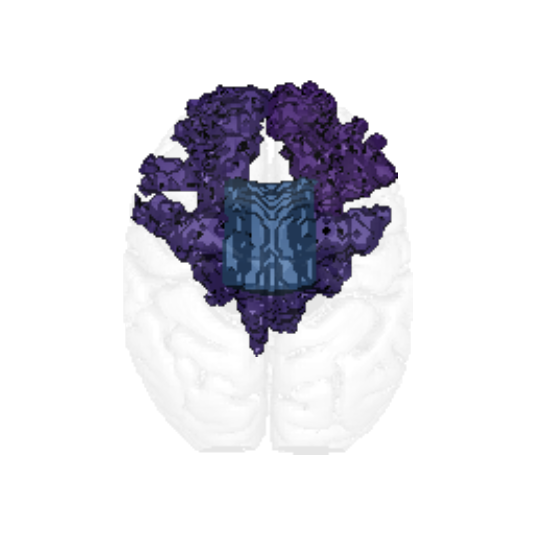}}
\end{minipage}
\begin{minipage}{0.04\textwidth}
\hspace*{-1cm}\subfigure{\includegraphics[width=3\textwidth, trim=10cm 0cm 0cm 0cm, clip=true]{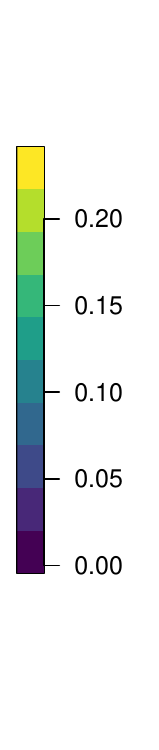}}
\end{minipage}
\begin{minipage}{0.47\textwidth}
\subfigure[AD vs NC]{\includegraphics[width = 0.4\textwidth, trim=1cm 1cm 1cm 2cm, clip=true]{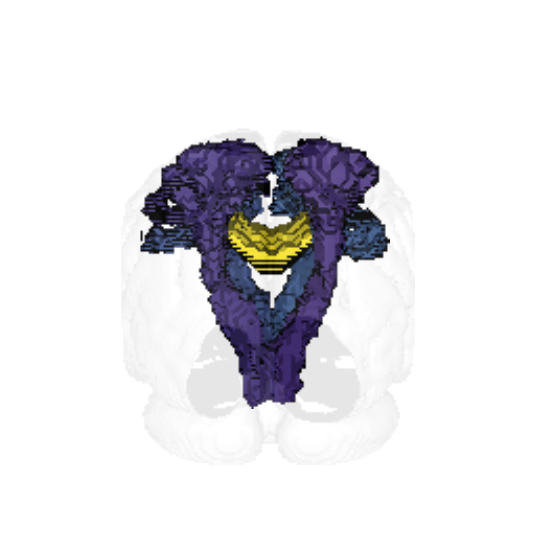}}
\subfigure[AD vs NC]{\includegraphics[width = 0.35\textwidth, trim=1cm 1cm 1cm 1cm, clip=true]{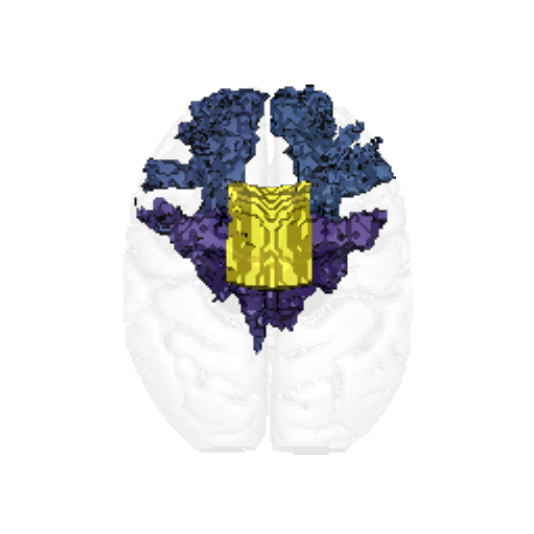}}\\
\vspace{-0pt} 
\subfigure[AD vs MCI]{\includegraphics[width = 0.4\textwidth, trim=1cm 1cm 1cm 2cm, clip=true]{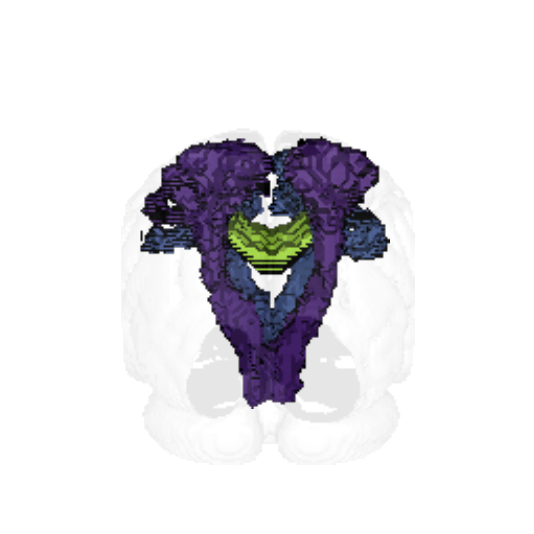}}
\subfigure[AD vs MCI]{\includegraphics[width = 0.35\textwidth, trim=1cm 1cm 1cm 1cm, clip=true]{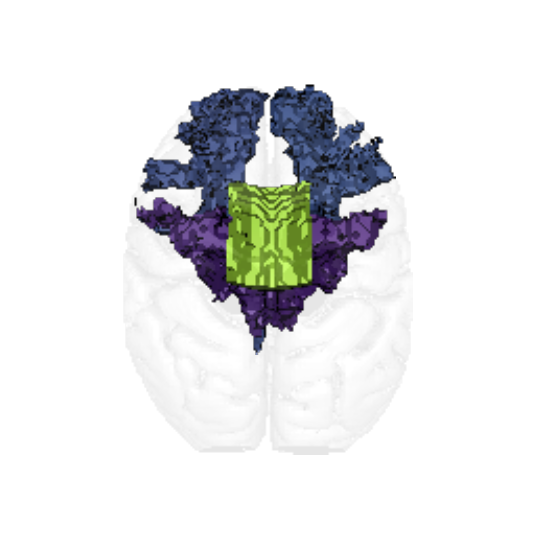}}\\
\vspace{-0pt}
\subfigure[MCI vs NC]{\includegraphics[width = 0.4\textwidth, trim=1cm 1cm 1cm 2cm, clip=true]{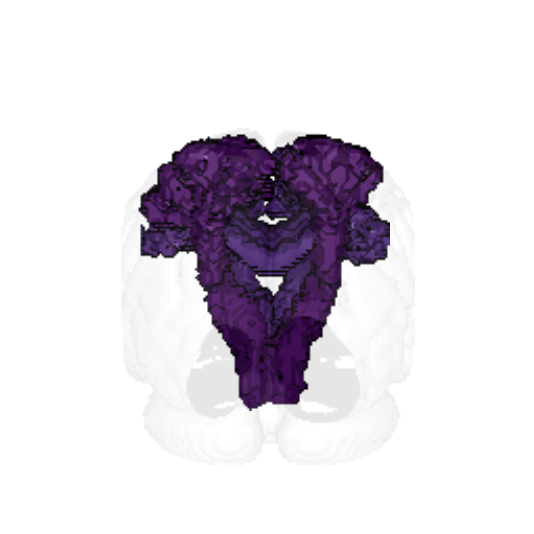}}
\subfigure[MCI vs NC]{\includegraphics[width = 0.35\textwidth, trim=1cm 1cm 1cm 1cm, clip=true]{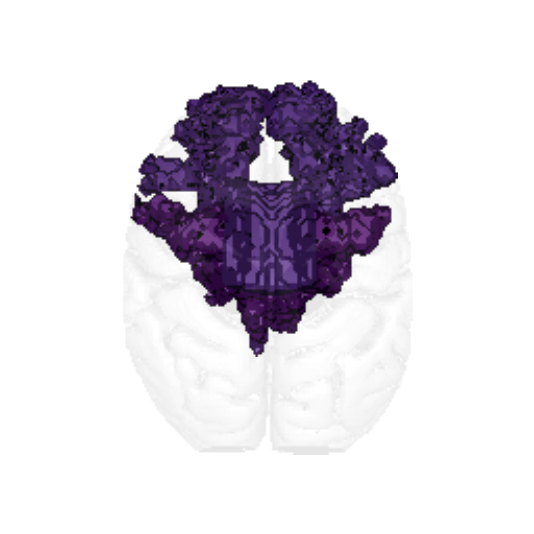}}
\end{minipage}
\caption{Inference for $\beta$'s: Summarized differences across five different tracts (left and right corticospinal tracts, left and right frontopontine tracts, and corpus callosum) in FA (sub-panels (a)-(f) to the left of the scale) and ODI (sub-panels (h)-(m) to the right of the scale) from two different angles.}
\label{fig: realFA_and_ODI}
\end{figure}

To our knowledge, a detailed longitudinal mixed-effects analysis of ADNI-3 DW-MRI data has not been conducted previously. 
Our general findings align with the existing literature \citep{bodini2014diffusion,assaf2017dMRI}, while the finer details are largely novel, offering deeper insights into disease development and progression, subject heterogeneity, and the relative efficacy of various DW-MRI-extracted markers in studying these questions.

\vspace*{-3ex} 
\section{Simulation Study} \label{sec: sim study}
\vspace*{-1ex} 

In our simulation design, 
we focus on investigating the gains achieved by considering our flexible and scalable compact HOSVD-based model with its smoothly varying functional mode matrices over smooth and non-smooth CP-based models as well as a non-smooth compact HOSVD-based construction. 
We recall here that the compact HOSVD is an equivalent representation of Tucker decomposition but with semi-orthogonal mode matrices (Lemma \ref{lem: Tucker-HOSVD}) which facilitates posterior computation (Section \ref{sec: sm post comp} in the SM). 
For a fair comparison, we consider a mixed-effect longitudinal CP factorized model as an alternative with different ranks. 
Additionally, we also fit a voxel-wise mixed model, referred to as the VMW model. 
%
%
We designed our experiments to closely replicate real-world settings. 
Details on the experimental design, including our synthetic data generating models, additional information on the CP and VMW models, etc., are provided in Section \ref{sec: sm sim add details and figs} in the SM.

\begin{table}[htbp]
\centering
\begin{tabular}{|r|rr|rr|r|} \hline 
 & \multicolumn{5}{c|}{MSE for $\bbeta$}  \\ \hline 
 & \multicolumn{2}{c|}{CP model} & \multicolumn{2}{c|}{HOSVD model} & {VWM Model} \\ \hline
& $r=5$ & $r=10$  & GLNS   & GLS & \\ 
  \hline
 Non-zero locations & 2482.9437 & 4.0649  & 0.0050 & 0.0052 & 125.49\\ 
  Zero locations & 5820.1787 & 0.1793 & 0.0040 & 0.0022 & 41.35\\ 
   \hline
\end{tabular}
\caption{MSE in estimating $\bbeta$ 
for a CP model with ranks $r_{\alpha} = r_{\beta} = r$, 
and our proposed HOSVD model with ranks $(r_{\alpha,g},r_{\alpha,1},r_{\alpha,2},r_{\alpha,3}) = (3,5,5,5)$ and $(r_{\beta,g},r_{\beta,1},r_{\beta,2},r_{\beta,3},r_{\beta,t}) = (3,5,5,5,4)$. 
Here GLNS=Non-Smooth-Laplacian, GLS=Smooth-Laplacian with $q=3$.} 
\label{tab: MSE}
\vspace*{-10pt}
\end{table}

Table \ref{tab: MSE} presents the mean squared error (MSE) of estimating $\bbeta$ by the different methods. 
The proposed compact HOSVD-based methods vastly outperform the competitors. 
Notably, the HOSVD-GLS model achieves approximately 780-fold ($4.0649/0.0052$) and 80-fold ($0.1793/0.0022$) improvements over the best-performing CP-based model for non-zero and zero locations, respectively. 
The GLS model performs comparably to the GLNS model at non-zero locations, with a slightly higher MSE likely due to the bias from the sparsity-inducing PING prior. 
However, at zero locations, GLS outperforms GLNS, leveraging the advantages of the PING prior. 
Figures \ref{fig: sim betas true} and \ref{fig: sim betas esti} in the Supplementary Materials illustrate the HOSVD-GLS method's ability to recover the true $\bbeta$ values.

\ul{Remarkably, thanks to our semi-orthogonality constraints allowing parallel updates, 
our proposed HOSVD-based models with starting $\balpha$ and $\bbeta$ ranks (3,5,5,5) and (3,5,5,5,4) achieve significantly lower computation times than the CP model with ranks $r_{\alpha}=r_{\beta}=10$ or higher, 
while delivering much greater estimation accuracy.} 
Specifically, the rank-10 CP model takes approximately 1.5 times longer than the above-specified HOSVD model. 
CP models with higher ranks were thus not considered due to their prohibitive costs.

Here we focused again on the recovery of the longitudinal changes, captured by the $\beta$ estimates. 
Similar order-of-magnitude improvements were also obtained for the initial values, captured by the $\alpha$ estimates, as well but are omitted for brevity.

\vspace*{-3ex} 
\section{Discussion} \label{sec: discussion}
\vspace*{-1ex} 
\paragraph{Summary.} 
We introduced a flexible longitudinal mixed model for ultra-high-resolution multidimensional functional data, integrating tensor factorization and basis function expansion to address key methodological challenges. 
The model captured smooth spatial and temporal variations, accounted for group differences, and accommodated subject heterogeneity with irregular observation times, leading to a novel covariance-tensor decomposition. 

We developed efficient Bayesian estimation and inference strategies, including compact HOSVD-based models with semi-orthogonal mode matrices, a Laplacian-operator-based prior with low-rank approximation, and a product Gaussian architecture for smoothness and sparsity. 
A cumulative shrinkage prior enabled semi-automated rank selection, while our computational framework ensured scalable MCMC-based posterior sampling. 

We provided theoretical results on model flexibility and posterior convergence, and the approach is computationally more efficient than CP models of similar rank due to parallel core tensor sampling. 
The method outperformed existing alternatives and provided novel insights into localized brain changes associated with disease progression in ADNI-3 data.

\paragraph{Broader Applicability and Extensions.} 
While we focused on the analysis of DW-MRI data here, 
the proposed methodology is broadly applicable to other types of longitudinal multi-dimensional data with functional modes such as, e.g., in climate science, EEG, and fMRI. 
%
Our ongoing research includes adapting the proposed ideas to the exploration of other important problems involving tensor objects in neuroimaging and other settings, 
e.g., scalar and vector-on-tensor regression models. 
Performing a joint analysis of the different markers and integrating subject-specific covariates, such as baseline age or gender, into the framework is also part of our plans. 

  \section*{Funding}
First author would like to thank the National Science Foundation for the Collaborative Research Grant DMS-2210281.

\section*{Acknowledgment}
The authors gratefully acknowledge the Editor, Associate Editor, and reviewers for their valuable feedback. This work utilized data from the Alzheimer’s Disease Neuroimaging Initiative (ADNI) database (adni.loni.usc.edu). The ADNI investigators were responsible for the design and implementation of the study and/or provided data, but they did not participate in the analysis or writing of this manuscript. A full list of ADNI investigators is available at: \url{https://adni.loni.usc.edu/wp-content/uploads/how_to_apply/ADNI_Acknowledgement_List.pdf}. 

Data collection and sharing for this project was funded by the Alzheimer's Disease Neuroimaging Initiative (ADNI) (National Institutes of Health Grant U01 AG024904) and DOD ADNI (Department of Defense award number W81XWH-12-2-0012). ADNI is funded by the National Institute on Aging, the National Institute of Biomedical Imaging and Bioengineering, and through generous contributions from the following: AbbVie, Alzheimer's Association; Alzheimer's Drug Discovery Foundation; Araclon Biotech; BioClinica, Inc.; Biogen; Bristol-Myers Squibb Company; CereSpir, Inc.; Cogstate; Eisai Inc.; Elan Pharmaceuticals, Inc.; Eli Lilly and Company; EuroImmun; F. Hoffmann-La Roche Ltd and its affiliated company Genentech, Inc.; Fujirebio; GE
Healthcare; IXICO Ltd.; Janssen Alzheimer's Immunotherapy Research \& Development, LLC.; Johnson \& Johnson Pharmaceutical Research \& Development LLC.; Lumosity; Lundbeck; Merck \& Co., Inc.; Meso
Scale Diagnostics, LLC.; NeuroRx Research; Neurotrack Technologies; Novartis Pharmaceuticals Corporation; Pfizer Inc.; Piramal Imaging; Servier; Takeda Pharmaceutical Company; and Transition
Therapeutics. The Canadian Institutes of Health Research is providing funds to support ADNI clinical sites in Canada. Private sector contributions are facilitated by the Foundation for the National Institutes of Health ({\tt www.fnih.org}). The grantee organization is the Northern California Institute for Research and Education,
and the study is coordinated by the Alzheimer's Therapeutic Research Institute at the University of Southern California. ADNI data are disseminated by the Laboratory for Neuro Imaging at the University of Southern California.

\baselineskip 16pt
\vspace*{-10pt}
\section*{Supplementary Materials}
The Supplementary Materials provide detailed information on various aspects of the study, including 
\ul{additional information on the ADNI-3 study, 
the choice of hyperparameters, 
proofs of theoretical results, 
the MCMC algorithm used for posterior sampling, 
additional figures and tables, 
and other relevant details}. 
R programs implementing our methods are provided in a separate ZIP file included with the Supplementary Materials. 

\vspace*{-15pt}



\clearpage\pagebreak\newpage
\pagestyle{fancy}
\fancyhf{}
\rhead{\bfseries\thepage}
\lhead{\bfseries Supplementary Materials}

\setcounter{equation}{0}
\setcounter{page}{1}
\setcounter{table}{1}
\setcounter{figure}{0}
\setcounter{section}{0}
\numberwithin{table}{section}
\renewcommand{\theequation}{S.\arabic{equation}}
\renewcommand{\thesubsection}{S.\arabic{section}.\arabic{subsection}}
\renewcommand{\thesection}{S.\arabic{section}}
\renewcommand{\thepage}{S.\arabic{page}}
\renewcommand{\thetable}{S.\arabic{table}}
\renewcommand{\thefigure}{S.\arabic{figure}}
\baselineskip=25pt

\baselineskip=25pt

\begin{center}
{\LARGE Supplementary Materials for\\
{\bf Bayesian Semiparametric\\ 
\vskip -10pt 
Orthogonal Tucker Factorized Mixed Models\\ 
for 
Multi-dimensional Longitudinal Functional Data
}}
\end{center}
\vskip 20pt 
\baselineskip 16pt

\vskip 2mm
\begin{center}
 Arkaprava Roy\\
 arkaprava.roy@ufl.edu\\
  Department of Biostatistics,
 University of Florida\\
 2004 Mowry Road, Gainesville, FL  32611, USA\\
 \vskip 2mm%
 Abhra Sarkar\\
 abhra.sarkar@utexas.edu \\
 Department of Statistics and Data Sciences,
 The University of Texas at Austin\\
 2317 Speedway D9800, Austin, TX 78712-1823, USA\\
 \vskip 2mm%
 and\\
 The Alzheimer Disease Neuroimaging Initiative
\end{center}

\vskip 10mm
The Supplementary Materials provide detailed information on various aspects of the study, 
including additional information on the ADNI-3 study, 
the choice of hyperparameters, 
proofs of theoretical results, 
the MCMC algorithm used for posterior sampling, 
additional figures and tables, 
and other relevant details. 
R programs implementing our methods are included in a separate ZIP file.

\clearpage\newpage 
\section{Scientific Background} \label{sec: sm sci bg}
\vspace*{-1ex}
\paragraph{Diffusion MRI.}
By detecting the movement of water molecules in tissues, DW-MRI allows us to characterize the properties of white matter bundles \citep{soares2013hitchhiker}.
The properties of the underlying diffusion process of the water are then learned by applying different diffusion models such as diffusion tensor imaging (DTI) \citep{soares2013hitchhiker}, diffusion kurtosis imaging (DKI) \citep{jensen2005diffusional}, neurite orientation dispersion and density imaging (NODDI) \citep{zhang2012noddi}, etc.
Different methods require different data acquisition protocols and provide different micro-structural properties.
Among these, DTI requires the simplest acquisition protocols and is one of the most widely used tools in clinical applications.
Specifically, the DTI model estimates $3 \times 3$ symmetric positive definite matrices, called `diffusion tensors', which represent the covariance of the local 3D Brownian motion. 
It is thus useful to measure the anisotropic diffusion properties of water molecules.
One such well-known marker is fractional anisotropy (FA) which is defined based on the eigenvalues of the diffusion tensors.
Unlike the other diffusion models, NODDI is a biophysical model that characterizes DW-MRI data in terms of biophysically meaningful parameters, 
extracting features such as orientation dispersion index (ODI),  Neurite Density Index (NDI), and Free Water Fraction (FWF), 
which have been widely used in clinical applications \citep{kamiya2020noddi}.
However, it requires collecting the DW-MRI data at different magnetic strengths or shells under a high-angular-resolution diffusion imaging (HARDI) protocol.

\begin{figure}[htbp]
\centering
\vspace*{-10pt}
\includegraphics[width = 0.5\textwidth, trim=0.5cm 4cm 1cm 4cm, clip=true]{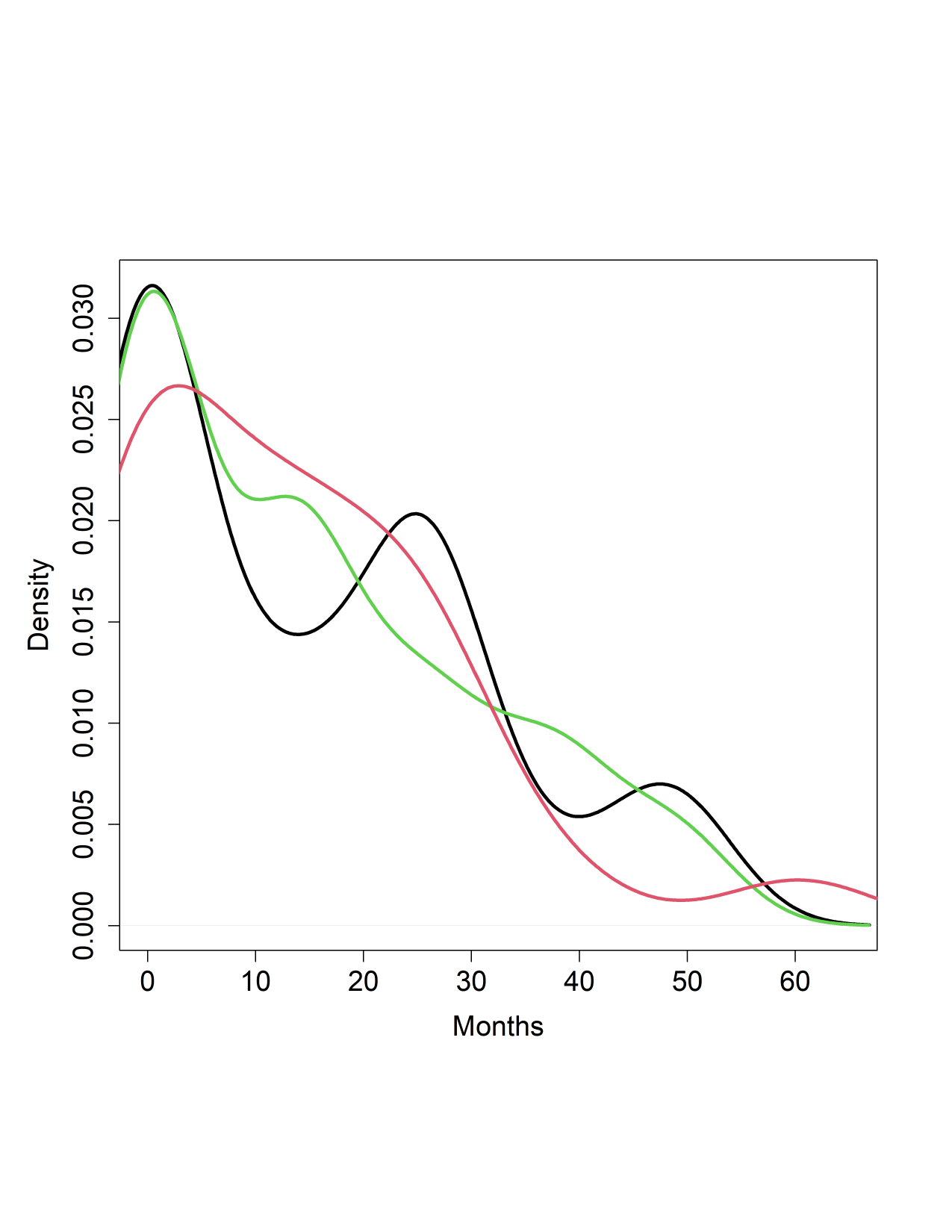}\quad
\caption{Density plots of the observation times for subjects with normal cognition (NC, blue),  mild cognitive impairment (MCI, green), and Alzheimer's (AD, red). 
\vspace*{-5pt}
}
\label{fig: obs_time}
\end{figure}

\paragraph{The ADNI-3 Data Set.}
ADNI collects data in various phases. 
The most recently concluded phase ADNI-3 lasted 2016--2022. 
It included participants from $3$ different groups: 
Healthy control subjects with normal cognition (NC), those with mild cognitive impairment (MCI), and individuals with AD. 
These participants were carefully selected and followed over time with 45 NC, 33 MCI, and 23 AD subjects. 
Figure \ref{fig: obs_time} depicts the distribution of the observation times for these groups. 
Figure \ref{fig: visit times} below shows the observation times for all subjects in the study.
In all, we have 264 high-resolution images from a total of $101$ subjects from the 3 groups.

Importantly, unlike in previous phases, ADNI-3 DW-MRI data are multishell HARDI. 
Thus, it allows us to apply both DTI and NODDI diffusion models to extract various microstructural features. 
NODDI 
has been observed to correlate with FA \citep{fukutomi2019diffusion}. 
Studies have shown that FA is a good marker for longitudinal progression among AD subjects \citep{mayo2017longitudinal, chen2020reliability, hall2021using}. 
Similar results also exist for NODDI-extracted features \citep{colgan2016application,lehmann2021longitudinal}. 
However, most of these works are based on region-wise summary statistics. 
In this work, we run whole-brain longitudinal analyses of FA, ODI, NDI, and FWF. 
We then compare the results and examine the differences in their respective longitudinal progressions.

\paragraph{Data Preprocessing.}
The raw DW-MRI images were first corrected for eddy currents using the FSL software \citep{jenkinson2012fsl}, followed by brain mask removals. 
Then, the fractional anisotropy (FA) images were obtained by applying the DTI model using the {\it dtifit} function and the TBSS pipeline in FSL. 
The three additional outcomes orientation dispersion index (ODI), Neurite Density Index (NDI), and Free Water Fraction (FWF) are obtained by applying the NODDI \citep{zhang2012noddi} model which was fitted using the AMICO package \citep{daducci2015accelerated}. 
Finally, we register the subject-specific estimates on a standard MNI152\_2mm template, producing $90 \times 108 \times 90$ dimensional images for each case. 
Since statistical analyses of neuroimaging data can be adversely influenced by empty voxels outside the brain, 
we followed \cite{feng2021brain} and 
cropped out these parts, 
resulting in images of dimension $60\times 70\times 60$. 
We also applied a linear transformation to map all the values of all outcomes between $[-5,5]$. 
The observation times are scaled to [0,1] by dividing observed times with the highest observation time of our study cohort. 

\begin{figure}[htbp]
     \centering
     \includegraphics[width=0.8\textwidth, height=15cm, trim=0cm 0cm 0cm 0.5cm, clip]{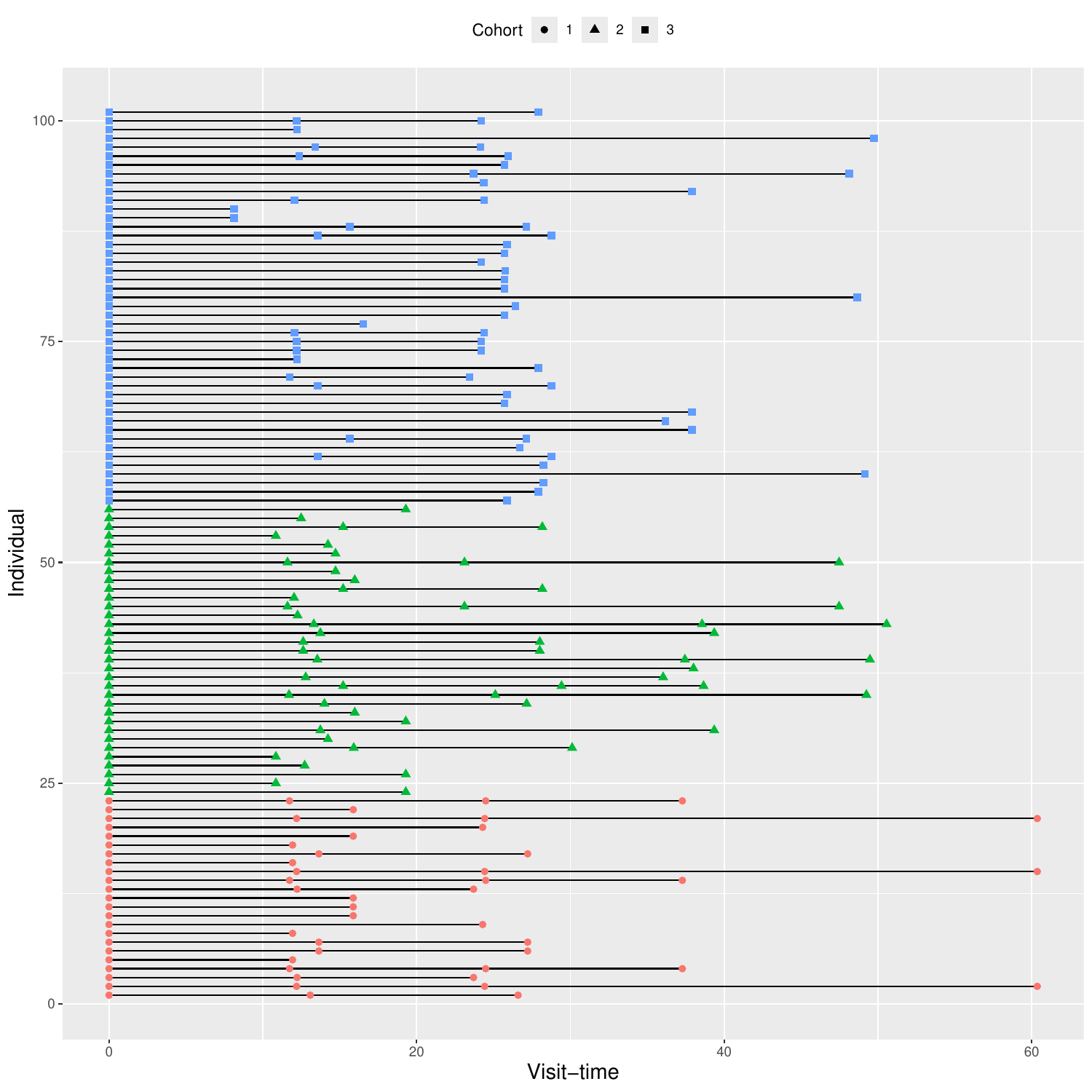}
     \caption{Observation times for the three groups: 
     Alzheimer's (AD, Cohort 1, red circles), 
     mild cognitive impairment (MCI, Cohort 2, green triangles), 
     and normal cognition (NC, Cohort 3, blue squares). 
     }
     \label{fig: visit times}
\end{figure}

\clearpage\newpage\section{Additional Discussions on Hyper-parameters} \label{sec: sm hyper-parameters1}
\vspace*{-1ex} 
With some repetition from the paper for completeness and readability, we recall that our implementation involves only a small number of prior hyperparameters: $\kappa_1, \kappa_2$ for cumulative shrinkage; $q_s$ for the PING prior; and $a_s, b_s$, $a_\tau, b_\tau$ for inverse Gamma distributions. We use $\kappa_1 = 2.1$, $\kappa_2 = 3.1$, $a_s = 10$, $b_s = 0.1$ (for all $s$), and $a_\tau = b_\tau = 0.1$. 
For the components of $\bbeta$ ($s = 1, 2, 3$), we set $q_s = 3$, while for $\balpha$, $q_s = 1$ to reflect reduced spatial sparsity in the baseline.
The $\text{Ga}(10, 0.1)$ priors encourage shrinkage, while the $\text{Ga}(0.1, 0.1)$ priors on the global precisions $\tau_\alpha^{-2}$ and $\tau_\beta^{-2}$ are diffuse. We also set the number of B-spline bases $d_t$ to approximate the maximum number of patient visits, and the number of Gaussian bases $m_s = \lfloor d_s / 2 \rfloor$ with spacing $v_s = d_s / m_s$.

The choices for $a_{\tau}, b_{\tau}$ are non-informative.
The choices of $\kappa_{1}$ and $\kappa_{2}$ are as recommended by \cite{durante2017note}. 
The choice for $a_{s}$ and $b_{s}$ induces shrinkage that offers numerical stability and is partly motivated by \cite{roy2021perturbed}. 
The choice $q = 3$ for the PING was found to perform best across various settings in \cite{roy2021spatial}. 
Although \cite{roy2021spatial} also explored cross-validation methods for setting $q$, this approach was computationally prohibitive for our purposes. 
Instead, we conducted a preliminary exploration and again found $q_{s} = 3$ to yield very robust performances. 
Our choice of the number of B-splines is supported by \citet{ruppert2002selecting}, who showed that beyond a minimum threshold, the fit quality is largely insensitive to the exact number of splines -- a finding consistent with our own numerical experiments. 
The choice for the number of Gaussian bases 
is also motivated by the numerical experiments from \cite{roy2021spatial} using low-rank approximations, and also found to be working well in our current numerical experiments. 

\vspace*{-3ex} 
\section{Posterior Computation} \label{sec: sm post comp}
\vspace*{-1ex} 


Here we present the full conditional posterior distribution for all the parameters. 
Derivations are omitted for brevity. 

To sample from the subspace constraint multivariate Gaussian, we use the following result that if $\bx\sim\Normal(\bmu,\bSigma)$ conditioned on $\bA\bx=\bc$, the resulting conditional distribution $\bx\bigm|\bA\bx=\bc \sim \Normal(\bmu_{c},\bSigma_{c})$, where $\bmu_{c}=\bmu+\bSigma\bA\trans\bB(\bc-\bA\bmu)$ and $\bSigma_{c}=\bSigma-\bSigma\bA\trans\bB\bA\bSigma$ with $\bB=(\bA\bSigma\bA\trans)^{-1}$. 
The calculation is straightforward, using the results from the conditional multivariate normal distribution. 
Let $\by = \bA \bx$ and suppose $\bx \sim \mathcal{N}(\bmu, \bSigma)$.  
Then the stacked random vector
is jointly Gaussian with distribution
\[
\begin{pmatrix}
\bx \\
\by
\end{pmatrix}
\sim
\mathcal{N}\!\left(
\bmu_f,\,
\bSigma_f
\right),
\]
where the joint mean and covariance matrix are
\[
\bmu_f
=
\begin{pmatrix}
\bmu \\
\bA\bmu
\end{pmatrix},
\qquad
\bSigma_f
=
\begin{pmatrix}
\bSigma & \bSigma\bA^{\top} \\
\bA\bSigma & \bA\bSigma\bA^{\top}
\end{pmatrix}.
\]

Now, we can calculate the distribution of $\bx|\by=\bc$ based on this joint distribution and obtain the above result. 
Furthermore, we exploit the banded precision structure in our prior to develop a more efficient sampler as discussed below. We use Algorithm 2 from \cite{cong2017fast} for sampling from the hyperplane constraint Gaussian with some additional improvements due to banded structures on the precision.

\vspace*{-1ex}
\begin{itemize}[leftmargin=*,itemsep=0em]
    \item {\bf Updating the mode matrix parameters $\bgamma^{(s)}_{\alpha,\ell,k}$:} 
    Let 
    $\bG^{(s)} = [\bg^{(s)}(1),\dots,\bg^{(s)}(d_{s})]$, 
    $\ba^{(s)}_{\alpha,\ell,k}=[a^{(s)}_{\alpha,\ell,k}(1),\dots,a^{(s)}_{\alpha,\ell,k}(d_{s})]\trans$, 
    $\diag(\ba^{(s)}_{\alpha,\ell,-k}) = \diag\{\prod_{j\neq k}a^{(s)}_{\alpha,\ell,j}(1),\dots,\prod_{j\neq k}a^{(s)}_{\alpha,\ell,j}(d_{s})\}$, 
    $\diag\{(\ba^{(s)}_{\alpha,\ell,-k})^{2}\}$ be similarly defined, 
    $\bA^{(s)}_{\alpha} = [\ba^{(s)}_{\alpha,1},\dots,\ba^{(s)}_{\alpha,r_{\alpha,s}}]$, 
    and 
    $\bA^{(s)}_{\alpha,-\ell}$ be $\bA^{(s)}_{\alpha}$ with the $\ell\th$ column removed. 
    Also, 
    \vspace*{-5ex}\\
    \bse
    & \zeta_{\alpha,\ell}^{(i,s)}(h_{g},h_{1},h_{2},h_{3})=\sum_{m \in \S_{g}}\sum_{\{z_{m}: z_{s} = \ell\}} \eta_{\alpha,z_{g},z_{1},z_{2},z_{3}}^{(i)} \prod_{j\in\S_{g}\setminus {s}} a^{(j)}_{\alpha,z_{j}}(h_{j}),\\ 
    %
    %
    & \zeta_{\alpha,-\ell}^{(i,s)}(h_{g},h_{1},h_{2},h_{3})=\sum_{m \in \S_{g}}\sum_{\{z_{m}: z_{s}\neq \ell\}} \eta_{\alpha,z_{g},z_{1},z_{2},z_{3}}^{(i)} \prod_{j\in\S_{g}} a^{(j)}_{\alpha,z_{j}}(h_{j}),\\ 
    & \zeta_{\alpha,\ell,SS}^{(i,s)} = \sum_{m \in \S_{g}\setminus s}\sum_{h_{m}} \{\zeta_{\alpha,\ell}^{(i,s)}(h_{g},h_{1},h_{2},h_{3})\}^{2},\\ 
    & R_{\alpha}^{(i)}(h_{g},h_{1},h_{2},h_{3})=\sum_{t=t_{i,1}}^{t_{i,n_{i}}}\{Y^{(i)}(h_{g},h_{1},h_{2},h_{3},t)-\beta^{(i)}(h_{g},h_{1},h_{2},h_{3},t)\},\\
    & R_{\alpha,\ell}^{(i,s)}(h_{g},h_{1},h_{2},h_{3})=R_{\alpha}^{(i)}(h_{g},h_{1},h_{2},h_{3})-n_{i}\zeta_{\alpha,-\ell}^{(i,s)}(h_{g},h_{1},h_{2},h_{3}),\\
    %
    %
    & \wt{r}_{\alpha,\ell}^{(s)}(h_{s}) = \sum_{i}\sum_{m \in \S_{g}}\sum_{\{h_{m}: m\neq s\}}R_{\alpha,\ell}^{(i,s)}(h_{g},h_{1},h_{2},h_{3})\zeta_{\alpha,\ell}^{(i,s)}(h_{g},h_{1},h_{2},h_{3}),\\
    & \bGamma^{(s)}_{\alpha,\ell,k} = (\bA^{(s)}_{\alpha,-\ell})\trans\diag(\ba^{(s)}_{\alpha,\ell,-k})(\bG^{(s)})\trans.
    \ese
    \vspace*{-5ex}\\
    While the $\zeta_{\alpha,\ell}^{(i,s)}(h_{g},h_{1},h_{2},h_{3})$'s are constant along the $s\th$ direction, i.e., do not vary with $h_{s}$, 
    they are defined as above for notational convenience. 
    Additionally, minor modifications are needed for the case $s=g$ 
    where the sum over $i$ in the definition of $\wt{r}_{\alpha,\ell}^{(g)}(h_{g})$ is taken over the individuals belonging specifically to the $h_{g}\th$ group 
    and not on the entire sample. 
    Similar comments also apply to $\zeta_{\beta,\ell}^{(i,s)}(h_{g},h_{1},h_{2},h_{3},t)$ defined below.

    Next, recalling that the PING process variance $\sigma^{(s)}_{\alpha,\ell}$ is absorbed by the first component, 
    let $\sigma^{(s)}_{\alpha,\ell,k} = \sigma^{(s)}_{\alpha,\ell}$ for $k=1$ and $\sigma^{(s)}_{\alpha,\ell,k}=1$ for $k>1$. 
    Finally, let $\wt{\br}_{\alpha,\ell}^{(s)} = [\wt{r}_{\alpha,\ell}^{(s)}(1),\dots,\wt{r}_{\alpha,\ell}^{(s)}(d_{s})]\trans$. 
    The full conditionals of the factor matrix parameters are then given by 
    \vspace*{-5ex}\\
    \bse
    & \bgamma^{(s)}_{\alpha,\ell,k} \sim \MVN(\bmu^{(s)}_{\alpha,\ell,k},\bSigma^{(s)}_{\alpha,\ell,k} \mid \bGamma^{(s)}_{\alpha,\ell,k}\bgamma^{(s)}_{\alpha,\ell,k}=\bzero), ~\text{where}\\
    & \bSigma^{(s)}_{\alpha,\ell,k} = \bigg[\sum_{i} \frac{1}{\sigma^{2}} n_{i} \zeta_{\alpha,\ell,SS}^{(i,s)} \bG^{(s)}\diag\{(\ba^{(s)}_{\alpha,\ell,-k})^{2}\}(\bG^{(s)})\trans + \frac{1}{(\sigma^{(s)}_{\alpha,\ell,k})^{2}}(\bQ^{(s)})^{-1}\bigg]^{-1}, \\
    & \text{and}~~\bmu^{(s)}_{\alpha,\ell,k}=\sigma^{-2}\bSigma^{(s)}_{\alpha,\ell,k}\bG^{(s)}\diag(\ba^{(s)}_{\alpha,\ell,-k}) \wt{\br}_{\alpha,\ell}^{(s)}.
    \ese
    \vspace*{-5ex}\\    
    Note that, by our construction, $\bG^{(s)}$ and $(\bQ^{(s)})^{-1}$ are both banded matrices. 
    Sampling the $\bgamma^{(s)}_{\alpha,\ell,k}$'s, therefore, requires sampling from multivariate normal distributions with banded precision matrices subject to some linear constraints. 
    This is done using the efficient banded-Cholesky algorithm of \citet[Section 3.1.3]{rue2001fast}. 


    \item {\bf Updating the core tensor parameters $\etam^{(i)}_{\alpha}$:} 
    The posterior full conditional of the core tensor parameters $\eta_{\alpha,z_{g},z_{1},z_{2},z_{3}}^{(i)}$ is given by 
    \vspace*{-5ex}\\
    \bse
    & \eta_{\alpha,z_{g},z_{1},z_{2},z_{3}}^{(i)} \sim \Normal(\mu_{\alpha,\eta,z_{g},z_{1},z_{2},z_{3}}^{(i)},\sigma_{\alpha,\eta,z_{g},z_{1},z_{2},z_{3}}^{2}),~\text{where}\\ 
    & \sigma_{\alpha,\eta,z_{g},z_{1},z_{2},z_{3}}^{2} = \bigg[\frac{1}{\sigma^{2}}\{\prod_{s \in \S_{g}} a^{(s)}_{\alpha,z_{s}}(h_{s})\}^{2}+\frac{1}{\tau^{2}_{\alpha}\sigma_{\alpha,z_{g},\ell,z_{2},z_{3}}^{2}}\bigg]^{-1},~ \text{and}\\
    & \frac{\mu_{\alpha,\eta,z_{g},z_{1},z_{2},z_{3}}^{(i)}}{\sigma_{\alpha,\eta,z_{g},z_{1},z_{2},z_{3}}^{2}} = \left\{\frac{1}{\sigma^{2}} \sum_{m \in \S_{g}} \sum_{h_{m}} R_{\alpha}^{(i)}(h_{g},h_{1},h_{2},h_{3})\prod_{s \in \S_{g}} a^{(s)}_{\alpha,z_{s}}(h_{s}) +\frac{c_{\alpha,z_{g},z_{1},z_{2},z_{3}}}{\tau^{2}_{\alpha}\sigma_{\alpha,z_{g},z_{1},z_{2},z_{3}}^{2}}\right\}.
    \ese   
    \vspace*{-5ex}\\
    Importantly, due to the imposed orthogonality constraints on the factor matrices, these full conditionals are mutually independent across all combinations of $(i,z_{g},z_{1},z_{2},z_{3})$, allowing them to be sampled in an embarrassingly parallel manner. 
\end{itemize}


\vspace*{-3ex}
\begin{itemize}[leftmargin=*,itemsep=0em]
    \item {\bf Updating the mode matrix parameters $\bgamma^{(s)}_{\beta,\ell,k}$:} 
    Let 
    $\ba^{(s)}_{\beta,\ell,k}=[a^{(s)}_{\beta,\ell,k}(1),\dots,a^{(s)}_{\beta,\ell,k}(d_{s})]\trans$, 
    $\diag(\ba^{(s)}_{\beta,\ell,-k}) = \diag\{\prod_{j\neq k}a^{(s)}_{\beta,\ell,j}(1),\dots,\prod_{j\neq k}a^{(s)}_{\beta,\ell,j}(d_{s})\}$, 
    $\diag\{(\ba^{(s)}_{\beta,\ell,-k})^{2}\}$ be similarly defined, 
    $\bA^{(s)}_{\beta} = [\ba^{(s)}_{\beta,1},\dots,\ba^{(s)}_{\beta,r_{\beta,s}}]$, 
    and 
    $\bA^{(s)}_{\beta,-\ell}$ be $\bA^{(s)}_{\beta}$ with the $\ell\th$ column removed. 
    %
    Also, let $\bb^{(i)}_{t,\cdot}$ be the $t\th$ row of $\bB^{(i)}$ defined in Section \ref{sec: semi-orthogonality}. 
    We handle the cases $s\neq t$ and $s=t$ separately.
    
    First, for the cases $s \in \S_{g}$, let 
    \vspace*{-5ex}\\
    \bse
    & \zeta_{\beta,\ell}^{(i,s)}(h_{g},h_{1},h_{2},h_{3},t)=\sum_{m \in \S_{g,t}}\sum_{\{z_{m}: z_{s} = \ell\}} \eta_{\beta,z_{g},z_{1},z_{2},z_{3},z_{t}}^{(i)} \left\{ \prod_{j\in\S_{g} \setminus s} a^{(j)}_{\beta,z_{j}}(h_{j}) \right\} (\bb^{(i)}_{t,\cdot}\ba^{(t)}_{\beta,z_{t}}),\\ 
    %
    %
    & \zeta_{\beta,-\ell}^{(i,s)}(h_{g},h_{1},h_{2},h_{3},t)=\sum_{m \in \S_{g,t}}\sum_{\{z_{m}: z_{s}\neq \ell\}} \eta_{\beta,z_{g},z_{1},z_{2},z_{3},z_{t}}^{(i)} \left\{\prod_{j\in\S_{g}} a^{(j)}_{\beta,z_{j}}(h_{j})\right\}(\bb^{(i)}_{t,\cdot}\ba^{(t)}_{\beta,z_{t}}),\\ 
    & \zeta_{\beta,\ell,SS}^{(i,s)} = \sum_{t=t_{i,1}}^{t_{i,n_{i}}}\sum_{m \in \S_{g}\setminus s}\sum_{h_{m}} \{\zeta_{\beta,\ell}^{(i,s)}(h_{g},h_{1},h_{2},h_{3},t)\}^{2},\\ 
    %
    %
    & R_{\beta,\ell}^{(i,s)}(h_{g},h_{1},h_{2},h_{3},t) = Y^{(i)}(h_{g},h_{1},h_{2},h_{3},t)-\alpha^{(i)}(h_{g},h_{1},h_{2},h_{3})-\zeta_{\beta,-\ell}^{(i,s)}(h_{g},h_{1},h_{2},h_{3},t),\\
    & \wt{r}_{\beta,\ell}^{(s)}(h_{s}) = \{\sum_{i}\sum_{t=t_{i,1}}^{t_{i,n_{i}}}\sum_{m \in \S_{g}}\sum_{\{h_{m}: m\neq s\}}R_{\beta,\ell}^{(i,s)}(h_{g},h_{1},h_{2},h_{3},t)\zeta_{\beta,\ell}^{(i,s)}(h_{g},h_{1},h_{2},h_{3},t)\},\\
    & \bGamma^{(s)}_{\beta,\ell,k} = (\bA^{(s)}_{\beta,-\ell})\trans \diag(\ba^{(s)}_{\beta,\ell,-k})(\bG^{(s)})\trans.
    \ese
    \vspace*{-5ex}\\
    Additionally, as before, let $\sigma^{(s)}_{\beta,\ell,k} = \sigma^{(s)}_{\beta,\ell}$ for $k=1$ and $\sigma^{(s)}_{\beta,\ell,k}=1$ for $k>1$; 
    and $\wt{\br}_{\beta,\beta,\ell}^{(s)} = [\wt{r}_{\beta,\ell}^{(s)}(1),\dots,\wt{r}_{\beta,\ell}^{(s)}(d_{s})]\trans$. 
    The mode matrix parameters $\bgamma^{(s)}_{\beta,\ell,k}$ are then updated as 
    \vspace*{-5ex}\\
    \bse
    & \bgamma^{(s)}_{\beta,\ell,k} \sim \MVN(\bmu^{(s)}_{\beta,\ell,k},\bSigma^{(s)}_{\beta,\ell,k} \mid \bGamma^{(s)}_{\beta,\ell,k}\bgamma^{(s)}_{\beta,\ell,k}=\bzero), ~\text{where}\\
    & \bSigma^{(s)}_{\beta,\ell,k} = \bigg[\sum_{i} \frac{1}{\sigma^{2}} n_{i} \zeta_{\beta,\ell,SS}^{(i,s)} \bG^{(s)}\diag\{(\ba^{(s)}_{\beta,\ell,-k})^{2}\}(\bG^{(s)})\trans + \frac{1}{(\sigma^{(s)}_{\beta,\ell,k})^{2}}(\bQ^{(s)})^{-1}\bigg]^{-1}, \\
    & \text{and}~~\bmu^{(s)}_{\beta,\ell,k}=\sigma^{-2}\bSigma^{(s)}_{\beta,\ell,k}\bG^{(s)}\diag(\ba^{(s)}_{\beta,\ell,-k}) \wt{\br}_{\beta,\ell}^{(s)}.
    \ese
    \vspace*{-5ex}\\
    Next, for the case $s=t$, let 
    \vspace*{-5ex}\\
    \bse
    & \zeta_{\beta,\ell}^{(i,t)}(h_{g},h_{1},h_{2},h_{3},t) = \sum_{m \in \S_{g}}\sum_{z_{m}} \eta_{\beta,z_{g},z_{1},z_{2},z_{3},\ell}^{(i)} \left\{\prod_{j\in\S_{g}} a^{(j)}_{\beta,z_{j}} (h_{j})\right\}, \\
    & \zeta_{\beta,-\ell}^{(i,t)}(h_{g},h_{1},h_{2},h_{3},t) = \sum_{m \in \S_{g,t}}\sum_{\{z_{m}: z_{t} \neq \ell\}} \eta_{\beta,z_{g},z_{1},z_{2},z_{3},z_{t}}^{(i)} \left\{\prod_{j\in\S_{g}} a^{(j)}_{\beta,z_{j}} (h_{j})\right\}(\bb^{(i)}_{t,\cdot}\ba^{(t)}_{\beta,z_{t}}), \\
    & \zeta_{\beta,\ell,SS}^{(i,t)} = \sum_{m \in \S_{g}}\sum_{h_{m}} \{\zeta_{\beta,\ell}^{(i,t)}(h_{g},h_{1},h_{2},h_{3},t)\}^{2},\\
    & R_{\beta,\ell}^{(i,t)}(h_{g},h_{1},h_{2},h_{3},t)=Y^{(i)}(h_{g},h_{1},h_{2},h_{3},t)-\alpha^{(i)}(h_{g},h_{1},h_{2},h_{3})-\zeta_{\beta,-\ell}^{(i,t)}(h_{g},h_{1},h_{2},h_{3},t),\\
    &\wt{r}_{\beta,\ell}^{(i,t)}(t) = \sum_{m \in \S_{g}}\sum_{h_{m}}R_{\beta,\ell}^{(i,t)}(h_{g},h_{1},h_{2},h_{3},t)\zeta_{\beta,\ell}^{(i,t)}(h_{g},h_{1},h_{2},h_{3}).
    \ese
    \vspace*{-5ex}\\
    %
    %
    When $s=t$, there is only one component $\btheta_{\beta,\ell,1}^{(t)} = \ba_{\beta,\ell}^{(t)}$ (Section \ref{sec: priors}) which are updated as 
    \vspace*{-5ex}\\
    \bse
    & \btheta_{\beta,\ell,1}^{(t)} \sim \MVN(\bmu^{(t)}_{\beta,\ell,1},\bSigma^{(t)}_{\beta,\ell,1} \mid \bGamma^{(t)}_{\beta,\ell,1}\ba_{\beta,\ell}^{(t)}=\bzero), ~\text{where}\\
    & \bSigma^{(t)}_{\beta,\ell,1} = \bigg[\sum_{i} \frac{1}{\sigma^{2}} \zeta_{\beta,\ell,SS}^{(i,t)} (\bB^{(i)})\trans\bB^{(i)} + \frac{1}{(\sigma^{(s)}_{\beta,\ell,1})^{2}}(\bQ^{(s)})^{-1}\bigg]^{-1}, \\
    & \text{and}~~\bmu^{(t)}_{\beta,\ell,1}=\sigma^{-2}\bSigma^{(t)}_{\beta,\ell,1}\{\sum_{i}(\bB^{(i)})\trans \wt{\br}_{\beta,\ell}^{(i,t)}\},
    \ese
    \vspace*{-5ex}\\
    where $\bGamma^{(t)}_{\beta,\ell,1}=\bA^{(t)}_{\beta,-\ell}$ and $\wt{\br}_{\beta,\ell}^{(i,t)}=[\wt{r}_{\beta,\ell}^{(i,t)}(t_{i,1}),\dots,\wt{r}_{\beta,\ell}^{(i,t)}(t_{i,n_i})]\trans$.

Additionally, we incorporate an adaptive sampling step from \cite{bhattacharya2011sparse}, enabling the dynamic removal of redundant columns or the addition of new ones, automating rank selection.
    
    \item {\bf Updating the core tensor parameters $\etam^{(i)}_{\beta}$:} Unlike $\etam^{(i)}_{\alpha}$, the posteriors of core tensor elements are not conditionally independent across the different values of $z_{t}$. 
    However, given the factor matrices, for a fixed $z_{t}=\ell$, the posteriors of $\eta_{\beta,z_{g},z_{1},z_{2},z_{3},\ell}^{(i)}$ are conditionally independent across the indices $(z_{g},z_{1},z_{2},z_{3})$. 
    The posterior full conditional of the core tensor parameters $\eta_{\beta,z_{g},z_{1},z_{2},z_{3},\ell}^{(i)}$ is given by 
    \vspace*{-5ex}\\
    \bse
    & \eta_{\beta,z_{g},z_{1},z_{2},z_{3},\ell}^{(i)} \sim \Normal(\mu_{\beta,\eta,z_{g},z_{1},z_{2},z_{3},\ell}^{(i)},\sigma_{\beta,\eta,z_{g},z_{1},z_{2},z_{3},\ell}^{2}),~\text{where}\\ 
    & \sigma_{\beta,\eta,z_{g},z_{1},z_{2},z_{3},\ell}^{2} = \bigg[\frac{1}{\sigma^{2}}\{\prod_{j\in\S_{g}} a^{(j)}_{\alpha,z_{j}}(h_{j}) (\bb^{(i)}_{t,\cdot}\ba^{(t)}_{\beta,\ell})\}^{2}+\frac{1}{\tau^{2}_{\beta}\sigma_{\beta,z_{g},\ell,z_{2},z_{3},\ell}^{2}}\bigg]^{-1},~ \text{and}\\
    %
    %
    %
    &\displaystyle\frac{\mu_{\beta,\eta,z_{g},z_{1},z_{2},z_{3},\ell}^{(i)}} {\sigma_{\beta,\eta,z_{g},z_{1},z_{2},z_{3},\ell}^{2}} = \left\{\frac{1}{\sigma^{2}} \sum_{m \in \S_{g,t}} \sum_{h_{m}} R_{\beta, \ell}^{(i,t)}(h_{g},h_{1},h_{2},h_{3},t) \prod_{s \in \S_{g}} a^{(s)}_{\alpha,z_{s}}(h_{s}) (\bb^{(i)}_{t,\cdot}\ba^{(t)}_{\beta,\ell})  +\frac{c_{\beta,z_{g},z_{1},z_{2},z_{3},\ell}}{\tau^{2}_{\beta}\sigma_{\beta,z_{g},z_{1},z_{2},z_{3},\ell}^{2}}\right\}.
    \ese   
    \vspace*{-4ex}\\
    For each fixed $\ell$, the $\eta_{\beta,z_{g},z_{1},z_{2},z_{3},\ell}^{(i)}$'s  are again sampled in an embarrassingly parallel manner.


\end{itemize}


\clearpage\newpage
\section{Proofs of Theoretical Results} \label{sec: sm proofs}

\subsection{Proof of Lemma \ref{lem: Tucker-HOSVD}} 
Consider a Tucker decomposition of a $d_{1} \times \dots \times d_{p}$ tensor $\btheta=\etam \times_{1} \bA^{(1)} \times_{2} \bA^{(2)} \cdots \times_{p}\bA^{(p)}$ with an $r_{1} \times \dots \times r_{p}$ dimensional core tensor $\etam$. 
Let $\bA^{(j)}=\bQ^{(j)}\bR^{(j)}$ be the QR decomposition of the $j\th$ mode matrix, 
where $\bQ^{(j)}$ is a $d_{j} \times r_{j}$ semi-orthogonal matrix and $\bR^{(j)}$ is an $r_{j}\times r_{j}$ upper triangular matrix. 
Then, 
with a new core-tensor $\etam'=\etam \times_{1}\bR^{(1)}\times_{2}\bR^{(2)}\cdots\times_{p}\bR^{(p)}$, which is of the same dimension as the original $\etam$, 
$\btheta$ can be alternatively represented as $\btheta=\etam'\times_{1}\bQ^{(1)}\times_{2} \bQ^{(2)}\cdots\times_{p}\bQ^{(p)}$, which is a compact HOSVD.

\subsection{Proofs of Lemmas \ref{thm::largesupping} and \ref{thm::largesupcsc}}
\begin{proof}[Proof of Lemma \ref{thm::largesupping}]
First, let $q$ be an odd integer and $\bbeta=\prod_{k=1}^{q}\bbeta_{k}$. 
We show that, for a given $\epsilon > 0$, with a properly chosen $\kappa = \kappa (\epsilon)$, 
$\{\big\|\bbeta-\bbeta_{\star}\big\|_{\infty}<\epsilon\}\supseteq \{ \|\bbeta_{k}-\bbeta_{\star}^{1/q}\|_{\infty} < \kappa: k=1,\ldots,q\}$. 
We have 
$\prod_{k=1}^{q}\bbeta_{k}=\prod_{k=1}^{q}(\bbeta_{k}-\bbeta_{\star}^{1/q}+\bbeta_{\star}^{1/q})=\bbeta_{\star} + f(\bbeta_{0}^{1/q}-\bbeta_{1},\ldots,\bbeta_{\star}^{1/q}-\bbeta_{q}, \bbeta_{\star}^{1/q})$.
Hence, $\|\bbeta-\bbeta_{\star}\|_{\infty}\le \|f(\bbeta_{\star}^{1/q}-\bbeta_{1},\ldots,\bbeta_{0}^{1/q}-\bbeta_{q}, \bbeta_{\star}^{1/q})\|_{\infty} $.
Now, if $\|\bbeta_{k} - \bbeta_{\star}^{1/q}\|_{\infty}<\kappa$ for all $k$, then $\|f(\bbeta_{\star}^{1/q}-\bbeta_{1},\ldots,\bbeta_{0}^{1/q}-\bbeta_{q}, \bbeta_{\star}^{1/q})\|_{\infty}\leq \kappa \|\bbeta_{\star}\|_{\infty}$ when $\kappa < 1$. 
For the Gaussian prior distribution, $\Pi(\|\bbeta_{0}^{1/q}-\bbeta_{k}\|_{\infty}<\kappa)>0$. 
Setting $\kappa \|\bbeta_{\star}\|_{\infty} \leq \epsilon$ completes the proof.

Next, if $q$ is even, we can write $\bbeta_{\star}=\prod_{k=1}^{q-1}\bbeta_{\star}^{1/(q-1)}$. Then, $\{\big\|\bbeta-\bbeta_{\star}\big\|_{\infty}<\epsilon\}\supseteq \{\|\bbeta_{q}-\bone\|_{\infty} < \kappa, \|\bbeta_{k}-\bbeta_{\star}^{1/(q-1)}\|_{\infty} < \kappa: k=1,\ldots,q-1\}$. Now, subsequent steps will be similar to the previous case.
\end{proof}

\begin{proof}[Proof of Lemma \ref{thm::largesupcsc}]
We prove this by induction. 
To do that, we first prove the assertion for $\bA$ and $\bA_{\star}$ having 2 columns as $\bA=[\ba_{1};\ba_{2}]$ and $\bA_{\star}=[\ba_{*,1};\ba_{*,2}]$. 

Let $\bP_{*,1}=\ba_{*,1} (\ba_{*,1}\trans\ba_{*,1})^{-1}\ba_{*,1}\trans$ and $\bP_{1}$ be defined similarly. 
Then by construction $\bP_{*,1}\ba_{*,2}=0$ and $\bP_{1}\ba_{2}=0$. We have $\|\bA-\bA_{\star}\|_{\infty}\leq \|\ba_{1}-\ba_{*,1}\|_{\infty} + \|(\bI-\bP_{1})\ba_{*,2}-\ba_{2}\|_{\infty} + \|(\bP_{*,1}-\bP_{1})\ba_{*,2}\|_{\infty}$. 
We further have $\|\bP_{*,1}-\bP_{1}\|_{\infty} \leq \|\ba_{1}-\ba_{*,1}\|_{\infty}(\ba_{*,1}\trans\ba_{*,1})^{-1} +  |(\ba_{1}\trans\ba_{1})^{-1}-(\ba_{*,1}\trans\ba_{*,1})^{-1}|\|\ba_{1}\|_{\infty}$. 
Hence, we need 1) concentration of $\ba_{1}$ around $\ba_{*,1}$ and 2) concentration of $\ba_{2}$ around $(\bI-\bP_{1})\ba_{*,2}$. 
As $\ba_{1}$ concentrates more and more around $\ba_{*,1}$, the term $\|(\bP_{*,1}-\bP_{1})\ba_{*,2}\|_{\infty}$ also decreases to zero. 
Finally, by construction of the CSC prior, $\{(\bI-\bP_{1})\bx:\bx\in\mathbb{R}^{p}\}$ is in the support of $\ba_{2}$. 
Thus, Lemma~\ref{lem: sup-norm PING} applies to ensure $\Pi(\|(\bI-\bP_{1})\ba_{*,2}-\ba_{2}\|_{\infty}<\epsilon/3)>0$ for a small enough $\epsilon$ (with some minor modifications in the argument under the CSC constraint).
Combining these arguments, we can conclude $\Pi(\|\bA-\bA_{0}\|_{\infty}<\epsilon) > 0$ when $\bA$ and $\bA_{0}$ have 2 columns.

Next, let the assertion hold for $\bA$ and $\bA_{0}$ having $H$ columns. 
Then for matrices with $(H+1)$ columns, we can rewrite the matrices as $\bA=[\bA_{1:H};\ba_{H+1}]$ and $\bA_{0}=[\bA_{*,1:H};\ba_{*,H+1}]$ with a block of the first $H$ columns and the last column. 
Then, we can repeat the above steps for the projection matrix part as $\|\bP_{*,1:H}-\bP_{1:H}\|_{\infty} \leq \|\bA_{1:H}-\bA_{*,1:H}\|_{\infty}\|(\bA_{*,1:H}\trans\bA_{*,1:H})^{-1}\|_{\infty} +  \|(\bA_{1:H}\trans\bA_{1:H})^{-1}-(\bA_{*,1:H}\trans\bA_{*,1:H})^{-1}\|_{\infty}\|\bA_{1:H}\|_{\infty}$. 
Therefore, using Lemma~\ref{lem: sup-norm PING} and by induction, we can conclude $\Pi(\|\bA-\bA_{\star}\|_{\infty}<\epsilon) > 0$ for semi-orthogonal matrices $\bA$ and $\bA_{\star}$ of dimension $p\times (H+1)$, which completes the proof.
\end{proof}

\subsection{Proof Outline of Theorem \ref{thm: consistency}} 
We present first a proof outline here. 
Details are presented separately below. 

With some repetition from the main paper to keep this section relatively self-contained, we recall that we will use $\|\cdot\|_{2}$ to denote the Euclidean norm of a vector. 
For a matrix $\bA$, the Frobenius norm $\|\bA\|_{F}=\|\vect(\bA)\|_{2}$ is the Euclidean norm of its vectorization. 
For both vectors and matrices, $\|\cdot\|_{\infty}$ will denote the maximum absolute value of the entries. 

Additionally, the operator norm of a matrix $\bA$ is given by $\bA_{op}=\sup\{ \|\bA \bx\|_{2}: \|\bx\|_{2}\le 1\}$; and 
the symbol $\asymp$ will stand for equality of the order of growth or decay for two sequences of numbers. 

We denote the joint density of the data by $Q_{\bTheta}$ with  
$\bTheta=\{\etam_{\alpha},\bA^{(s)}_{\alpha}, s \in \S_{g}\} \cup \{\etam_{\beta},\bA^{(s)}_{\beta}, s \in \S_{g,t}\}\cup\{\bC_{\alpha},\bC_{\beta},\bSigma_{\alpha},\bSigma_{\beta},\sigma_{\epsilon}^{2}\}$ denoting the complete set of parameters. 
Also, we denote the true parameters by $\bTheta_{0}$. 

Using the latent factor representation of the proposed model from Section~\ref{sec: factor model}, 
we have $\bY^{(i)} \sim \MVN(\bmu^{(i)},\bSigma^{(i)})$,  
where $\bmu^{(i)}=\bmu_{h_{g}}$ and $\bSigma^{(i)} = \bSigma_{h_{g}}$ if the $i\th$ subject is in the $h_{g}\th$ group, 
with $\bmu_{h_{g}}=\bLambda_{\alpha,h_{g}}\vec(\bC_{\alpha})+\bLambda_{\beta,h_{g}}\vec(\bC_{\beta})$ and $\bSigma_{h_{g}}=\bLambda_{\alpha,h_{g}}\bSigma_{\alpha,\eta}\bLambda_{\alpha,h_{g}}\trans+\bLambda_{\beta,h_{g}}\bSigma_{\beta,\eta}\bLambda_{\beta,h_{g}}\trans+\sigma^{2}_{\epsilon}\bI_{nd_{1}d_{2}d_{3}}$.
Also, $\sum_{i}\|\bmu^{(i)}-\bmu_{0}^{(i)}\|_{2}^{2}=n\sum_{h_{g}=1}^{d_{g}}N_{g}\|\balpha_{h_{g}}-\balpha_{0,h_{g}}\|^{2}_{F} + \sum_{h_{g}=1}^{d_{g}}N_{h_{g}}\sum_{k=1}^{n}\|\bbeta_{t_{k},h_{g}}-\bbeta_{0,h_{g},t_{k}}\|^{2}_{F}$. 

Expanding the estimates in Parts (ii) and (iii) of Lemma~\ref{lem: normal divergences} in a Taylor series [i.e., $\log(1+x)=x-x^{2}/2+o(1)$ as $x\rightarrow 0$], the Kullback-Leibler (KL) divergence and variation between $Q_{\bTheta}$ and $Q_{\bTheta_{0}}$ is bounded by a multiple of $\sum_{i=1}^{n}\|\bR^{(i)}\|^{2}_{F}+\sum_{i=1}^{n}(\|\bSigma^{(i)}\|^{-1/2}_{op}+\|\bSigma^{(i)}\|^{-1}_{op}\|\bSigma_{0}^{(i)}\|^{1/2}_{op})\|\bmu^{(i)}-\bmu_{0}^{(i)}\|_{2}^{2}$ with $\bR^{(i)}=(\bSigma_{0}^{(i)})^{-1/2}(\bSigma^{(i)}- \bSigma_{0}^{(i)})(\bSigma_{0}^{(i)})^{-1/2}$
with $\bmu^{(i)}=\bmu_{h_{g}}$ and $\bSigma^{(i)} = \bSigma_{h_{g}}$, for $i\th$ subject in $h_{g}\th$ group. 
Using the second assertions of Lemmas  \ref{lem: S.1.7} and \ref{lem: S.1.8} with \ref{lem: S.1.9}, we have KL divergence between $Q_{\bTheta}$ and $Q_{\bTheta_{0}}$ bounded by the Frobeneous distances between the core-tensors and supremum distances between the mode-matrices. 
By Lemma E.7 of \cite{GhosalBook}, the approximation error by B-spline bases for $\iota$-H\"older smooth function is given by $M^{-\iota}$ when approximated using $M$-bases. 
For the spatial and temporal mode matrices, we have considered Gaussian bases, for which \cite{hangelbroek2010nonlinear} provides a similar approximation error bound.
%
%
Let $\bA^{(s)}_{0,M}$ be a generic mode matrix with columns specified using $M$ bases, and $\bA^{(s)}_{0}$ be a mode matrix with columns being $\iota$-H\"older smooth. 
Then, using those above-mentioned results, $\|\bA^{(s)}_{0,M}-\bA^{(s)}_{0}\|_{\infty} \lesssim M^{-\iota}$. 
Also, for both B-spline and Gaussian bases, we have $\|\bpsi_{M}\trans\bgamma_{1}-\bpsi_{M}\trans\bgamma_{2}\|_{\infty}\lesssim \|\bgamma_{1}-\bgamma_{2}\|_{\infty}$, where $\bpsi_{M}$ stands for the set of spline or Gaussian bases of size $M$. Then invoking Lemma~\ref{lem: sup-norm CSC-PING} for mode matrices and standard concentration bounds of Gaussian for core tensor elements, the required prior positivity condition involving the KL divergence and variation is satisfied under the regularity conditions in Assumptions 1-4.

Let $m_{1,N},m_{2,N},m_{3,N}$ be the number of bases along the three image directions respectively, 
and $d_{t,N}$ be the number of B-spline bases in the mode-matrix for time direction. 
Combining the above approximation results along with Lemmas \ref{lem: sup-norm PING}, \ref{lem: sup-norm CSC-PING}, and  \ref{lem: S.1.9} ensures that 
$\bTheta_{0}$ belongs to the KL support of the proposed prior.

Finally, we construct the test based on the Reyni divergence.  
Let $\phi_{N}=\mathbf{1}\{Q_{\bTheta_{1}}/Q_{\bTheta_{0}}>1\}$ stand for the likelihood ratio test. 
Applying  Markov inequality to the square root of the likelihood ratio, both Type I and Type II error probabilities are bounded by $\exp\{-Nd_{R}(Q_{\bTheta_{1}},Q_{\bTheta_{0}})\}$, where $d_{R}=-\frac{1}{N}\log \int \sqrt{Q_{\bTheta_{1}} Q_{\bTheta_{0}}}$ is the average Reyni divergence. 
Then, using the Cauchy-Schwarz inequality, 
the probability of type II error of $\phi_T$ at some $\bTheta_{2}$ is bounded by 
\begin{align}
    \eE_{\bTheta_{2}}(1-\phi_N) &\le [\eE_{\bTheta_{1}}(1-\phi_N)]^{1/2} [\eE_{\bTheta_{1}} (Q_{\bTheta_{2}}/Q_{\bTheta_{1}})^{2} ]^{1/2} \nonumber.
    \label{type II error}
\end{align}
Now, by Part (4) of Lemma \ref{lem: normal divergences} we have $\eE_{\bTheta_{1}} (Q_{\bTheta_{2}}/Q_{\bTheta_{1}})^{2}$ bounded by a constant within each of $\C_{N}$ pieces of $\W_{N}$ due to Lemma ~\ref{lem: covering}. 
Hence, the above test is exponentially consistent for the small pieces as long as $\eE_{\bTheta_{1}} (Q_{\bTheta_{2}}/Q_{\bTheta_{1}})^{2}$ is bounded for all the $\bTheta_{2}$ in that piece around $\bTheta_{1}$. 
Let $\phi_{N,j}$ be the test for each such piece. 
The final test function $\chi_{N}$ satisfying exponentially decaying Type I and Type II probabilities is then obtained by taking the maximum over the tests $\phi_{N,j}$'s. 
Thus $\chi_{N}=\max_j\phi_{N,j}$. Then Type II error is bounded by the maximum of the Type II errors of the individual pieces and thus remains exponentially small. 
However, Type I error is given by $\eE_{\bTheta_{0}}(\chi_{N}) \leq \sum_{j} \eE_{\bTheta_{0}}\phi_{N,j} \leq \C_{N} \eE_{\bTheta_{0}} \phi_{N,j}$. 
Hence, we need to show that $\log \C_N\lesssim N$, where $\C_{N}$ is the required number of balls needed to cover our sieve $\W_{N}$ with a radius such that $\eE_{\bTheta_{1}} (Q_{\bTheta_{2}}/Q_{\bTheta_{1}})^{2}$ is bounded for all $\bTheta_{2}$ within each piece setting $\bTheta_{1}$ as the center.
Lemma \ref{lem: covering} defines the required sieve $\W_{N}$ satisfying the above requirements.
It also shows that the prior probability of its complement is upper-bounded by $\exp(-CN)$ for some constant $C>0$.

Following Part II of the proof of Theorem 3.1 in \cite{ning2020bayesian}, we can then conclude that having $d_{R}(Q_{\bTheta_{1}},Q_{\bTheta_{0}})\lesssim \epsilon^{2}$ implies $\frac{1}{N}\sum_{i=1}^{N}\|\bmu_{1}^{(i)}-\bmu_{0}^{(i)}\|^{2}_{2}\lesssim \epsilon^{2}$, which completes the proof of the main Theorem \ref{thm: consistency}.

\subsection{Proof Details of Theorem \ref{thm: consistency}} 

\begin{Lem}[Auxiliary Result]
\label{lem: norm}
If $\btheta=\etam_{1}\times \bA^{(1)}\times_{2} \bA^{(2)}\times_{3} \bA^{(3)}$ with $\bA^{(s)}$ is $d_{s}\times r_{s}$, then 
(1) $\|\btheta\|^{2}_{F} \le \|\etam\|^{2}_{F}\prod_{s}(d_{s}r_{s}\|\bA^{(s)}\|^{2}_{\infty})$, 
(2)  $\|\bM_{1}\otimes \bM_{2}\|_{op}=\|\bM_{1}\|_{op}\|\bM_{2}\|_{op}$, 
(3) $\|\bM_{1}\otimes \bM_{2}\|^{2}_{F}=\|\bM_{1}\|^{2}_{F}\|\bM_{2}\|^{2}_{F}$.
\end{Lem}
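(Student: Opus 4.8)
The plan is to treat the three assertions separately, each by an elementary manipulation.

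For (1) I would argue entrywise. Writing $\theta_{h_{1},h_{2},h_{3}} = \sum_{z_{1},z_{2},z_{3}} \eta_{z_{1},z_{2},z_{3}} \prod_{s} a^{(s)}_{z_{s}}(h_{s})$, each factor $a^{(s)}_{z_{s}}(h_{s})$ is bounded in absolute value by $\|\bA^{(s)}\|_{\infty}$, so $\abs{\theta_{h_{1},h_{2},h_{3}}} \le \big(\prod_{s}\|\bA^{(s)}\|_{\infty}\big)\sum_{z_{1},z_{2},z_{3}}\abs{\eta_{z_{1},z_{2},z_{3}}}$. Cauchy--Schwarz then gives $\sum_{z_{1},z_{2},z_{3}}\abs{\eta_{z_{1},z_{2},z_{3}}} \le \sqrt{r_{1}r_{2}r_{3}}\,\|\etam\|_{F}$, hence $\theta_{h_{1},h_{2},h_{3}}^{2} \le (r_{1}r_{2}r_{3})\|\etam\|_{F}^{2}\prod_{s}\|\bA^{(s)}\|_{\infty}^{2}$. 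Summing over the $d_{1}d_{2}d_{3}$ triples $(h_{1},h_{2},h_{3})$ and regrouping the factors yields $\|\btheta\|_{F}^{2} \le \|\etam\|_{F}^{2}\prod_{s}(d_{s}r_{s}\|\bA^{(s)}\|_{\infty}^{2})$. The same computation works verbatim for a tensor with any number of modes.

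For (2) I would use the singular value decompositions $\bM_{1} = \bU_{1}\bS_{1}\bV_{1}\trans$ and $\bM_{2} = \bU_{2}\bS_{2}\bV_{2}\trans$ together with the mixed-product rule $(\bA\otimes\bB)(\bC\otimes\bD) = (\bA\bC)\otimes(\bB\bD)$ to write $\bM_{1}\otimes\bM_{2} = (\bU_{1}\otimes\bU_{2})(\bS_{1}\otimes\bS_{2})(\bV_{1}\otimes\bV_{2})\trans$. Since a Kronecker product of orthogonal matrices is orthogonal and $\bS_{1}\otimes\bS_{2}$ is (rectangular) diagonal with nonnegative entries $\sigma_{i}(\bM_{1})\sigma_{j}(\bM_{2})$, this is a genuine singular value decomposition of $\bM_{1}\otimes\bM_{2}$; taking the largest singular value gives $\|\bM_{1}\otimes\bM_{2}\|_{op} = \sigma_{\max}(\bM_{1})\sigma_{\max}(\bM_{2}) = \|\bM_{1}\|_{op}\|\bM_{2}\|_{op}$. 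For (3) I would simply note that the entries of $\bM_{1}\otimes\bM_{2}$ are exactly the products $(M_{1})_{ij}(M_{2})_{k\ell}$, so $\|\bM_{1}\otimes\bM_{2}\|_{F}^{2} = \sum_{i,j,k,\ell}(M_{1})_{ij}^{2}(M_{2})_{k\ell}^{2} = \big(\sum_{i,j}(M_{1})_{ij}^{2}\big)\big(\sum_{k,\ell}(M_{2})_{k\ell}^{2}\big) = \|\bM_{1}\|_{F}^{2}\|\bM_{2}\|_{F}^{2}$.

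None of these steps is a serious obstacle; the lemma is a routine bookkeeping result. The only points needing mild care are the Cauchy--Schwarz step in (1), which is precisely what trades the $\ell^{1}$ norm of the core for its Frobenius norm and thereby produces the $r_{s}$ factors, and, in (2), the remark that Kronecker products of orthogonal matrices remain orthogonal, so that the displayed factorization is an actual SVD and not merely a product of matrices of the correct shapes.
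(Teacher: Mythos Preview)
Your proposal is correct. The main difference from the paper's argument is in part~(1): the paper vectorizes, writing $\vec(\btheta)=\bLambda\vec(\etam)$ with $\bLambda=\bA^{(1)}\otimes\bA^{(2)}\otimes\bA^{(3)}$, then bounds $\|\btheta\|_{F}^{2}\le\|\bLambda\|_{op}^{2}\|\etam\|_{F}^{2}$, invokes part~(2) to factor $\|\bLambda\|_{op}^{2}=\prod_{s}\|\bA^{(s)}\|_{op}^{2}$, and finishes with $\|\bA^{(s)}\|_{op}^{2}\le d_{s}r_{s}\|\bA^{(s)}\|_{\infty}^{2}$. Your entrywise Cauchy--Schwarz route is more elementary and self-contained, since it does not rely on part~(2) as an ingredient; the paper's route is slightly more structural and makes explicit why the Kronecker product identities are being recorded in the same lemma. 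For parts~(2) and~(3) the paper simply says they follow from the definitions, so your SVD and entrywise arguments are more detailed than what the paper supplies.
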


\begin{proof}
We have $\vec(\btheta) = \bLambda\vec(\etam)$, where $\bLambda=\bA^{(1)}\otimes \bA^{(2)}\otimes \bA^{(3)}$.
Then, 
\vspace*{-5ex}\\
\bse
\textstyle \|\btheta\|^{2}_{F}=\vec(\etam)\trans\bLambda\trans\bLambda \vec(\etam)\le \|\etam\|^{2}_{F} \|\bLambda\|^{2}_{op}=\|\etam\|^{2}_{F} \prod_{s}\|\bA^{(s)}\|^{2}_{op}\leq \|\etam\|^{2}_{F}\prod_{s}(d_{s}r_{s}\|\bA^{(s)}\|^{2}_{\infty}).
\ese
\vspace*{-4ex}\\
Parts (2) and (3) follow directly from the definitions of operator and Frobenius norms.
\end{proof}

\begin{Lem}[Auxiliary Result]
Let $\btheta_{1}$ and $\btheta_{2}$ admit Tucker decompositions with the same ranks as $\btheta_{1}=\etam_{1}\times_{1} \bA^{(1)}_{1}\times_{2} \bA^{(2)}_{1}\times_{3} \bA^{(3)}_{1}$ and $\btheta_{2}=\etam_{2}\times_{1} \bA^{(1)}_{2}\times_{2} \bA^{(2)}_{2}\times_{3} \bA^{(3)}_{2}$, then $\{\|\btheta_{1}-\btheta_{2}\|^{2}_{F}\leq \epsilon^{2}\} \supseteq \{ \|\etam_{1}-\etam_{2}\|^{2}_{F} \leq \kappa^{2}, \|\bA^{(s)}_{1}-\bA^{(s)}_{2}\|^{2}_{\infty} \leq \kappa^{2}: s=1,2,3\}$ if $\|\etam_{2}\|_{F}^{2}\leq B_{1}$ and $\max_{s}\|\bA_{2}^{(s)}\|_{\infty}^{2}\leq B_{2}$ with $\epsilon^{2}\gtrsim d_{1}d_{2}d_{3}r_{1}r_{2}r_{3}\kappa^{2}B^{3}_{2}B_{1}$.
\label{lem:difftensor}
\end{Lem}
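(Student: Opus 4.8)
The plan is to prove the set inclusion by a telescoping expansion of $\btheta_{1}-\btheta_{2}$ followed by a termwise application of Part~(1) of Lemma~\ref{lem: norm}. Adding and subtracting one matched factor at a time gives the identity
\begin{align*}
\btheta_{1}-\btheta_{2}
&=(\etam_{1}-\etam_{2})\times_{1}\bA^{(1)}_{1}\times_{2}\bA^{(2)}_{1}\times_{3}\bA^{(3)}_{1}
+\etam_{2}\times_{1}(\bA^{(1)}_{1}-\bA^{(1)}_{2})\times_{2}\bA^{(2)}_{1}\times_{3}\bA^{(3)}_{1}\\
&\quad+\etam_{2}\times_{1}\bA^{(1)}_{2}\times_{2}(\bA^{(2)}_{1}-\bA^{(2)}_{2})\times_{3}\bA^{(3)}_{1}
+\etam_{2}\times_{1}\bA^{(1)}_{2}\times_{2}\bA^{(2)}_{2}\times_{3}(\bA^{(3)}_{1}-\bA^{(3)}_{2}).
\end{align*}
Each of the four summands is itself a Tucker product of the same multilinear rank $(r_{1},r_{2},r_{3})$, so its squared Frobenius norm is controlled by Lemma~\ref{lem: norm}(1).

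Next I would work on the event $E=\{\|\etam_{1}-\etam_{2}\|_{F}^{2}\le\kappa^{2},\ \|\bA^{(s)}_{1}-\bA^{(s)}_{2}\|_{\infty}^{2}\le\kappa^{2},\ s=1,2,3\}$. Without loss of generality I take $\kappa\le 1$ and $B_{1},B_{2}\ge 1$ (enlarging the upper bounds only weakens the hypotheses, and when $\kappa$ is large the stated conclusion $\epsilon^{2}\gtrsim\cdots$ is vacuous). On $E$ the triangle inequality gives $\|\bA^{(s)}_{1}\|_{\infty}\le\sqrt{B_{2}}+\kappa\le 2\sqrt{B_{2}}$ and $\|\etam_{1}\|_{F}\le\sqrt{B_{1}}+\kappa\le 2\sqrt{B_{1}}$. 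Applying Lemma~\ref{lem: norm}(1) to each summand --- noting that in every summand exactly one of the three mode slots carries a difference $\bA^{(s)}_{1}-\bA^{(s)}_{2}$ (contributing $d_{s}r_{s}\kappa^{2}$), the other two slots carry genuine mode matrices (each contributing at most $d_{s}r_{s}\cdot 4B_{2}$), and the core contributes at most $4B_{1}$ in the first summand and $B_{1}$ in the others --- shows each summand has squared Frobenius norm $\lesssim d_{1}d_{2}d_{3}r_{1}r_{2}r_{3}\kappa^{2}B_{2}^{3}B_{1}$. Summing the four bounds via $\big\|\sum_{j=1}^{4}\bM_{j}\big\|_{F}^{2}\le 4\sum_{j=1}^{4}\|\bM_{j}\|_{F}^{2}$ yields $\|\btheta_{1}-\btheta_{2}\|_{F}^{2}\le C\,d_{1}d_{2}d_{3}r_{1}r_{2}r_{3}\kappa^{2}B_{2}^{3}B_{1}$ for an absolute constant $C$. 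Hence choosing $\epsilon^{2}\ge C\,d_{1}d_{2}d_{3}r_{1}r_{2}r_{3}\kappa^{2}B_{2}^{3}B_{1}$ makes $E$ a subset of $\{\|\btheta_{1}-\btheta_{2}\|_{F}^{2}\le\epsilon^{2}\}$, which is exactly the claimed inclusion.

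The argument involves no genuine difficulty; the only real work is bookkeeping. Specifically, one must track summand by summand which of $\bA^{(s)}_{1}$ or $\bA^{(s)}_{2}$ sits in each mode slot, so that the triangle-inequality bound $\|\bA^{(s)}_{1}\|_{\infty}\le 2\sqrt{B_{2}}$ is invoked only where an index-$1$ factor appears and the raw bound $\|\bA^{(s)}_{2}\|_{\infty}^{2}\le B_{2}$ elsewhere; and one must state the harmless normalizations $\kappa\le 1$, $B_{1},B_{2}\ge 1$ explicitly so that $(\sqrt{B_{2}}+\kappa)^{2}\lesssim B_{2}$ and the powers collate --- the exponent $3$ on $B_{2}$ arising from one $\infty$-norm-squared factor per mode, and the single factor of $B_{1}$ from the core. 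I expect this collation of constants and exponents to be the only point where care is needed.
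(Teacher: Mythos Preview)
Your proof is correct and follows the same approach as the paper --- expand $\btheta_{1}-\btheta_{2}$ into Tucker-product pieces and bound each via Lemma~\ref{lem: norm}(1) --- though your four-term telescoping is tidier than the paper's full multilinear expansion of $(\etam_{1}-\etam_{2}+\etam_{2})\times_{1}(\bA^{(1)}_{1}-\bA^{(1)}_{2}+\bA^{(1)}_{2})\times_{2}\cdots$. One small slip in your narrative: in the \emph{first} summand the difference sits in the core, not in a mode slot, so there the factor $\kappa^{2}$ comes from $\|\etam_{1}-\etam_{2}\|_{F}^{2}$ while all three modes contribute $\lesssim B_{2}$; with your normalization $B_{1}\ge 1$ this still collates to the same bound, so the conclusion stands.
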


\begin{proof}
$\etam_{1}\times_{1} \bA^{(1)}_{1}\times_{2} \bA^{(2)}_{1}\times_{3} \bA^{(3)}_{1}=(\etam_{1}-\etam_{2}+\etam_{2})\times_{1} (\bA^{(1)}_{1}-\bA^{(1)}_{2}+\bA^{(1)}_{2})\times_{2} (\bA^{(2)}_{1}-\bA^{(2)}_{2}+\bA^{(2)}_{2})\times_{3} (\bA^{(3)}_{1}-\bA^{(3)}_{2}+\bA^{(3)}_{2})=\etam_{2}\times_{1} \bA^{(1)}_{2}\times_{2} \bA^{(2)}_{2}\times_{3} \bA^{(3)}_{2} + \sum_{I}(\etam_{1}-\etam_{2})\times_{1}\b1e_{1}\times_{2}\b1e_{2}\times_{3}\b1e_{3}+\sum_{I}\etam_{2}\times_{1}\b1e_{1}\times_{2}\b1e_{2}\times_{3}\b1e_{3},$ where $\b1e_{j}$ is $\bA^{(j)}_{1}-\bA^{(j)}_{2}$ if $j\in I$ otherwise it is $\bA^{(j)}_{2}$. Here $I=\{\{1\},\{2\},\{3\}, \{1,2\},\{1,3\},\{2,3\}\}$. 
Thus, $(\btheta_{1}-\btheta_{2}) = \sum_{I}(\etam_{1}-\etam_{2})\times_{1}\b1e_{1}\times_{2}\b1e_{2}\times_{3}\b1e_{3}+\sum_{I}\etam_{2}\times_{1}\b1e_{1}\times_{2}\b1e_{2}\times_{3}\b1e_{3}$.
The final assertion can now be obtained by applying the triangle inequality to this expansion to bound $\|\btheta_{1}-\btheta_{2}\|^{2}_{F}$. 
\end{proof}

The following lemma presents some standard divergence results for multivariate normal distributions, which can be verified easily. 

\begin{Lem}[Auxiliary Result]
  Let $f_{1}$ and $f_{2}$ be probability densities of $k$-dimensional normal distributions with means  
$\bzeta_{1}$ and $\bzeta_{2}$ with dispersion matrices $\bDelta_{1}$ and  $\bDelta_{2}$ respectively. 
Let $\bR=\bDelta_{1}^{-1/2}(\bDelta_{2}- \bDelta_{1})\bDelta_{1}^{-1/2}$ with eigenvalues $-1<\rho_{1},\ldots,\rho_{k}<\infty$. Then the following assertions hold. 
\begin{enumerate}
\item Reyni Divergence: $d_{R}(f_{1},f_{2}) =-\log \int \sqrt{f_{1} f_{2}}$ is
\bse
& \hspace*{-1.5cm} \frac12 \log \det \{(\bDelta_{2}+\bDelta_{2})/2\} -\frac14 \log \det (\bDelta_{2})-\frac14\log\det (\bDelta_{2}) + \frac{1}{8}(\bzeta_{1}-\bzeta_{2})\trans\left(\frac{\bDelta_{2}+\bDelta_{2}}{2}\right)(\bzeta_{1}-\bzeta_{2})\\
       &=\frac12 \log \det (\bI+\bR/2)-\frac14 \log\det (\bI+\bR)+ \frac{1}{8}(\bzeta_{1}-\bzeta_{2})\trans\left(\frac{\bDelta_{2}+\bDelta_{2}}{2}\right)(\bzeta_{1}-\bzeta_{2})\\
       &=\frac14 \sum_{j=1}^k [2\log (1+\rho_{j}/2)-\log (1+\rho_{j})]+ \frac{1}{8}(\bzeta_{1}-\bzeta_{2})\trans\left(\frac{\bDelta_{2}+\bDelta_{2}}{2}\right)(\bzeta_{1}-\bzeta_{2}).
\ese
\item Kullback-Leibler Divergence: $K(f_{1},f_{2}) =\int f_{1} \log (f_{2}/f_{1})$ is
\bse
&&\hspace*{-2cm}\frac12 \log \det (\bDelta_{2})-\frac12 \log\det (\bDelta_{2}) +\frac12 \trace\{\bDelta_{2}^{-1/2}(\bDelta_{2}- \bDelta_{2}) \bDelta_{2}^{-1/2}\} + \frac12\|\bDelta_{2}^{-1/2}(\bzeta_{1}-\bzeta_{2})\|^{2}_{F}\\
       &=&\frac12 \log \det (\bI+\bR)+\frac12 \trace\{(\bI+\bR)^{-1}-\bI\}+\frac12\|\bDelta_{2}^{-1/2}(\bzeta_{1}-\bzeta_{2})\|^{2}_{F}\\
       &=&\frac12 \sum_{j=1}^k [\log (1+\rho_{j})-\rho_{j}/(1+\rho_{j})]+\frac12\|\bDelta_{2}^{-1/2}(\bzeta_{1}-\bzeta_{2})\|^{2}_{F}.
\ese
\item The Kullback-Leibler variation: $V(f_{1},f_{2}) =\int f_{1} \{\log (f_{2}/f_{1})-K(f_{1},f_{2})\}^{2}$ is 
\bse
       &\frac12 \trace\{(\bDelta_{2}^{-1/2}(\bDelta_{2}- \bDelta_{1}) \bDelta_{2}^{-1/2})^{2}\} + \|\bDelta_{2}^{1/2}\bDelta_{2}^{-1}(\bzeta_{1}-\bzeta_{2})\|^{2}_{F}
       \\&\quad=\frac12 \trace[\{(\bI+\bR)^{-1}-\bI\}^{2}]
       =\frac12 \sum_{j=1}^k \rho_{j}^{2}/(1+\rho_{j})^{2}+\|\bDelta_{2}^{1/2}\bDelta_{2}^{-1}(\bzeta_{1}-\bzeta_{2})\|^{2}_{F}.
\ese
\item Expected Squared Likelihood: If $(2\bDelta_{2}-\bDelta_{1})$ is positive definite, $\int (f_{2}/f_{1})^{2} f_{1}$ is 
\begin{align*} 
G\frac{\det (\bDelta_{2})}{\sqrt{\det(\bDelta_{2}) \det (2\bDelta_{2}-\bDelta_{1})}}=G\frac{1}{\sqrt{\det(\bI+\bR)\det(\bI-\bR)}}=G\exp\left[ \sum_{j=1}^k \log (\rho_{j}^{2}-1)/2 \right],
\end{align*}
where $G=\exp\{(\bzeta_{1}-\bzeta_{2})\trans(2\bDelta_{2}-\bDelta_{2})^{-1}(\bzeta_{1}-\bzeta_{2})\}$.
\end{enumerate}
\label{lem: normal divergences}
\end{Lem}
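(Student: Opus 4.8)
The plan is to verify all four identities by direct computation of the relevant Gaussian integrals and then to rewrite each result in the $\bR$-parametrization, where $\bR = \bDelta_1^{-1/2}(\bDelta_2-\bDelta_1)\bDelta_1^{-1/2}$. The one algebraic fact that organizes everything is $\bDelta_1^{-1/2}\bDelta_2\bDelta_1^{-1/2} = \bI + \bR$, so that $\det(\bDelta_2)/\det(\bDelta_1) = \det(\bI+\bR)$ and $\bDelta_2^{-1}\bDelta_1$ is similar to $(\bI+\bR)^{-1}$; writing a generic $X \sim f_1$ as $X = \bzeta_1 + \bDelta_1^{1/2}\bZ$ with $\bZ\sim\Normal(\bzero,\bI_k)$ turns every expectation into a moment of an affine-quadratic form in a standard Gaussian, and diagonalizing $\bR = \bU\diag(\rho_1,\dots,\rho_k)\bU\trans$ converts the $\log\det$ and trace terms into the displayed sums over $\rho_1,\dots,\rho_k$.

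For the R\'enyi (Bhattacharyya-type) identity I would complete the square in $\int\sqrt{f_1f_2}$: the integrand is proportional to a Gaussian density with precision $\tfrac12(\bDelta_1^{-1}+\bDelta_2^{-1})$, which gives $\int\sqrt{f_1f_2} = \det(\bDelta_1)^{1/4}\det(\bDelta_2)^{1/4}\det\{(\bDelta_1+\bDelta_2)/2\}^{-1/2}\exp\{-\tfrac18(\bzeta_1-\bzeta_2)\trans((\bDelta_1+\bDelta_2)/2)^{-1}(\bzeta_1-\bzeta_2)\}$; taking $-\log$ and substituting $\det\{(\bDelta_1+\bDelta_2)/2\} = \det(\bDelta_1)\det(\bI+\bR/2)$ and $\det(\bDelta_2) = \det(\bDelta_1)\det(\bI+\bR)$ produces the determinant form, and diagonalization gives the eigenvalue form. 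For the KL divergence and its variation I would expand $\log(f_2/f_1)(X) = \tfrac12\log\{\det(\bDelta_1)/\det(\bDelta_2)\} - \tfrac12(X-\bzeta_2)\trans\bDelta_2^{-1}(X-\bzeta_2) + \tfrac12(X-\bzeta_1)\trans\bDelta_1^{-1}(X-\bzeta_1)$; substituting $X = \bzeta_1 + \bDelta_1^{1/2}\bZ$ rewrites this as $\bZ\trans\bA\bZ + \bb\trans\bZ + c$ with $\bA$ orthogonally equivalent to $\tfrac12\{\bI - (\bI+\bR)^{-1}\}$ and $\bb \propto \bDelta_1^{1/2}\bDelta_2^{-1}(\bzeta_1-\bzeta_2)$, so that the divergence equals $\eE[\bZ\trans\bA\bZ] + c = \trace(\bA) + c$ (the $\eE_{f_1}[(X-\bzeta_j)\trans\bDelta_j^{-1}(X-\bzeta_j)]$, $j=1,2$, supply the trace/log-det pieces and the Mahalanobis term), while the variation equals $\var(\bZ\trans\bA\bZ + \bb\trans\bZ) = 2\trace(\bA^2) + \norm{\bb}_2^2$ by the elementary moment formula for quadratic forms in standard normals; substituting $\bA$ and diagonalizing yields the $\tfrac12\sum_j\rho_j^2/(1+\rho_j)^2$ term and its companions. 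For the expected squared likelihood, $\int(f_2/f_1)^2 f_1 = \int f_2^2/f_1$ is again a Gaussian integral whose quadratic-in-$x$ coefficient in the exponent is the effective precision $2\bDelta_2^{-1} - \bDelta_1^{-1}$, so it is finite exactly when this is positive definite (equivalently $2\bDelta_1 - \bDelta_2 \succ 0$, i.e.\ $\bR \prec \bI$, i.e.\ all $\rho_j < 1$); completing the square produces a determinant ratio which, after the $\bR$-substitution, equals $\{\det(\bI+\bR)\det(\bI-\bR)\}^{-1/2} = \prod_j(1-\rho_j^2)^{-1/2}$, while the mean mismatch contributes the exponential factor $G$.

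The only place where any care is needed --- and hence the ``main obstacle'', such as it is --- is the non-commutativity of $\bDelta_1$ and $\bDelta_2$: objects like $\bDelta_2^{-1}\bDelta_1$ are not symmetric, so one must consistently replace them by the symmetrized version $\bDelta_1^{1/2}\bDelta_2^{-1}\bDelta_1^{1/2} = (\bI+\bR)^{-1}$, which has the same determinant, trace, and traces of powers, and then verify that each step uses only these similarity-invariant quantities. One also records the two side conditions under which the Bhattacharyya and squared-likelihood integrals converge, namely $\rho_j > -1$ throughout (so $\bDelta_2 \succ 0$) and $\rho_j < 1$ in the last identity; everything else is routine Gaussian algebra.
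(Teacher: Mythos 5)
Your proposal is correct and is exactly the verification the paper intends: the paper supplies no proof of this lemma at all, stating only that the identities ``can be verified easily,'' and your direct Gaussian computations (completing the square for the Bhattacharyya integral, the quadratic-form moment formulas $\eE[\bZ\trans\bA\bZ]=\trace(\bA)$ and $\var(\bZ\trans\bA\bZ+\bb\trans\bZ)=2\trace(\bA^{2})+\|\bb\|_{2}^{2}$ for the KL divergence and variation, and the convergence/determinant analysis for the squared-likelihood integral) are the standard route. You also correctly repair the several typos in the statement along the way --- e.g., $\bDelta_{1}+\bDelta_{2}$ and its inverse in the Mahalanobis terms, $\bDelta_{1}^{1/2}\bDelta_{2}^{-1}$ in the variation, the integrability condition $2\bDelta_{1}-\bDelta_{2}$ positive definite (equivalently all $\rho_{j}<1$), and $(1-\rho_{j}^{2})^{-1/2}$ rather than $\log(\rho_{j}^{2}-1)/2$ in the last identity.
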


\begin{Lem}
\label{lem:eigenvalues}
The eigenvalues of $\bSigma_{h_{g},0}$ lie uniformly in some compact interval $[b_1,b_2]\subset (0,\infty)$. 
\end{Lem}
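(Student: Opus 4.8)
The plan is to split $\bSigma_{h_{g},0}$ into its three additive pieces and bound them separately. Write $\bSigma_{h_{g},0} = \bM_{\alpha,0} + \bM_{\beta,0} + \sigma_{\epsilon,0}^{2}\bI_{nd_{1}d_{2}d_{3}}$, with $\bM_{\alpha,0} = (\bone_{n}\bone_{n}\trans)\otimes\{\bLambda_{\alpha,h_{g},0}\bSigma_{\alpha,\eta,0}\bLambda_{\alpha,h_{g},0}\trans\}$ and $\bM_{\beta,0} = \bLambda_{\beta,h_{g},0}\bSigma_{\beta,\eta,0}\bLambda_{\beta,h_{g},0}\trans$. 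The lower bound is then immediate: $\bSigma_{\alpha,\eta,0}$ and $\bSigma_{\beta,\eta,0}$ are diagonal with nonnegative entries and $\bone_{n}\bone_{n}\trans \succeq \bzero$, so $\bM_{\alpha,0},\bM_{\beta,0}\succeq\bzero$; hence $\bSigma_{h_{g},0}\succeq \sigma_{\epsilon,0}^{2}\bI$ and its smallest eigenvalue is at least $b_{1}:=\sigma_{\epsilon,0}^{2}>0$, a fixed true parameter independent of $N$ and $h_{g}$.

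For the upper bound I would control the operator norm of each of $\bM_{\alpha,0}$ and $\bM_{\beta,0}$. By sub-multiplicativity, $\|\bM_{\alpha,0}\|_{op}\le n\,\|\bLambda_{\alpha,h_{g},0}\|_{op}^{2}\,\|\bSigma_{\alpha,\eta,0}\|_{op}$, and, since $\bLambda_{\beta,h_{g},0}$ is the vertical stacking of $\bLambda_{\beta}(h_{g},t_{1}),\dots,\bLambda_{\beta}(h_{g},t_{n})$, $\|\bM_{\beta,0}\|_{op}\le\{\sum_{k=1}^{n}\|\bLambda_{\beta}(h_{g},t_{k})\|_{op}^{2}\}\,\|\bSigma_{\beta,\eta,0}\|_{op}$. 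Expanding the Kronecker factorizations $\bLambda_{\alpha,h_{g},0}=\ba_{\alpha,h_{g},0}^{(g)}\otimes\bA_{\alpha,0}^{(1)}\otimes\bA_{\alpha,0}^{(2)}\otimes\bA_{\alpha,0}^{(3)}$ and $\bLambda_{\beta}(h_{g},t)=\ba_{\beta,h_{g},0}^{(g)}\otimes\bA_{\beta,0}^{(1)}\otimes\bA_{\beta,0}^{(2)}\otimes\bA_{\beta,0}^{(3)}\otimes\{(\bA_{\beta,0}^{(t)})\trans\bb(t)\}$ and applying Part~(2) of Lemma~\ref{lem: norm}, it suffices to bound each factor's operator norm. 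Here the semi-orthogonality of the true mode matrices (which we may assume by Lemma~\ref{lem: Tucker-HOSVD}) is essential: because $(\bA_{\alpha,0}^{(s)})\trans\bA_{\alpha,0}^{(s)}$ and $(\bA_{\beta,0}^{(s)})\trans\bA_{\beta,0}^{(s)}$ are \emph{diagonal}, Assumption~\ref{assmp::reg_coef}(1) yields $\|\bA_{\alpha,0}^{(s)}\|_{op}^{2}=\|(\bA_{\alpha,0}^{(s)})\trans\bA_{\alpha,0}^{(s)}\|_{\infty}\le C_{1}$ and $\|\bA_{\beta,0}^{(s)}\|_{op}^{2}\le C_{2}$, with no dimension-dependent factor. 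The remaining factors are controlled by fixed quantities: $\|\ba_{\alpha,h_{g},0}^{(g)}\|_{2}^{2}\le r_{\alpha,g}\|\bA_{\alpha,0}^{(g)}\|_{\infty}^{2}\lesssim d_{g}$ and similarly for $\beta$ (with the number of groups $d_{g}$ fixed); $\|\bb(t)\|_{2}^{2}\le\sum_{h_{t}}b_{q,h_{t}}(t)=1$ by the partition-of-unity property of B-splines, so $\|(\bA_{\beta,0}^{(t)})\trans\bb(t)\|_{2}^{2}\le C_{2}$; and $\|\bSigma_{\alpha,\eta,0}\|_{op}$, $\|\bSigma_{\beta,\eta,0}\|_{op}$ equal the largest true random-effects variances, which are fixed and uniformly bounded (as noted after Assumption~\ref{assmp::reg_coef}, one may alternatively impose a mild boundedness condition on these variances in place of the cross-product condition). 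Since $n$, $d_{g}$, $d_{t}$ and the spline degree $q$ do not grow with $N$, collecting these bounds gives $\|\bSigma_{h_{g},0}\|_{op}\le b_{2}$ for a constant $b_{2}$ uniform in $N$ and $h_{g}$, which together with the lower bound proves the claim.

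The main obstacle will be keeping the operator norms of the mode matrices $\bA_{\alpha,0}^{(s)}$, $\bA_{\beta,0}^{(s)}$ uniformly bounded: the entrywise bound $\|\bA_{\cdot,0}^{(s)}\|_{\infty}\le C$ alone would only give $\|\bA_{\cdot,0}^{(s)}\|_{op}^{2}\lesssim d_{s}r_{\cdot,s}$, which grows with $N$ under Assumptions~\ref{assmp::reg_coef}(3)--(4). This is precisely why Assumption~\ref{assmp::reg_coef}(1) additionally controls $\|(\bA_{\cdot,0}^{(s)})\trans\bA_{\cdot,0}^{(s)}\|_{\infty}$; since for semi-orthogonal matrices this cross-product is diagonal, its sup-norm equals its operator norm, yielding the dimension-free bound on which everything else rests. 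The rest is routine norm arithmetic.
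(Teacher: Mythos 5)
Your proof is correct and follows essentially the same route as the paper's: lower-bound the spectrum by $\sigma_{\epsilon,0}^{2}$ using positive semi-definiteness of the random-effects pieces, upper-bound via the triangle inequality and the Kronecker factorization of operator norms, and use semi-orthogonality so that $\|\bA_{\cdot,0}^{(s)}\|_{op}^{2}$ equals the sup-norm of the diagonal cross-product controlled by Assumption~\ref{assmp::reg_coef}(1). You are in fact somewhat more explicit than the paper about the remaining factors (the group-mode row, the B-spline evaluation via partition of unity, and the boundedness of the true random-effects variances), and your closing remark correctly identifies why the cross-product condition in Assumption~\ref{assmp::reg_coef}(1) is the load-bearing hypothesis.
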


\begin{proof}
Under a factor model setting with $\bSigma_{h_{g},0}=\bLambda_{\alpha,h_{g},0}\bSigma_{\alpha,\eta,0}\bLambda_{\alpha,h_{g},0}\trans+\bLambda_{\beta,h_{g},0}\bSigma_{\beta,\eta,0}\bLambda_{\beta,h_{g},0}\trans+\sigma^{2}_{0,\epsilon}\bI$, then $\|\bSigma_{h_{g},0}\|_{op}\le \|\bSigma_{\alpha,\eta,0}\|_{op}\|\bLambda_{\alpha,h_{g},0}\|^{2}_{op}+\|\bSigma_{\beta,\eta,0}\|_{op}\|\bLambda_{\beta,h_{g},0}\|^{2}_{op} + \sigma^{2}_{\epsilon,0} $. 
On the other hand, $\|\bSigma^{-1}_{h_{g},0}\|_{op}\leq \frac{1}{\sigma^{2}_{\epsilon,0}}$.
We have $\|\bLambda_{\alpha,h_{g},0}\|_{op}\leq \|\ba^{(g)}_{\alpha,0}(h_{g})\|_{op}\prod_{k=1}^{3}\|\bA^{(s)}_{\alpha,0}\|_{op}$ and $\|\bLambda_{\beta,h_{g},0}\|_{op}\leq \|\ba^{(g)}_{\beta,0}(h_{g})\|_{op}\|\bA^{(t)}_{\beta,0}\|_{op}\prod_{k=1}^{3}\|\bA^{(s)}_{\beta,0}\|_{op}$, where $\bA^{(t)}_{\beta,0}$ is the mode-matrix evaluated on the common observation-times.
By definition, $\|\bA^{(s)}_{\alpha,0}\|^2_{op}=\|(\bA^{(s)}_{\alpha,0})^T\bA^{(s)}_{\alpha,0}\|^2_{\infty}$ as $(\bA^{(s)}_{\alpha,0})^T\bA^{(s)}_{\alpha,0}$ is diagonal due to semi-orthogonality and due to Assumption (1), it is bounded for $s\in \S$. A similar argument holds for $\|\bA^{(s)}_{\beta,0}\|^2_{op}$ as well for $s\in \S$.
\end{proof}

\begin{Lem}
(i) Let $\bL_{1},\bL_{2}$ be $p \times r$ matrices and $\bD_{1},\bD_{2}$ be $r \times r$ diagonal matrices with entries in $(0,B)$ in absolute values for some $B>0$. 
Then, for $\bOmega_1=\bL_{1}\bD_{1} \bL_{1}\trans$, $\bOmega_2=\bL_{2}\bD_{2}\bL_{2}\trans$, we have that $\|\bOmega_1-\bOmega_2\|_{F}\le 2B^{2}\sqrt{pr} \|\bL_{1}-\bL_{2}\|_{F}+ pr B^{2}\|\bD_{1}-\bD_{2}\|_{F}$. 
(ii) Let the entries in $\bL_{1}$ and $\bD_{1}$ be bounded in absolute value by $B$, then for any $\epsilon>0$, we have $\{\|\bOmega_1-\bOmega_2\|_{F}\leq \epsilon\}\supseteq \{\|\bL_{1}-\bL_{2}\|_{F}\le \delta_{1}, \|\bD_{1}-\bD_{2}\|_{F}\le \delta_{2}\}$ for some small $\delta_{1},\delta_{2}$, depending on $\epsilon$.  
\label{lem: S.1.7}
\end{Lem}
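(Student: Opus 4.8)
The plan is to derive both statements from a single first-order (telescoping) expansion of the quadratic form $\bL\bD\bL\trans$ combined with submultiplicativity of the operator and Frobenius norms. For part (i), I would start from the identity
\[
\bOmega_1-\bOmega_2=(\bL_1-\bL_2)\bD_1\bL_1\trans+\bL_2(\bD_1-\bD_2)\bL_1\trans+\bL_2\bD_2(\bL_1-\bL_2)\trans,
\]
obtained by adding and subtracting the cross terms $\bL_2\bD_1\bL_1\trans$ and $\bL_2\bD_2\bL_1\trans$. Applying the triangle inequality in $\|\cdot\|_F$, and to each of the three summands the elementary bounds $\|\bM\bN\|_F\le\|\bM\|_{op}\|\bN\|_F$ and $\|\bM\bN\|_F\le\|\bM\|_F\|\bN\|_{op}$, I would reduce everything to the factors $\|\bL_1-\bL_2\|_F$, $\|\bD_1-\bD_2\|_F$, $\|\bD_i\|_{op}$ and $\|\bL_i\|_{op}$. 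Since a $p\times r$ matrix whose entries are at most $B$ in absolute value satisfies $\|\cdot\|_{op}\le\|\cdot\|_F\le\sqrt{pr}\,B$, and a diagonal matrix with such entries has operator norm at most $B$, the two terms containing the factor $\bL_1-\bL_2$ are each bounded by $B^2\sqrt{pr}\,\|\bL_1-\bL_2\|_F$ and the term containing $\bD_1-\bD_2$ by $prB^2\,\|\bD_1-\bD_2\|_F$, which is exactly the stated inequality. Note that positivity of the diagonal entries is irrelevant to this bound; only the absolute-value bound is used.

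For part (ii), the one extra point is that $\bL_2$ and $\bD_2$ are no longer assumed bounded. On the event $\{\|\bL_1-\bL_2\|_F\le\delta_1,\ \|\bD_1-\bD_2\|_F\le\delta_2\}$ with $\delta_1,\delta_2\le 1$, however, every entry of $\bL_2$ and of $\bD_2$ differs from the corresponding entry of $\bL_1$ or $\bD_1$ by at most $1$, hence is bounded in absolute value by $B':=B+1$. Invoking part (i) with $B$ replaced by $B'$ gives $\|\bOmega_1-\bOmega_2\|_F\le 2(B')^2\sqrt{pr}\,\delta_1+pr(B')^2\,\delta_2$, so taking $\delta_1=\min\{1,\epsilon/(4(B')^2\sqrt{pr})\}$ and $\delta_2=\min\{1,\epsilon/(2pr(B')^2)\}$ forces the right-hand side below $\epsilon$ and yields the asserted inclusion of events.

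I do not anticipate a genuine obstacle; this is a routine perturbation estimate. The only things requiring a little care are (a) choosing, for each of the three product terms, the correct pairing of operator-norm and Frobenius-norm factors so that the constants come out exactly $2B^2\sqrt{pr}$ and $prB^2$ rather than something larger, and (b) in part (ii), remembering to absorb the unbounded matrices $\bL_2,\bD_2$ into the enlarged constant $B'$ before applying part (i). If one prefers not to enlarge $B$, one can instead carry $\|\bL_2\|_{op}\le\|\bL_1\|_{op}+\|\bL_1-\bL_2\|_F$ and $\|\bD_2\|_{op}\le\|\bD_1\|_{op}+\|\bD_1-\bD_2\|_{op}$ through the three terms, at the cost of messier constants.
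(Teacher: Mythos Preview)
Your proposal is correct and essentially identical to the paper's proof: the paper uses the same three-term telescoping decomposition, the same mixed operator/Frobenius submultiplicativity bounds, and for part (ii) the triangle-inequality control of $\|\bL_2\|_{op}$ and $\|\bD_2\|_{op}$ that you list as your alternative at the end. The only cosmetic difference is that the paper carries $\|\bL_2\|_{op}\le\|\bL_1\|_{op}+\|\bL_2-\bL_1\|_{op}$ directly rather than first enlarging $B$ to $B'=B+1$, but this is exactly the variant you already anticipated.
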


\begin{proof}
(i) By matrix norm inequalities $\|\bA \bB\|_{F}\le \min\{\|\bA\|_{F}\|\bB\|_{op}, \|\bA\|_{op}\|\bB\|_{F}\}$, we have 
\begin{align*}
\|\bOmega_1-\bOmega_2\|_{F}
\le \|\bL_{1}-\bL_{2}\|_{F}(\|\bD_{1}\|_{op}\|\bL_{1}\|_{op}+\|\bD_{2}\|_{op}\|\bL_{2}\|_{op})+\|\bL_{1}\|_{op}\|\bL_{2}\|_{op}\|\bD_{1}-\bD_{2}\|_{F}.
\end{align*}    
Since $\bD_{1},\bD_{2}$ are diagonal matrices with entries in $(0,B)$ and $\|\bL_{1}\|_{op},\|\bL_{2}\|_{op} \leq \sqrt{pr}B$, the result is immediate. 

(ii) To prove the second assertion, note that $\|\bD_{2}\|_{op}\leq \|\bD_{1}\|_{op}+\|\bD_{2}-\bD_{1}\|_{op}$ and $\|\bL_{2}\|_{op}\leq \|\bL_{1}\|_{op}+\|\bL_{2}-\bL_{1}\|_{op}$. 
Hence, the assertion holds for small $\delta_{1}\lesssim \epsilon/(\sqrt{pr}B^2),\delta_{2}\le \epsilon/(prB^2)$.
\end{proof}

\begin{Lem}
(i) Let $\bL_{1},\bL_{2}$ be $(p\times r)$ matrices, 
$\bD_{1},\bD_{2}$ be $r \times r$ diagonal matrices, 
and $\bc_1,\bc_2$ be $r \times 1$ vectors with entries in $(0,B)$ in absolute values for some $B>0$. 
Then, we have that $\|\bL_{1}\bc_1-\bL_{2}\bc_2\|_{F}\le B\sqrt{pr} \|\bc_1-\bc_2\|_{2}+ B\|\bL_{1}-\bL_{2}\|_{F}+\|\bL_{1}-\bL_{2}\|_{op}\|\bc_1-\bc_2\|_{2}$.
(ii) Let the entries in $\bL_{1}$ be $\bD_{1}$ be bounded in absolute value by $B$, then for any $\epsilon > 0$, we have $\{\|\bL_{1}\bc_1-\bL_{2}\bc_2\|_{F}\leq \epsilon\}\supseteq \{\|\bL_{1}-\bL_{2}\|_{F}\le \delta_{1}, \|\bc_{1}-\bc_{2}\|_{F}\le \delta_{2}\}$ 
for some small $\delta_{1},\delta_{2}$, depending on $\epsilon$.  
\label{lem: S.1.8}
\end{Lem}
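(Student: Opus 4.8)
The argument mirrors that of Lemma~\ref{lem: S.1.7}: part (i) follows from a single algebraic split together with elementary matrix-norm inequalities, and part (ii) is deduced by inverting the bound in (i). The diagonal matrices $\bD_1,\bD_2$ do not enter the estimate and are listed only to keep the notation parallel with the places where the lemma is later applied.

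For (i), I would write $\bL_1=\bL_2+(\bL_1-\bL_2)$ and $\bc_1=\bc_2+(\bc_1-\bc_2)$, expand the product, and cancel the common term $\bL_2\bc_2$, obtaining
\[
\bL_1\bc_1-\bL_2\bc_2=\bL_2(\bc_1-\bc_2)+(\bL_1-\bL_2)\bc_2+(\bL_1-\bL_2)(\bc_1-\bc_2).
\]
Then apply the triangle inequality and, termwise, the bounds $\|\bM\bv\|_2\le\|\bM\|_{op}\|\bv\|_2$ and $\|\bM\bv\|_2\le\|\bM\|_F\|\bv\|_2$. The first summand is at most $\|\bL_2\|_{op}\|\bc_1-\bc_2\|_2\le B\sqrt{pr}\,\|\bc_1-\bc_2\|_2$, since for a $p\times r$ matrix with entries bounded by $B$ one has $\|\bL_2\|_{op}\le\|\bL_2\|_F\le B\sqrt{pr}$; the second is at most $\|\bL_1-\bL_2\|_F\|\bc_2\|_2\lesssim B\|\bL_1-\bL_2\|_F$ by the entrywise bound on $\bc_2$; and the third is at most $\|\bL_1-\bL_2\|_{op}\|\bc_1-\bc_2\|_2$. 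Summing the three pieces gives the claimed inequality.

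For (ii), I would feed this into a choice of radii. Using $\|\bL_1-\bL_2\|_{op}\le\|\bL_1-\bL_2\|_F$ and $\|\bc_1-\bc_2\|_F=\|\bc_1-\bc_2\|_2$, on the event $\{\|\bL_1-\bL_2\|_F\le\delta_1,\ \|\bc_1-\bc_2\|_F\le\delta_2\}$ the right-hand side of (i) is at most $B\sqrt{pr}\,\delta_2+B\delta_1+\delta_1\delta_2$. Taking $\delta_1\le\min\{1,\epsilon/(2B)\}$ and $\delta_2\le\epsilon/\{2(B\sqrt{pr}+1)\}$ gives $B\delta_1\le\epsilon/2$ and $B\sqrt{pr}\,\delta_2+\delta_1\delta_2\le(B\sqrt{pr}+1)\delta_2\le\epsilon/2$, so the whole expression is at most $\epsilon$, which is the asserted inclusion. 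There is no genuine obstacle here; the only step needing a little care is handling the bilinear cross-term $\|\bL_1-\bL_2\|_{op}\|\bc_1-\bc_2\|_2$, which is why $\delta_1$ is capped at $1$ so that it is dominated by the $\delta_2$-term. The radii produced this way are exactly of the kind used, alongside Lemma~\ref{lem: S.1.7}, in the prior-positivity step of Theorem~\ref{thm: consistency}, where $\|\bmu^{(i)}-\bmu_0^{(i)}\|_2$ is controlled via perturbations of the core tensors and mode matrices.
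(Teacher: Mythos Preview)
Your proof is correct and follows essentially the same approach as the paper. The paper's decomposition anchors at $(\bL_1,\bc_1)$ rather than your $(\bL_2,\bc_2)$, yielding $\bL_1\bc_1-\bL_2\bc_2=-(\bL_2-\bL_1)(\bc_2-\bc_1)-(\bL_2-\bL_1)\bc_1-\bL_1(\bc_2-\bc_1)$, but this is the same three-term bilinear split up to relabeling; the termwise norm bounds and the choice of $\delta_1,\delta_2$ in part (ii) are likewise the same, with your version being slightly more explicit about capping $\delta_1$ to control the cross-term.
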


\begin{proof}
(i) We have $\bL_{1}\bc_1-\bL_{2}\bc_2=\bL_{1}\bc_1-(\bL_{2}-\bL_{1}+\bL_{1})(\bc_2-\bc_1+\bc_1)=-(\bL_{2}-\bL_{1})(\bc_2-\bc_1)-(\bL_{2}-\bL_{1})\bc_1-\bL_{1}(\bc_2-\bc_1)$. 
Then, applying the matrix norm inequalities, we get the final result.

(ii) Using the above decomposition, the second assertion holds immediately for small $\delta_{1},\delta_{2}\lesssim \epsilon$ for small $\delta_{1}\lesssim \epsilon/B,\delta_{2}\lesssim \epsilon/(\sqrt{pr}B)$.
\end{proof}

\begin{Lem}
$\{\|\bLambda_{\alpha,h_{g},1}-\bLambda_{\alpha,h_{g},2}\|^{2}_{2}\leq \epsilon^{2}\} \supseteq \{ \|\bA^{(s)}_{\alpha,1}-\bA^{(s)}_{\alpha,2}\|^{2}_{\infty} \leq \kappa^{2}: s \in \S_{g}\}$  with $\epsilon^{2}\gtrsim d_{1}d_{2}d_{3}\kappa^{2}B^{3}_{2} r_{\alpha,1} r_{\alpha,2}r_{\alpha,3}$ and $\{\|\bLambda_{\beta,h_{g},1}-\bLambda_{\beta,h_{g},2}\|^{2}_{2}\leq \epsilon_{1}^{2}\} \supseteq \{ \|\bA^{(s)}_{\beta,1}-\bA^{(s)}_{\beta,2}\|^{2}_{\infty} \leq \kappa^{2}: s \in \S_{g,t}\}$  with $\epsilon_{1}^{2}\gtrsim d_{1}d_{2}d_{3}n\kappa^{2}B^{3}_{2}r_{\beta,1}r_{\beta,2}r_{\beta,3}r_{\beta,t}$.
\label{lem: S.1.9}
\end{Lem}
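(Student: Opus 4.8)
The plan is to prove the first (the $\alpha$) assertion in detail and then note the two small changes needed for the second. Here I read $\|\cdot\|_{2}$ on the matrices $\bLambda$ as the Frobenius norm (consistent with its role in $\bmu^{(i)}$ and with Lemma~\ref{lem: S.1.8}), and I write $B_{2}$ for the common bound on the sup-norms of the reference mode matrices, so that with $\kappa<1$ every mode matrix occurring below has sup-norm controlled by $B_{2}$. First I would write $\bLambda_{\alpha,h_{g},j}=\bu_{j}\otimes\bA^{(1)}_{\alpha,j}\otimes\bA^{(2)}_{\alpha,j}\otimes\bA^{(3)}_{\alpha,j}$ with $\bu_{j}=\ba^{(g)}_{\alpha,h_{g},j}$ for $j=1,2$, and expand $\bLambda_{\alpha,h_{g},1}-\bLambda_{\alpha,h_{g},2}$ via the multilinear telescoping identity for Kronecker products: it equals a sum of four terms, in each of which exactly one of the four factors is replaced by its difference ($\bu_{1}-\bu_{2}$ or $\bA^{(s)}_{\alpha,1}-\bA^{(s)}_{\alpha,2}$) while the other three are taken from one of the two configurations. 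Applying the triangle inequality for the Frobenius norm and then Part~(3) of Lemma~\ref{lem: norm}, the Frobenius norm of each term factorizes into the product of the Frobenius norms of its four factors.

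Next I would turn each of these Frobenius norms into a sup-norm by counting entries: $\|\bA^{(s)}_{\alpha,j}\|_{F}\le\sqrt{d_{s}r_{\alpha,s}}\,\|\bA^{(s)}_{\alpha,j}\|_{\infty}\lesssim\sqrt{d_{s}r_{\alpha,s}B_{2}}$ for a full mode factor, $\|\bA^{(s)}_{\alpha,1}-\bA^{(s)}_{\alpha,2}\|_{F}\le\sqrt{d_{s}r_{\alpha,s}}\,\kappa$ for a difference factor, $\|\bu_{j}\|_{2}\le\sqrt{r_{\alpha,g}}\,\|\bA^{(g)}_{\alpha,j}\|_{\infty}$ for the group factor (which carries no dependence on $d_{1},d_{2},d_{3}$), and $\|\bu_{1}-\bu_{2}\|_{2}\le\sqrt{r_{\alpha,g}}\,\kappa$. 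Summing the four products, each is bounded by $\kappa\sqrt{d_{1}d_{2}d_{3}r_{\alpha,1}r_{\alpha,2}r_{\alpha,3}}$ times a power of $B_{2}$ at most $3/2$ (the largest power appearing on the term that carries the group-factor difference); since one may take $B_{2}\gtrsim1$, the total is $\lesssim\kappa\sqrt{d_{1}d_{2}d_{3}r_{\alpha,1}r_{\alpha,2}r_{\alpha,3}}\,B_{2}^{3/2}$, up to a factor depending only on the fixed group dimensions. Squaring gives the claimed set inclusion with $\epsilon^{2}\gtrsim d_{1}d_{2}d_{3}r_{\alpha,1}r_{\alpha,2}r_{\alpha,3}\kappa^{2}B_{2}^{3}$.

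For the $\beta$ assertion, two adjustments suffice. Since $\bLambda_{\beta,h_{g}}$ is the vertical stacking of the $n$ blocks $\bLambda_{\beta}(h_{g},t_{k})$, we have $\|\bLambda_{\beta,h_{g},1}-\bLambda_{\beta,h_{g},2}\|_{F}^{2}=\sum_{k=1}^{n}\|\bLambda_{\beta}(h_{g},t_{k})_{1}-\bLambda_{\beta}(h_{g},t_{k})_{2}\|_{F}^{2}$, which introduces the extra factor $n$. Each block additionally carries the temporal factor $\bA^{(t)}_{\beta,j}\bb(t_{k})$; because the B-spline evaluation vector $\bb(t_{k})$ is nonnegative with entries summing to one, $\|(\bA^{(t)}_{\beta,1}-\bA^{(t)}_{\beta,2})\bb(t_{k})\|_{2}\le\sqrt{r_{\beta,t}}\,\kappa$ and $\|\bA^{(t)}_{\beta,j}\bb(t_{k})\|_{2}\lesssim\sqrt{r_{\beta,t}B_{2}}$, so this factor behaves like one more mode matrix whose dimension count is $r_{\beta,t}$ (and never $d_{t}$). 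Rerunning the telescoping, triangle-inequality, and entry-counting steps then yields $\epsilon_{1}^{2}\gtrsim d_{1}d_{2}d_{3}\,n\,r_{\beta,1}r_{\beta,2}r_{\beta,3}r_{\beta,t}\kappa^{2}B_{2}^{3}$.

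The argument is conceptually routine; the step I expect to be the main obstacle is the constant bookkeeping in the entry-counting stage — making sure the prefactor reduces to exactly the advertised dimension factors. The two points requiring explicit care are that only the three spatial mode matrices should contribute $\sqrt{d_{s}r_{\cdot,s}}$-type growth (the group mode matrix, having fixed dimensions, being absorbed into $\lesssim$), and that the partition-of-unity property of the B-spline basis must be invoked so that the temporal factor $\bA^{(t)}\bb(t)$ contributes $r_{\beta,t}$ rather than $d_{t}$. Neither is hard, but both must be handled deliberately to land on the stated bounds.
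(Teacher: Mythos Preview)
Your proposal is correct and follows essentially the same route as the paper. The paper's proof is a one-liner that invokes Lemma~\ref{lem:difftensor} ``setting the core-tensor values at $1$''---i.e., it views the Kronecker product $\bLambda_{\alpha,h_{g}}$ as a rank-$(1,\dots,1)$ Tucker tensor whose mode ``matrices'' are the vectorized $\bA^{(s)}$'s and whose scalar core equals $1$, so that the telescoping expansion and norm bounds in the proof of Lemma~\ref{lem:difftensor} apply verbatim; you instead carry out that telescoping directly on the Kronecker product and do the entry counting by hand, which is the same mechanism made explicit. Your treatment of the $\beta$ case---the factor $n$ from stacking the $n$ time blocks and the observation that the partition-of-unity property of the B-splines makes the temporal factor contribute $r_{\beta,t}$ rather than $d_{t}$---fills in details the paper leaves implicit. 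One cosmetic point: in your telescoping each term has exactly three ``full'' factors, so every term carries $B_{2}^{3/2}$, not just the one with the group difference; this does not affect the bound.
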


\begin{proof} We have
$\{\|\btheta_{1}-\btheta_{2}\|^{2}_{F}\leq \epsilon^{2}\} \supseteq \{ \|\bA^{(s)}_{\alpha,1}-\bA^{(s)}_{\alpha,2}\|^{2}_{\infty} \leq \kappa^{2}: s \in \S_{g} \}$ if $\max_{s}\|\bA^{(s)}_{\alpha,2}\|_{\infty}^{2}\leq B_{2}$ with $\epsilon^{2}\gtrsim d_{1}d_{2}d_{3}\kappa^{2}B^{3}_{2}R_{1}R_{2}R_{3}$ on applying Lemma~\ref{lem:difftensor} and setting the core-tensor values at 1.
\end{proof}

Lemmas \ref{lem: S.1.8} and \ref{lem: S.1.9} allow quantifying the separation between $\bmu_{h_{g},1}$ and $\bmu_{h_{g},2}$ in terms of the differences between mode-matrices and core-tensors. Similarly, Lemmas \ref{lem: S.1.7} and \ref{lem: S.1.9} allow quantifying the separation between $\bSigma_{h_{g},1}$ and $\bSigma_{h_{g},2}$ again in terms of the differences between mode-matrices and core-tensors. When one of them is the truth, we will consider the second assertion of the Lemmas \ref{lem: S.1.7} and \ref{lem: S.1.8}.

\begin{Lem}[Sieve]
\label{lem: covering}
Let $\mathcal{W}_{N}$ be the collection of $\bA^{(g)}_{\alpha}$, 
$\{\bgamma^{(s)}_{\alpha,\ell,k}: 1\le \ell\le r_{\alpha,s}, 1\le k \le q_{\alpha, s},  s \in \S_{g}\}$, $\bA^{(g)}_{\beta}$, 
$\{\bgamma^{(s)}_{\beta,\ell,k}: 1\le \ell\le r_{\beta,s},1 \le k \le q_{\beta,s},  s \in \S_{g,t}\}$,  
$\{\etam_{\alpha}\}$, $\{\etam_{\beta}\}$, 
and the variance parameters $\sigma,\bsigma_{\alpha},\bsigma_{\beta}$ such that\\ 
(1) $\|\bA^{(g)}_{\alpha}\|_{\infty},\|\bgamma^{(s)}_{\alpha,\ell,k}\|_{\infty}\le B_{N,\alpha}$ for all $k$ and $s \in \S_{g}$, ~~~\\
(2) $\|\bA^{(s)}_{\beta}\|_{\infty}, \|\bgamma^{(s)}_{\beta,\ell,k}\|_{\infty} \le B_{N,\beta}$ for all $k$ and $s \in \S_{g,t}$, ~~~\\
(3) $\|\etam_{\alpha}\|_{\infty}\leq E_{N,\alpha}$, ~~~\\
(4) $\|\etam_{\beta}\|_{\infty}\leq E_{N,\beta}$, ~~~\\
(5) $\sigma,\|\bsigma_{\alpha}\|_{\infty},\|\bsigma_{\beta}\|_{\infty} \in (1/n, e^{c_{2}n})$ 
with $B_{N,\alpha}, B_{N,\beta},E_{N,\alpha},E_{N,\beta}, s_{N}\asymp N^{c}$, $m_{1,N},m_{2,N},m_{3,N}, K_{N} \asymp N^{c_{1}}$ for some constant $c,c_{1},c_{2}>0$ \citep[following][]{shen2015adaptive}, and ~~~\\
(6) $m_{1,N},m_{2,N},m_{3,N} \lesssim m_{N} \asymp N/\log(N)$.\\ 
Then, (i) $P(\bTheta \notin \W_{N})\leq e^{-CN}$ and (ii) $\mathcal{W}_N$ can be split into $\C_{N}$ pieces $\mathcal{W}_{N,l}$, $l=1,\ldots,\C_{N}$, $\log \C_{N}\lesssim N$, such that there exist exponentially consistent test functions $\phi_{T,l}$, $k=1,\ldots,N$ whenever $\frac{1}{N}\sum_{i=1}^{N}\|\bmu^{(i)}-\bmu_{0}^{(i)}\|^{2}_{2} + \|\bSigma-\bSigma_{0}\|^{2}_{F} > \epsilon^{2}$.
\end{Lem}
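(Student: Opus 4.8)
The plan is to prove the two assertions of Lemma~\ref{lem: covering} following the standard sieve-and-test recipe behind Theorem~6.39 of \cite{GhosalBook}. For assertion (i) I would bound $\Pi(\bTheta\notin\W_N)$ by a union bound over the finitely many defining constraints. The coordinates $\bgamma^{(s)}_{\alpha,\ell,k}$, $\bgamma^{(s)}_{\beta,\ell,k}$ carry fixed Gaussian priors with covariances $\bQ^{(s)}$ of bounded operator norm, and the core-tensor entries are conditionally Gaussian with variances bounded on the relevant event, so sub-Gaussian tail bounds give $\Pi(\|\bgamma^{(s)}_{\alpha,\ell,k}\|_\infty>B_{N,\alpha})\lesssim m_{s,N}\exp(-cB_{N,\alpha}^2)$ and similarly for $\etam_\alpha,\etam_\beta$; since the number of such vectors and their dimensions are polynomial in $N$ while $B_{N,\alpha},B_{N,\beta},E_{N,\alpha},E_{N,\beta}\asymp N^{c}$, taking $c$ large makes each such probability $\le e^{-CN}$. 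The inverse-Gamma priors on $\sigma,\bsigma_\alpha,\bsigma_\beta$ are handled with the exponential decay of the Gamma density for large precision and its algebraic decay near the origin to control the two ends of the admissible window, and the Poisson-type tail $e^{-x\log x}$ on $m_{1,N},m_{2,N},m_{3,N},K_N$ makes $\Pi(m_{s,N}>N^{c_1})$ super-exponentially small. Summing the finitely many bounds gives $\Pi(\bTheta\notin\W_N)\le e^{-CN}$.

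For assertion (ii) I would first bound the metric entropy of $\W_N$. Conditions (1)--(6) together with Assumption~\ref{assmp::reg_coef} confine $\W_N$ to a bounded subset of a Euclidean space of dimension $D_N$ (the number of free mode-matrix, core-tensor and variance coordinates); the rank constraints $\log r_{\alpha,s},\log r_{\beta,s}\lesssim\log N$ and the basis ceiling $m_{1,N},m_{2,N},m_{3,N}\lesssim m_N\asymp N/\log N$ ensure $D_N\lesssim N/\log N$ once $c,c_1$ are fixed compatibly. Hence $\W_N$ admits, for any $\delta>0$, a $\delta$-net in the natural coordinate norm of cardinality $\C_N$ with $\log\C_N\lesssim D_N\log(\mathrm{diam}(\W_N)/\delta)\lesssim N$. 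For each net point $\bTheta_1$ take the likelihood-ratio test $\phi_{N,j}=\mathbf{1}\{Q_{\bTheta_1}/Q_{\bTheta_0}>1\}$; Markov's inequality applied to $\sqrt{Q_{\bTheta_1}/Q_{\bTheta_0}}$ bounds the Type~I error by $\exp\{-N\,d_R(Q_{\bTheta_1},Q_{\bTheta_0})\}$, and Cauchy--Schwarz bounds the Type~II error uniformly over the net ball by $\exp\{-\tfrac12 N\,d_R(Q_{\bTheta_1},Q_{\bTheta_0})\}\sup_{\bTheta_2}[\,\eE_{\bTheta_1}(Q_{\bTheta_2}/Q_{\bTheta_1})^2\,]^{1/2}$, the supremum being $O(1)$ by Part~(4) of Lemma~\ref{lem: normal divergences} once $\delta$ is small enough that $2\bSigma_{\bTheta_1}-\bSigma_{\bTheta_2}$ is positive definite on the ball. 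The global test $\chi_N=\max_j\phi_{N,j}$ then has Type~II error $\le e^{-CN}$ and Type~I error $\le\C_N e^{-CN}\le e^{-C'N}$ since $\log\C_N\lesssim N$ and $C$ is taken large.

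It remains to lower bound $d_R(Q_{\bTheta_1},Q_{\bTheta_0})$ by a multiple of $\epsilon^2$ whenever $\frac1N\sum_i\|\bmu^{(i)}_1-\bmu^{(i)}_0\|_2^2+\|\bSigma_1-\bSigma_0\|_F^2>\epsilon^2$. Since the subjects are independent and grouped, $d_R$ is the weighted average $\sum_{h_g}(N_{h_g}/N)\,d_R^{[h_g]}$ of per-group Gaussian R\'enyi divergences, and Part~(1) of Lemma~\ref{lem: normal divergences} writes each $d_R^{[h_g]}$ as a quadratic-form term in $\bmu_{1,h_g}-\bmu_{0,h_g}$ plus $\tfrac14\sum_j g(\rho_j)$ with $g(\rho)=2\log(1+\rho/2)-\log(1+\rho)\ge c\min(\rho^2,1)$ and $\rho_j$ the eigenvalues of $\bSigma_{0,h_g}^{-1/2}(\bSigma_{1,h_g}-\bSigma_{0,h_g})\bSigma_{0,h_g}^{-1/2}$. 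Lemma~\ref{lem:eigenvalues} keeps the eigenvalues of $\bSigma_{0,h_g}$ in a fixed compact interval, so on the part of $\W_N$ where $\bSigma_{1,h_g}$ also has eigenvalues in a slightly larger compact interval the quadratic-form term is $\gtrsim\|\bmu_{1,h_g}-\bmu_{0,h_g}\|_2^2$ and $\sum_j g(\rho_j)\gtrsim\|\bSigma_{1,h_g}-\bSigma_{0,h_g}\|_F^2$; on the complementary part some $|\rho_j|$ is bounded away from $0$ and $\infty$, so $d_R^{[h_g]}$ is already bounded below by a positive constant and the test is trivially exponentially consistent there. The auxiliary Lemmas~\ref{lem: norm}, \ref{lem:difftensor}, \ref{lem: S.1.7}, \ref{lem: S.1.8} and \ref{lem: S.1.9} are invoked throughout to translate coordinate-norm closeness into closeness of the induced $(\bmu,\bSigma)$, which is what makes the $\delta$-net fine enough both for the squared-likelihood control and for this separation bound.

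The main obstacle I anticipate is twofold. First, securing $\log\C_N\lesssim N$ while the ranks and the number of bases grow: this forces a delicate simultaneous choice of the sieve radii $B_{N,\cdot},E_{N,\cdot},s_N$ and the exponents $c,c_1$ against the ceiling $m_N\asymp N/\log N$, so that the total coordinate dimension $D_N$ stays $\lesssim N/\log N$. Second, pushing the R\'enyi lower bound through uniformly over $\W_N$ even though the covariance eigenvalues are only polynomially (not uniformly compactly) bounded on the sieve, which is exactly what the ``well-behaved versus ill-behaved covariance'' case split above is meant to handle; making the constants in $g(\rho)\ge c\min(\rho^2,1)$ and in the norm-equivalence lemmas combine cleanly is where most of the bookkeeping lies.
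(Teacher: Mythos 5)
Your proposal is correct and follows essentially the same route as the paper: Gaussian/Markov tail bounds with a union over the polynomially many parameters for the prior mass of $\W_N^c$, and a coordinate-wise discretization of the sieve (equivalently, a $\delta$-net with $\log\C_N\lesssim D_N\log N\lesssim N$) combined with likelihood-ratio tests controlled via Markov applied to $\sqrt{Q_{\bTheta_1}/Q_{\bTheta_0}}$, Cauchy--Schwarz, and Part (4) of Lemma~\ref{lem: normal divergences}, with Lemmas~\ref{lem: S.1.7}--\ref{lem: S.1.9} translating coordinate closeness into closeness of $(\bmu,\bSigma)$. The only cosmetic difference is that you spell out the R\'enyi separation lower bound explicitly, whereas the paper defers that step to the argument of \cite{ning2020bayesian} in the proof of the main theorem.
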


\begin{proof}
Since priors for the core-tensor elements and marginal distribution of the mode-matrix elements are Gaussian, we apply the Markov inequality to get the upper bound for $P(\bTheta \notin \W_{N})\leq M^{6}_{N}r_{\alpha,1}r_{\alpha,2}r_{\alpha,3}r_{\beta,1}r_{\beta,2}r_{\beta,3}Kr_{\beta,t}G^{2}r_{\alpha,g}r_{\beta,g}\exp(-C'N^{c}) + 4\exp(-c_{1}N^{c_{1}}\log N)+(r_{\alpha,g}r_{\alpha,1}r_{\alpha,2}r_{\alpha,3}+r_{\beta,g}r_{\beta,1}r_{\beta,2}r_{\beta,3}r_{\beta,t}+1)\exp(-C''N^{c})$ for some constant $C',C''>0$. 
We can choose $c_{1}$ such that $\log P(\bTheta \notin \W_{N})\lesssim -N$.

We divide an interval $[-N^{c},N^{c}]$ into $N_{3}\asymp N^{B}$ equal sub-intervals for a sufficiently large $B$ to be chosen later. 
The total number of units will be $\C_{N}=N_{3}^{M^{4}_{N}P_{N}}$, where $P_{N}$ is the total number of parameters $m_{N}r_{\alpha,1}+m_{N}r_{\alpha,2}+m_{N}r_{\alpha,3}+m_{N}r_{\beta,1}+m_{N}r_{\beta,2}+m_{N}r_{\beta,3}+Kr_{\beta,t}+Gr_{\alpha,g}+Gr_{\beta,g}+r_{\alpha,g}r_{\alpha,1}r_{\alpha,2}r_{\alpha,3}+r_{\beta,g}r_{\beta,1}r_{\beta,2}r_{\beta,3}r_{\beta,t}$. 
Under the assumptions on the individual components, it is clear that $\log(\C_N)\lesssim N$.

In our test construction, within each piece of $\W_{N}$, we need $\eE_{\bTheta_{1}} (Q_{\bTheta_{2}}/Q_{\bTheta_{1}})^{2}$ bounded. Now, by Part 3 of Lemma \ref{lem: normal divergences} we have $\eE_{\bTheta_{1}} (Q_{\bTheta_{2}}/Q_{\bTheta_{1}})^{2}$ bounded by a constant if 1) $\|\bSigma_{1}-\bSigma_{2}\|_{op}\leq \frac{1}{N^{2}nd_1d_2d_3}$ and 2) $\sum_{i=1}^{N}\|\bmu^{(i)}_{1}-\bmu^{(i)}_{2}\|^{2}_{2}$ is bounded by a constant. Combining Lemmas \ref{lem: S.1.7}, \ref{lem: S.1.8}, \ref{lem: S.1.9}, we can control 1) and 2)  within each piece of $\W_{N}$ if $\|\bTheta_{1}-\bTheta_{2}\|_{\infty}\le N^{-B'}$ following Part I of the proof of Theorem 3.1 in \cite{ning2020bayesian}. Thus, we need to set the above $B$ as $c+B'$. This completes the proof.
\end{proof}

\clearpage\newpage
\section{Additional Results for Real ADNI-3 Data Analysis}
\label{sec: sm add figs real}

We start with a brief discussion about inference on the individual starting values, captured by the $\alpha$'s. 
Figure~\ref{fig: realbaseFA2 and realbaseODI2} illustrates the distributions of the estimated $\alpha$'s in the corpus callosum (CC) region for the different groups, namely normal cognition (NC), mild cognitive impairment (MCI), and Alzheimer's (AD), indicating, as expected, that the AD group was well-separated from the other two at the study's onset. 
The ordering of the groups based on baseline estimates in the CC region also broadly aligns with their respective degrees of cognitive decline that define these groups, specifically for ODI (Figure~\ref{fig: realbaseFA2 and realbaseODI2}).
Results associated with NDI and FWF are in Figure~\ref{fig: realbaseFWF2 and realbaseNDI2}.

\begin{figure}[htbp]
\centering
\includegraphics[width = 0.45\textwidth, height = 6cm, trim=0 0 3cm 0, clip=true]{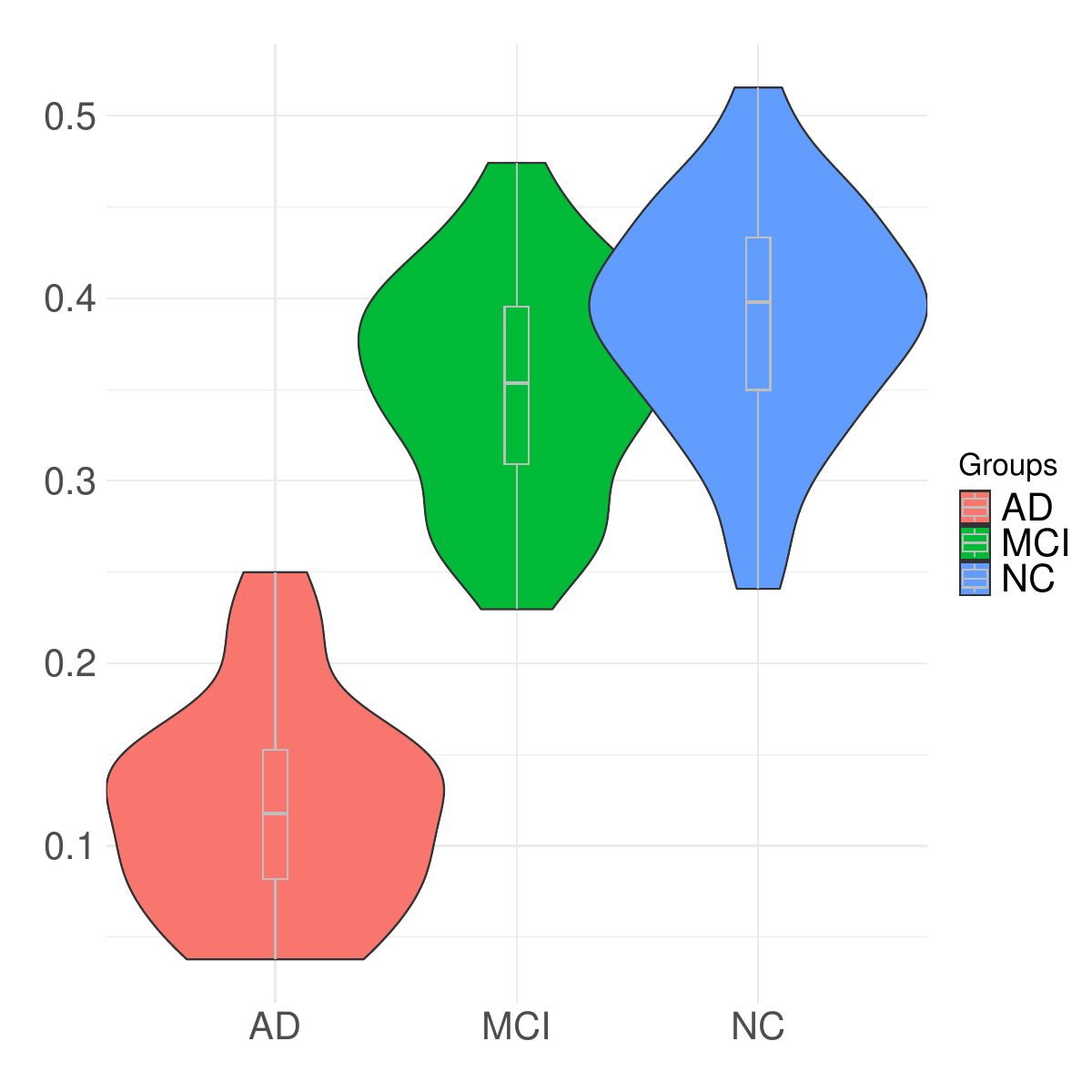}\quad\quad
\includegraphics[width = 0.45\textwidth, height = 6cm, trim=0 0 3cm 0, clip=true]{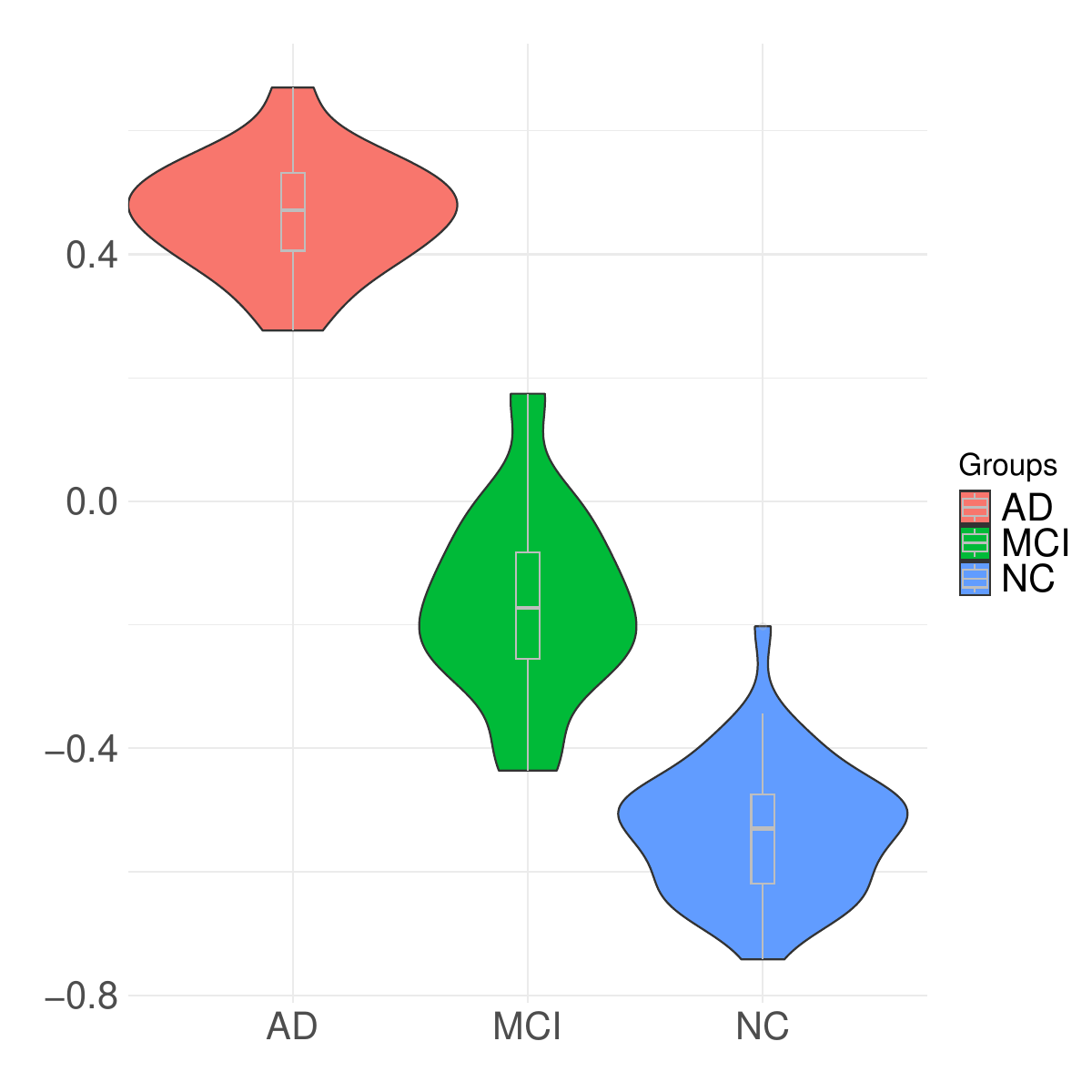}
\vspace*{-5pt}
\caption{Inference for $\alpha$'s: Violin plots of the estimated mean baseline FA (left panel) and ODI (right panel) in the corpus callosum (CC) region for normal cognition (NC, blue),  mild cognitive impairment (MCI, green), and Alzheimer's (AD, red) subjects. 
\vspace*{-10pt}
}
\label{fig: realbaseFA2 and realbaseODI2}
\end{figure}

\begin{figure}[htbp]
\centering
\includegraphics[width = 0.45\textwidth, height = 6cm, trim=0 0 3cm 0, clip=true]{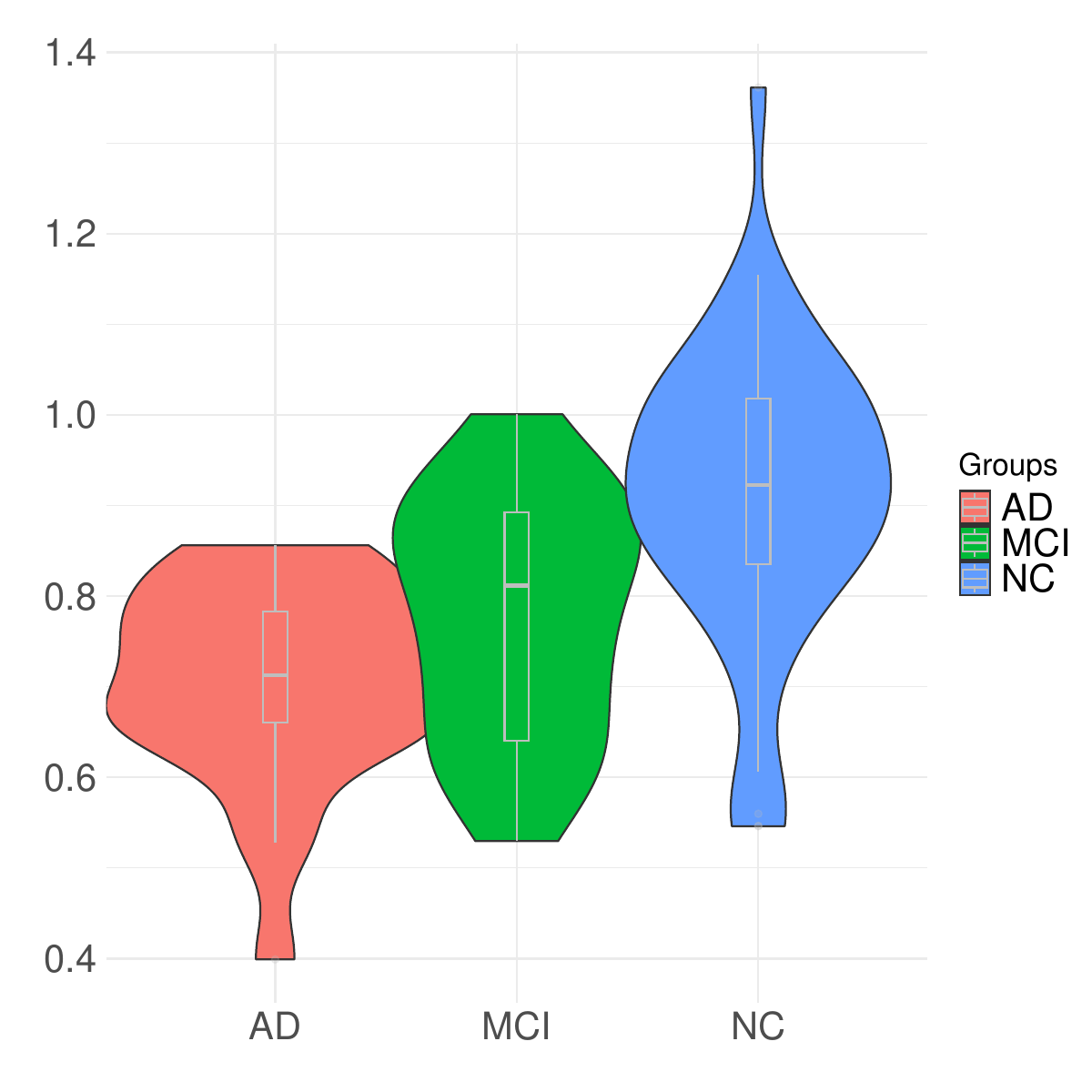}\quad\quad
\includegraphics[width = 0.45\textwidth, height = 6cm, trim=0 0 3cm 0, clip=true]{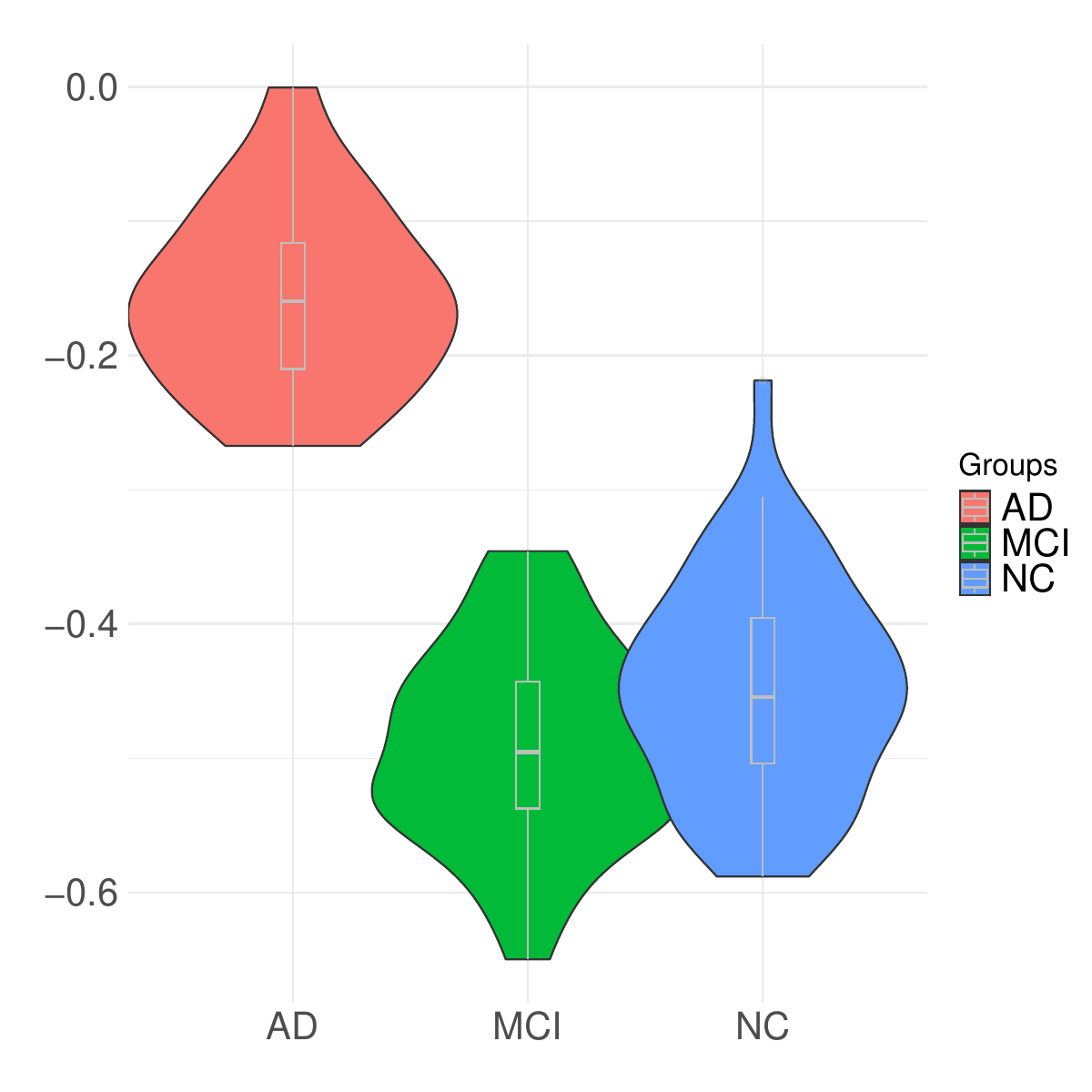}
\caption{Inference for $\alpha$'s: Violin plots of the estimated mean baseline FWF (left panel) and NDI (right panel) in the corpus callosum (CC) region for normal cognition (NC, blue),  mild cognitive impairment (MCI, green), and Alzheimer's (AD, red) subjects. 
}
\label{fig: realbaseFWF2 and realbaseNDI2}
\end{figure}

\begin{figure}[!htbp]
\centering
\includegraphics[width = 0.45\textwidth, clip=true]{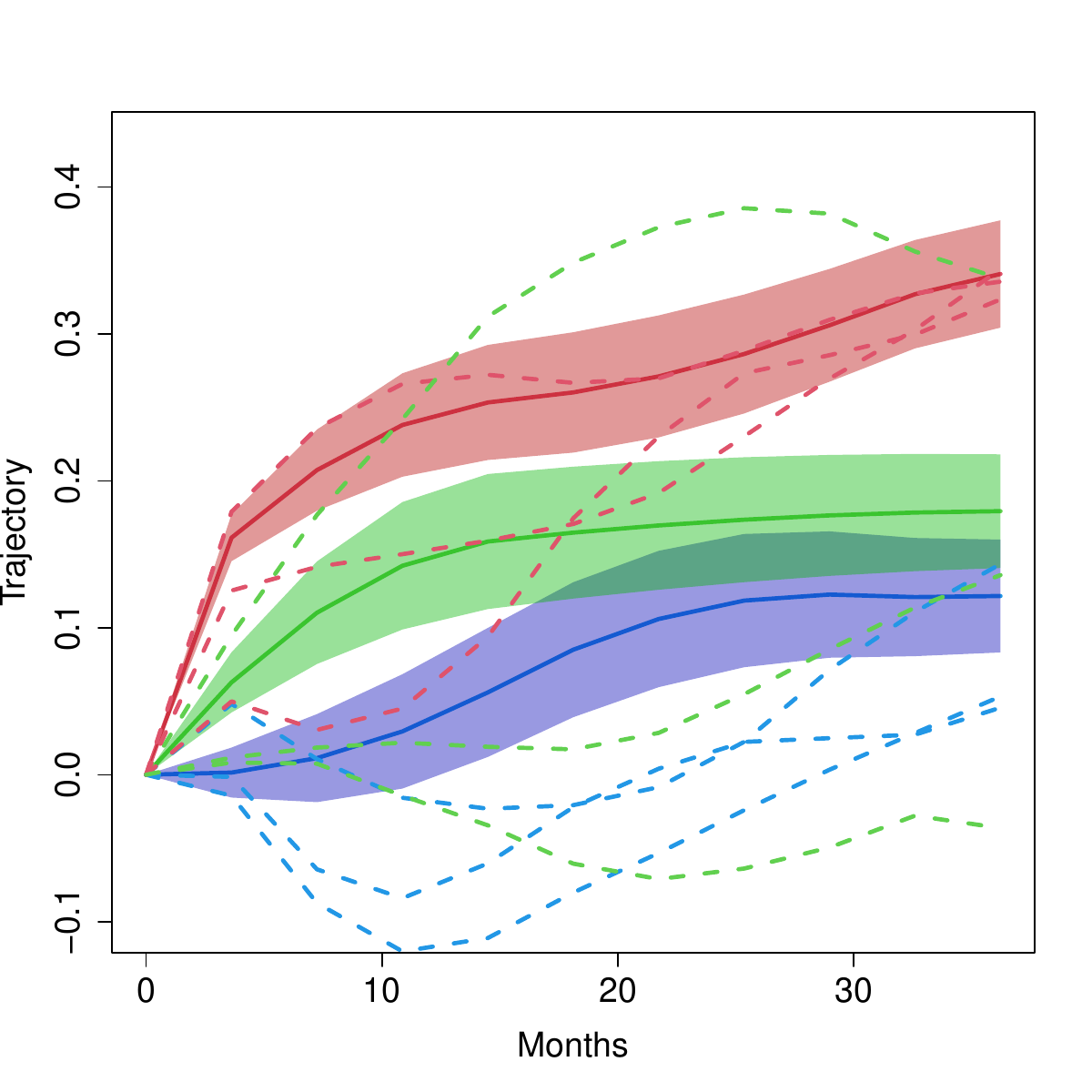}\quad\quad
\includegraphics[width = 0.45\textwidth, clip=true]{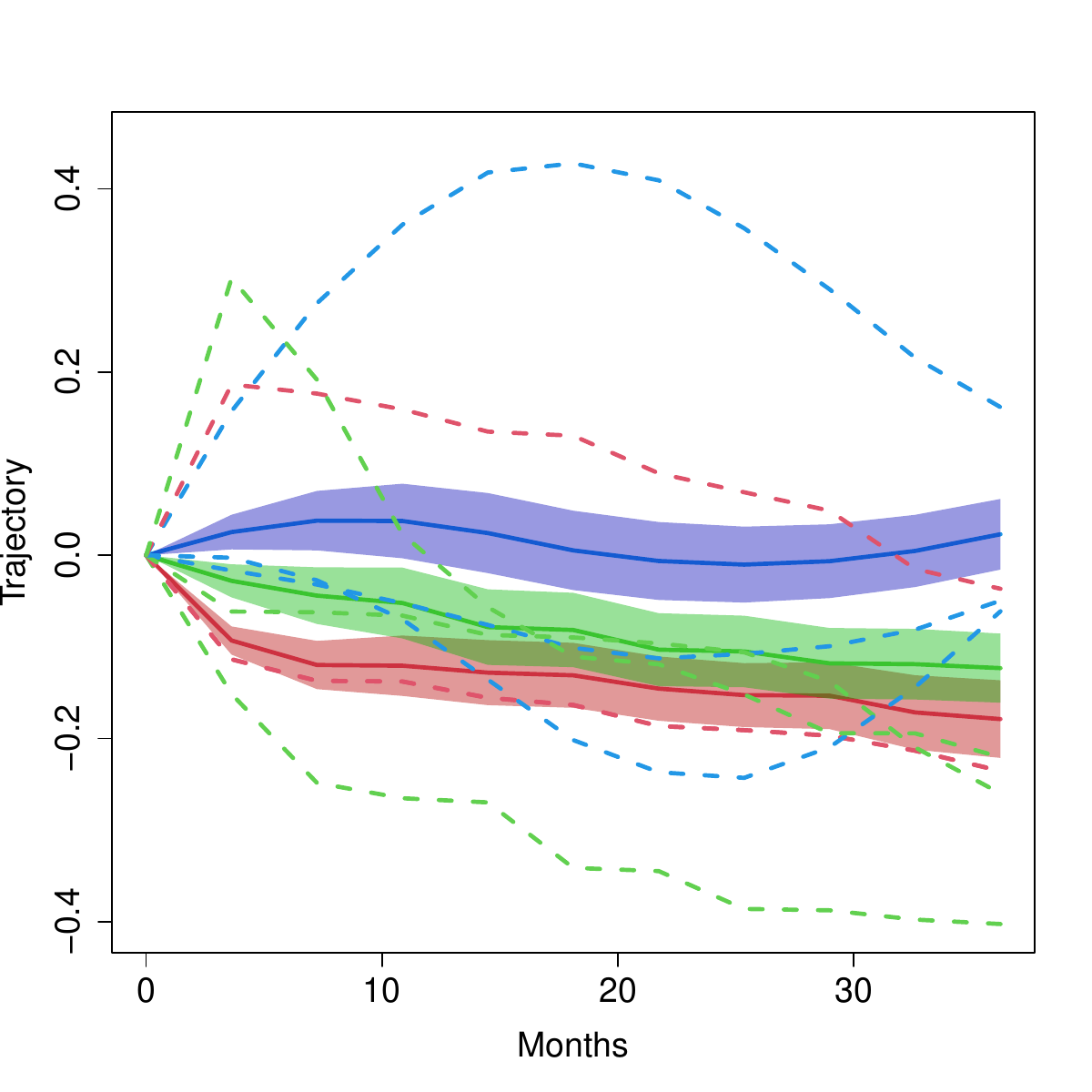}
\caption{Inference for $\beta$'s: Estimated median trajectories of FWF (left panel) and NDI (right panel) in the corpus callosum (CC) region for normal cognition (NC, blue),  mild cognitive impairment (MCI, green), and Alzheimer's (AD, red) subjects. 
The solid lines show the population-level trajectories; 
the shaded regions show the corresponding 90\% point-wise credible intervals; 
the dotted lines show three randomly selected subjects from each group. 
}
\label{fig: realFWF2 and realNDI2}
\end{figure}


\begin{figure}[htbp]
\centering
\begin{minipage}{0.47\textwidth}
\subfigure[AD vs NC]{\includegraphics[width = 0.4\textwidth, trim=1cm 1cm 1cm 2cm, clip=true]{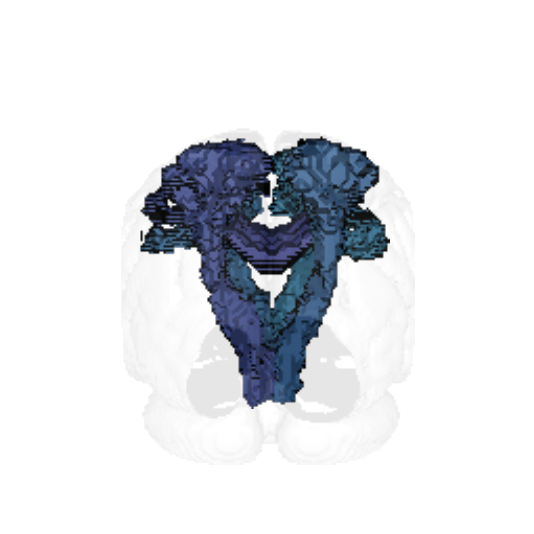}}
\subfigure[AD vs NC]{\includegraphics[width = 0.35\textwidth, trim=1cm 1cm 1cm 1cm, clip=true]{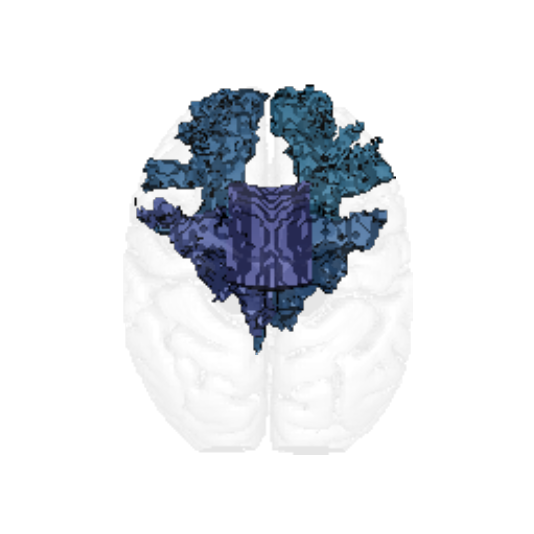}}\\
\vspace{-0pt} 
\subfigure[AD vs MCI]{\includegraphics[width = 0.4\textwidth, trim=1cm 1cm 1cm 2cm, clip=true]{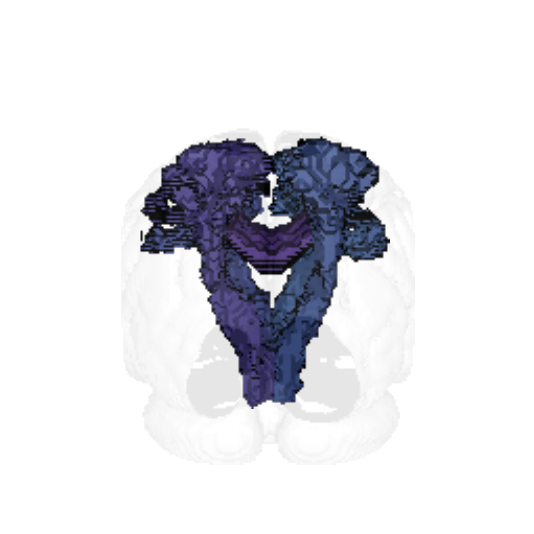}}
\subfigure[AD vs MCI]{\includegraphics[width = 0.35\textwidth, trim=1cm 1cm 1cm 1cm, clip=true]{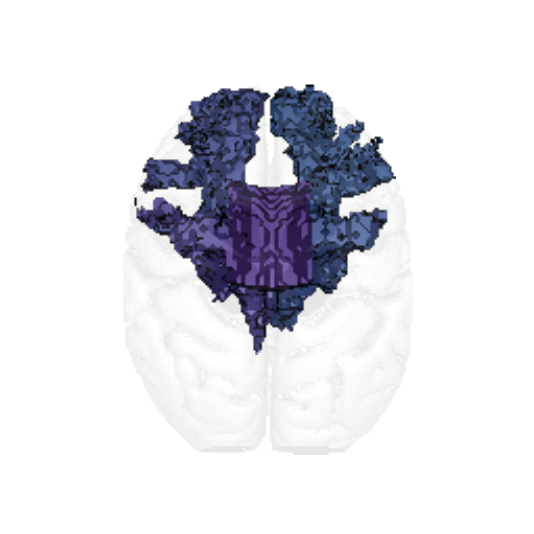}}\\
\vspace{-0pt} 
\subfigure[MCI vs NC]{\includegraphics[width = 0.4\textwidth, trim=1cm 1cm 1cm 2cm, clip=true]{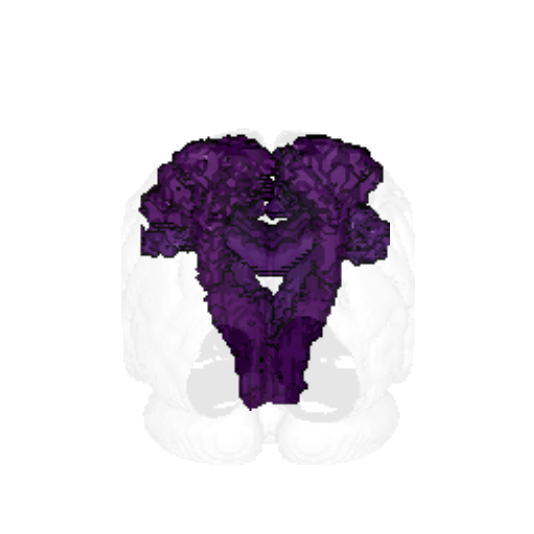}}
\subfigure[MCI vs NC]{\includegraphics[width = 0.35\textwidth, trim=1cm 1cm 1cm 1cm, clip=true]{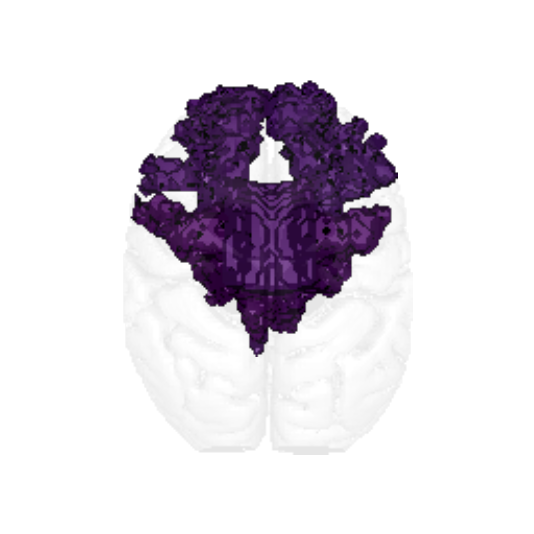}}
\end{minipage}
\begin{minipage}{0.04\textwidth}
\hspace*{-1.25cm}\subfigure{\includegraphics[width=3\textwidth, trim=10cm 0cm 0cm 0cm, clip=true]{Figures/Legend_2.pdf}}
\end{minipage}
\begin{minipage}{0.47\textwidth}
\subfigure[AD vs NC]{\includegraphics[width = 0.4\textwidth, trim=1cm 1cm 1cm 2cm, clip=true]{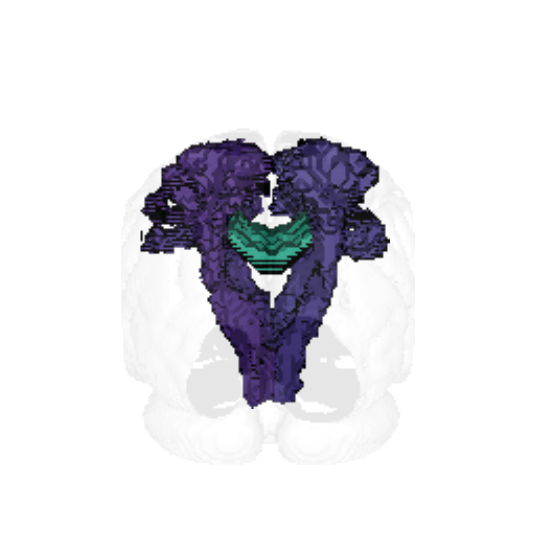}}
\subfigure[AD vs NC]{\includegraphics[width = 0.35\textwidth, trim=1cm 1cm 1cm 1cm, clip=true]{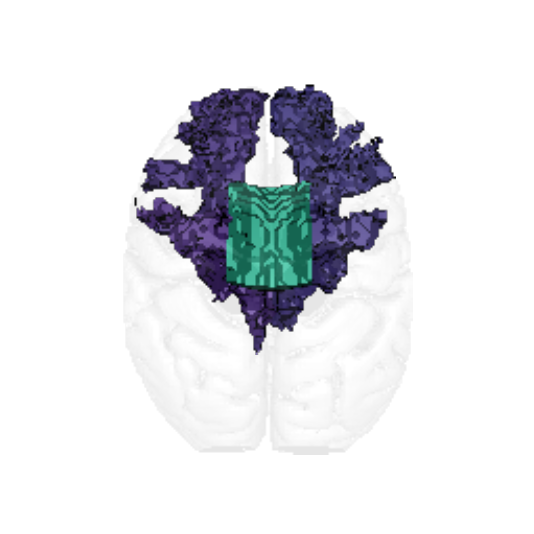}}\\
\vspace{-0pt} 
\subfigure[AD vs MCI]{\includegraphics[width = 0.4\textwidth, trim=1cm 1cm 1cm 2cm, clip=true]{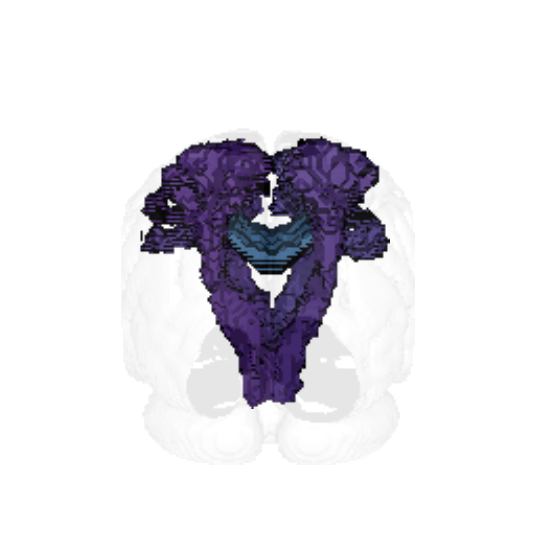}}
\subfigure[AD vs MCI]{\includegraphics[width = 0.35\textwidth, trim=1cm 1cm 1cm 1cm, clip=true]{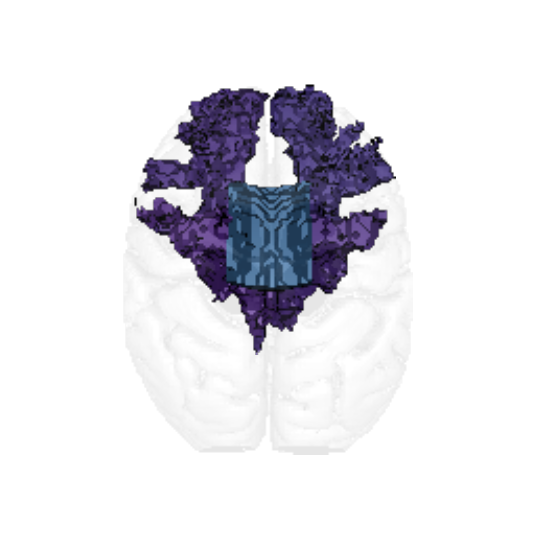}}\\
\vspace{-0pt} 
\subfigure[MCI vs NC]{\includegraphics[width = 0.4\textwidth, trim=1cm 1cm 1cm 2cm, clip=true]{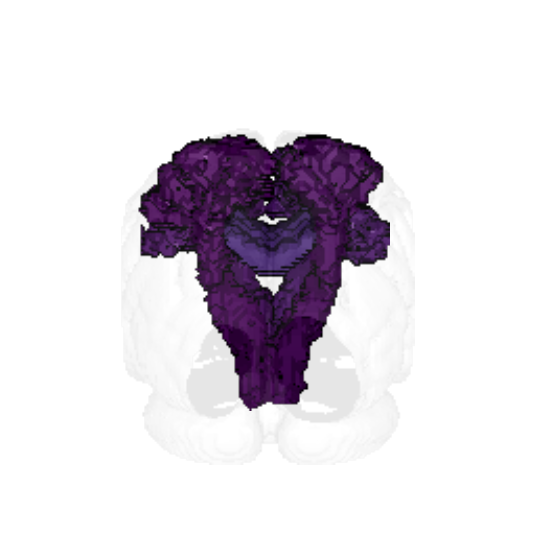}}
\subfigure[MCI vs NC]{\includegraphics[width = 0.35\textwidth, trim=1cm 1cm 1cm 1cm, clip=true]{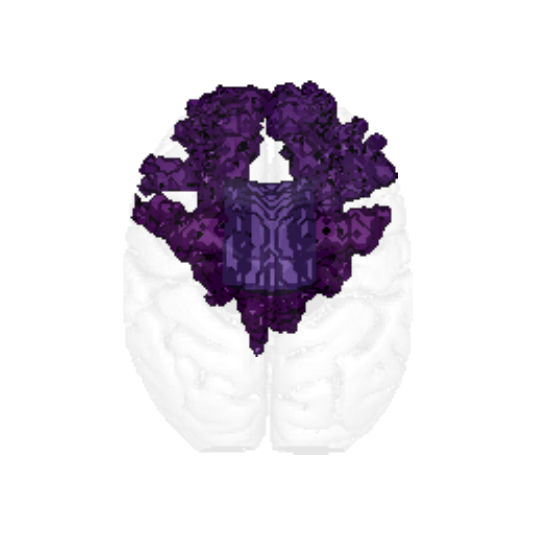}}
\end{minipage}
\caption{Inference for $\beta$'s: Summarized differences across five different tracts (left and right corticospinal tracts, left and right frontopontine tracts, and corpus callosum) in FWF (sub-panels (a)-(f) to the left of the scale) and NDI (sub-panels (h)-(m) to the right of the scale) from two different angles.}
\label{fig: realFWF_and_NDI}
\end{figure}

\clearpage\newpage

We now present results for $\text{Arc-Length}_{\beta}(h_{g})$ and $\text{CGD}_{\beta}(h_{g},h_{g}^{\prime})$ associated with 21 white matter fiber tracts.

\begin{table}[!htbp]
\centering
\caption{$\text{Arc-Length}_{\beta}(h_{g})$ for different disease status groups $h_{g}$ across the  left CST} 
\begin{tabular}{rrrr}
  \hline
 & Normal & MCI & AD \\ 
  \hline
FA & 0.584 & 0.693 & 1.051 \\ 
  ODI & 1.137 & 1.332 & 1.347 \\ 
  FWF & 0.568 & 0.608 & 0.813 \\ 
  NDI & 0.498 & 0.546 & 0.923 \\ 
   \hline
\end{tabular}
\end{table}

\vskip -10pt
\begin{table}[!htbp]
\centering
\caption{$\text{Arc-Length}_{\beta}(h_{g})$ for different disease status groups $h_{g}$ across the  left DLF} 
\begin{tabular}{rrrr}
  \hline
 & Normal & MCI & AD \\ 
  \hline
FA & 1.105 & 1.137 & 1.367 \\ 
  ODI & 0.609 & 0.614 & 0.643 \\ 
  FWF & 0.930 & 0.903 & 1.341 \\ 
  NDI & 0.888 & 0.798 & 1.038 \\ 
   \hline
\end{tabular}
\end{table}

\vskip -10pt
\begin{table}[!htbp]
\centering
\caption{$\text{Arc-Length}_{\beta}(h_{g})$ for different disease status groups $h_{g}$ across the  left FPT} 
\begin{tabular}{rrrr}
  \hline
 & Normal & MCI & AD \\ 
  \hline
FA & 0.723 & 0.861 & 1.119 \\ 
  ODI & 0.978 & 1.165 & 1.383 \\ 
  FWF & 0.860 & 0.825 & 1.199 \\ 
  NDI & 0.608 & 0.741 & 0.811 \\ 
   \hline
\end{tabular}
\end{table}

\vskip -10pt
\begin{table}[!htbp]
\centering
\caption{$\text{Arc-Length}_{\beta}(h_{g})$ for different disease status groups $h_{g}$ across the  left MFT} 
\begin{tabular}{rrrr}
  \hline
 & Normal & MCI & AD \\ 
  \hline
FA & 1.021 & 0.978 & 1.368 \\ 
  ODI & 1.101 & 0.979 & 1.282 \\ 
  FWF & 0.927 & 0.940 & 1.363 \\ 
  NDI & 0.700 & 0.696 & 0.945 \\ 
   \hline
\end{tabular}
\end{table}

\vskip -10pt
\begin{table}[!htbp]
\centering
\caption{$\text{Arc-Length}_{\beta}(h_{g})$ for different disease status groups $h_{g}$ across the  left MLF} 
\begin{tabular}{rrrr}
  \hline
 & Normal & MCI & AD \\ 
  \hline
FA & 1.400 & 1.386 & 1.837 \\ 
  ODI & 0.829 & 0.689 & 0.724 \\ 
  FWF & 0.616 & 0.680 & 1.060 \\ 
  NDI & 0.573 & 0.439 & 1.061 \\ 
   \hline
\end{tabular}
\end{table}
\begin{table}[!htbp]
\centering
\caption{$\text{Arc-Length}_{\beta}(h_{g})$ for different disease status groups $h_{g}$ across the  left NST} 

\begin{tabular}{rrrr}
  \hline
 & Normal & MCI & AD \\ 
  \hline
FA & 1.080 & 1.010 & 1.258 \\ 
  ODI & 1.146 & 0.753 & 1.147 \\ 
  FWF & 0.730 & 0.708 & 1.254 \\ 
  NDI & 0.756 & 0.776 & 1.125 \\ 
   \hline
\end{tabular}
\end{table}
\begin{table}[!htbp]
\centering
\caption{$\text{Arc-Length}_{\beta}(h_{g})$ for different disease status groups $h_{g}$ across the  left PPT} 
\begin{tabular}{rrrr}
  \hline
 & Normal & MCI & AD \\ 
  \hline
FA & 0.458 & 0.594 & 0.785 \\ 
  ODI & 0.741 & 0.796 & 0.967 \\ 
  FWF & 0.452 & 0.533 & 0.681 \\ 
  NDI & 0.362 & 0.514 & 0.572 \\ 
   \hline
\end{tabular}

\end{table}
\begin{table}[!htbp]
\centering
\caption{$\text{Arc-Length}_{\beta}(h_{g})$ for different disease status groups $h_{g}$ across the  left SCP} 
\begin{tabular}{rrrr}
  \hline
 & Normal & MCI & AD \\ 
  \hline
FA & 0.920 & 1.002 & 1.121 \\ 
  ODI & 0.722 & 0.773 & 0.835 \\ 
  FWF & 0.742 & 0.707 & 1.066 \\ 
  NDI & 0.565 & 0.602 & 0.899 \\ 
   \hline
\end{tabular}

\end{table}
\begin{table}[!htbp]
\centering
\caption{$\text{Arc-Length}_{\beta}(h_{g})$ for different disease status groups $h_{g}$ across the  left STT} 
\begin{tabular}{rrrr}
  \hline
 & Normal & MCI & AD \\ 
  \hline
FA & 0.778 & 0.726 & 1.044 \\ 
  ODI & 1.032 & 1.135 & 1.190 \\ 
  FWF & 0.582 & 0.674 & 0.935 \\ 
  NDI & 0.490 & 0.583 & 0.807 \\ 
   \hline
\end{tabular}

\end{table}
\begin{table}[!htbp]
\centering
\caption{$\text{Arc-Length}_{\beta}(h_{g})$ for different disease status groups $h_{g}$ across the  left TPT} 
\begin{tabular}{rrrr}
  \hline
 & Normal & MCI & AD \\ 
  \hline
FA & 0.776 & 0.823 & 1.309 \\ 
  ODI & 0.908 & 0.852 & 1.061 \\ 
  FWF & 0.510 & 0.573 & 0.664 \\ 
  NDI & 0.568 & 0.719 & 0.966 \\ 
   \hline
\end{tabular}

\end{table}
\begin{table}[!htbp]
\centering
\caption{$\text{Arc-Length}_{\beta}(h_{g})$ for different disease status groups $h_{g}$ across the  right CST} 
\begin{tabular}{rrrr}
  \hline
 & Normal & MCI & AD \\ 
  \hline
FA & 0.616 & 0.528 & 0.989 \\ 
  ODI & 1.267 & 1.063 & 1.365 \\ 
  FWF & 0.573 & 0.589 & 0.808 \\ 
  NDI & 0.500 & 0.503 & 0.830 \\ 
   \hline
\end{tabular}

\end{table}
\begin{table}[!htbp]
\centering
\caption{$\text{Arc-Length}_{\beta}(h_{g})$ for different disease status groups $h_{g}$ across the  right DLF} 

\begin{tabular}{rrrr}
  \hline
 & Normal & MCI & AD \\ 
  \hline
FA & 0.974 & 0.981 & 1.213 \\ 
  ODI & 0.609 & 0.661 & 0.690 \\ 
  FWF & 0.926 & 0.871 & 1.329 \\ 
  NDI & 0.914 & 0.775 & 1.124 \\ 
   \hline
\end{tabular}
\end{table}
\begin{table}[!htbp]
\centering
\caption{$\text{Arc-Length}_{\beta}(h_{g})$ for different disease status groups $h_{g}$ across the  right FPT} 

\begin{tabular}{rrrr}
  \hline
 & Normal & MCI & AD \\ 
  \hline
FA & 0.783 & 0.724 & 1.051 \\ 
  ODI & 1.133 & 0.968 & 1.325 \\ 
  FWF & 0.831 & 0.788 & 1.161 \\ 
  NDI & 0.586 & 0.658 & 0.717 \\ 
   \hline
\end{tabular}
\end{table}
\begin{table}[!htbp]
\centering
\caption{$\text{Arc-Length}_{\beta}(h_{g})$ for different disease status groups $h_{g}$ across the  right MFT} 
\begin{tabular}{rrrr}
  \hline
 & Normal & MCI & AD \\ 
  \hline
FA & 0.963 & 0.920 & 1.244 \\ 
  ODI & 1.068 & 0.976 & 1.295 \\ 
  FWF & 0.916 & 0.916 & 1.329 \\ 
  NDI & 0.695 & 0.697 & 0.906 \\ 
   \hline
\end{tabular}
\end{table}
\begin{table}[!htbp]
\centering
\caption{$\text{Arc-Length}_{\beta}(h_{g})$ for different disease status groups $h_{g}$ across the  right MLF} 

\begin{tabular}{rrrr}
  \hline
 & Normal & MCI & AD \\ 
  \hline
FA & 1.309 & 1.286 & 1.725 \\ 
  ODI & 0.769 & 0.649 & 0.725 \\ 
  FWF & 0.578 & 0.658 & 1.036 \\ 
  NDI & 0.570 & 0.442 & 1.002 \\ 
   \hline
\end{tabular}
\end{table}
\begin{table}[!htbp]
\centering
\caption{$\text{Arc-Length}_{\beta}(h_{g})$ for different disease status groups $h_{g}$ across the  right NST} 

\begin{tabular}{rrrr}
  \hline
 & Normal & MCI & AD \\ 
  \hline
FA & 1.018 & 1.124 & 1.290 \\ 
  ODI & 0.759 & 1.182 & 1.289 \\ 
  FWF & 0.749 & 0.741 & 1.231 \\ 
  NDI & 0.684 & 0.755 & 1.055 \\ 
   \hline
\end{tabular}
\end{table}
\begin{table}[!htbp]
\centering
\caption{$\text{Arc-Length}_{\beta}(h_{g})$ for different disease status groups $h_{g}$ across the  right PPT} 

\begin{tabular}{rrrr}
  \hline
 & Normal & MCI & AD \\ 
  \hline
FA & 0.433 & 0.497 & 0.686 \\ 
  ODI & 0.710 & 0.850 & 0.958 \\ 
  FWF & 0.454 & 0.534 & 0.660 \\ 
  NDI & 0.376 & 0.477 & 0.557 \\ 
   \hline
\end{tabular}
\end{table}
\begin{table}[!htbp]
\centering
\caption{$\text{Arc-Length}_{\beta}(h_{g})$ for different disease status groups $h_{g}$ across the  right SCP} 

\begin{tabular}{rrrr}
  \hline
 & Normal & MCI & AD \\ 
  \hline
FA & 0.915 & 0.927 & 1.097 \\ 
  ODI & 0.683 & 0.780 & 0.862 \\ 
  FWF & 0.822 & 0.755 & 1.164 \\ 
  NDI & 0.663 & 0.689 & 1.008 \\ 
   \hline
\end{tabular}
\end{table}
\begin{table}[!htbp]
\centering
\caption{$\text{Arc-Length}_{\beta}(h_{g})$ for different disease status groups $h_{g}$ across the  right STT} 

\begin{tabular}{rrrr}
  \hline
 & Normal & MCI & AD \\ 
  \hline
FA & 0.670 & 0.628 & 0.947 \\ 
  ODI & 1.133 & 0.971 & 1.270 \\ 
  FWF & 0.565 & 0.657 & 0.889 \\ 
  NDI & 0.485 & 0.519 & 0.751 \\ 
   \hline
\end{tabular}
\end{table}
\begin{table}[!htbp]
\centering
\caption{$\text{Arc-Length}_{\beta}(h_{g})$ for different disease status groups $h_{g}$ across the  right TPT} 

\begin{tabular}{rrrr}
  \hline
 & Normal & MCI & AD \\ 
  \hline
FA & 0.736 & 0.765 & 1.227 \\ 
  ODI & 0.934 & 0.942 & 1.201 \\ 
  FWF & 0.498 & 0.544 & 0.643 \\ 
  NDI & 0.524 & 0.705 & 0.915 \\ 
   \hline
\end{tabular}
\end{table}
\begin{table}[!htbp]
\centering
\caption{$\text{Arc-Length}_{\beta}(h_{g})$ for different disease status groups $h_{g}$ across the corpus callosum} 
\begin{tabular}{rrrr}
  \hline
 & Normal & MCI & AD \\ 
  \hline
FA & 0.532 & 0.622 & 0.771 \\ 
  ODI & 1.224 & 1.489 & 2.076 \\ 
  FWF & 1.035 & 1.353 & 1.504 \\ 
  NDI & 0.651 & 0.618 & 0.966 \\ 
   \hline
\end{tabular}

\end{table}

\begin{table}[!htbp]
\centering
\caption{$\text{CGD}_{\beta}(h_{g},h_{g}^{\prime})$ between pairs of disease status groups $(h_{g},h_{g}^{\prime})$  across the left CST} 
\begin{tabular}{rrrr}
  \hline
 & Normal-MCI & AD-MCI & Normal-AD \\ 
  \hline
FA & 0.041 & 0.080 & 0.084 \\ 
  ODI & 0.015 & 0.029 & 0.038 \\ 
  FWF & 0.012 & 0.061 & 0.073 \\ 
  NDI & 0.010 & 0.035 & 0.044 \\ 
   \hline
\end{tabular}

\end{table}
\begin{table}[!htbp]
\centering
\caption{$\text{CGD}_{\beta}(h_{g},h_{g}^{\prime})$ between pairs of disease status groups $(h_{g},h_{g}^{\prime})$  across the left DLF} 
\begin{tabular}{rrrr}
  \hline
 & Normal-MCI & AD-MCI & Normal-AD \\ 
  \hline
FA & 0.013 & 0.041 & 0.054 \\ 
  ODI & 0.026 & 0.096 & 0.117 \\ 
  FWF & 0.038 & 0.066 & 0.095 \\ 
  NDI & 0.044 & 0.059 & 0.100 \\ 
   \hline
\end{tabular}

\end{table}
\begin{table}[!htbp]
\centering
\caption{$\text{CGD}_{\beta}(h_{g},h_{g}^{\prime})$ between pairs of disease status groups $(h_{g},h_{g}^{\prime})$  across the left FPT} 
\begin{tabular}{rrrr}
  \hline
 & Normal-MCI & AD-MCI & Normal-AD \\ 
  \hline
FA & 0.030 & 0.081 & 0.092 \\ 
  ODI & 0.029 & 0.063 & 0.067 \\ 
  FWF & 0.020 & 0.066 & 0.086 \\ 
  NDI & 0.019 & 0.042 & 0.046 \\ 
   \hline
\end{tabular}

\end{table}
\begin{table}[!htbp]
\centering
\caption{$\text{CGD}_{\beta}(h_{g},h_{g}^{\prime})$ between pairs of disease status groups $(h_{g},h_{g}^{\prime})$  across the left MFT} 
\begin{tabular}{rrrr}
  \hline
 & Normal-MCI & AD-MCI & Normal-AD \\ 
  \hline
FA & 0.033 & 0.083 & 0.100 \\ 
  ODI & 0.028 & 0.073 & 0.083 \\ 
  FWF & 0.041 & 0.108 & 0.180 \\ 
  NDI & 0.023 & 0.070 & 0.073 \\ 
   \hline
\end{tabular}

\end{table}
\begin{table}[!htbp]
\centering
\caption{$\text{CGD}_{\beta}(h_{g},h_{g}^{\prime})$ between pairs of disease status groups $(h_{g},h_{g}^{\prime})$  across the left MLF} 
\begin{tabular}{rrrr}
  \hline
 & Normal-MCI & AD-MCI & Normal-AD \\ 
  \hline
FA & 0.035 & 0.038 & 0.033 \\ 
  ODI & 0.017 & 0.104 & 0.074 \\ 
  FWF & 0.042 & 0.098 & 0.157 \\ 
  NDI & 0.012 & 0.100 & 0.121 \\ 
   \hline
\end{tabular}

\end{table}
\begin{table}[!htbp]
\centering
\caption{$\text{CGD}_{\beta}(h_{g},h_{g}^{\prime})$ between pairs of disease status groups $(h_{g},h_{g}^{\prime})$  across the left NST} 
\begin{tabular}{rrrr}
  \hline
 & Normal-MCI & AD-MCI & Normal-AD \\ 
  \hline
FA & 0.030 & 0.071 & 0.077 \\ 
  ODI & 0.026 & 0.050 & 0.054 \\ 
  FWF & 0.031 & 0.061 & 0.099 \\ 
  NDI & 0.012 & 0.071 & 0.055 \\ 
   \hline
\end{tabular}

\end{table}
\begin{table}[!htbp]
\centering
\caption{$\text{CGD}_{\beta}(h_{g},h_{g}^{\prime})$ between pairs of disease status groups $(h_{g},h_{g}^{\prime})$  across the left PPT} 
\begin{tabular}{rrrr}
  \hline
 & Normal-MCI & AD-MCI & Normal-AD \\ 
  \hline
FA & 0.024 & 0.058 & 0.060 \\ 
  ODI & 0.009 & 0.018 & 0.023 \\ 
  FWF & 0.010 & 0.039 & 0.051 \\ 
  NDI & 0.009 & 0.020 & 0.023 \\ 
   \hline
\end{tabular}

\end{table}
\begin{table}[!htbp]
\centering
\caption{$\text{CGD}_{\beta}(h_{g},h_{g}^{\prime})$ between pairs of disease status groups $(h_{g},h_{g}^{\prime})$  across the left SCP} 
\begin{tabular}{rrrr}
  \hline
 & Normal-MCI & AD-MCI & Normal-AD \\ 
  \hline
FA & 0.016 & 0.034 & 0.047 \\ 
  ODI & 0.017 & 0.051 & 0.045 \\ 
  FWF & 0.021 & 0.113 & 0.152 \\ 
  NDI & 0.012 & 0.049 & 0.054 \\ 
   \hline
\end{tabular}

\end{table}
\begin{table}[!htbp]
\centering
\caption{$\text{CGD}_{\beta}(h_{g},h_{g}^{\prime})$ between pairs of disease status groups $(h_{g},h_{g}^{\prime})$  across the left STT} 
\begin{tabular}{rrrr}
  \hline
 & Normal-MCI & AD-MCI & Normal-AD \\ 
  \hline
FA & 0.034 & 0.070 & 0.076 \\ 
  ODI & 0.013 & 0.037 & 0.038 \\ 
  FWF & 0.014 & 0.072 & 0.083 \\ 
  NDI & 0.011 & 0.031 & 0.036 \\ 
   \hline
\end{tabular}

\end{table}
\begin{table}[!htbp]
\centering
\caption{$\text{CGD}_{\beta}(h_{g},h_{g}^{\prime})$ between pairs of disease status groups $(h_{g},h_{g}^{\prime})$  across the left TPT} 
\begin{tabular}{rrrr}
  \hline
 & Normal-MCI & AD-MCI & Normal-AD \\ 
  \hline
FA & 0.013 & 0.041 & 0.050 \\ 
  ODI & 0.010 & 0.013 & 0.017 \\ 
  FWF & 0.016 & 0.065 & 0.071 \\ 
  NDI & 0.012 & 0.055 & 0.058 \\ 
   \hline
\end{tabular}

\end{table}

\begin{table}[!htbp]
\centering
\caption{$\text{CGD}_{\beta}(h_{g},h_{g}^{\prime})$ between pairs of disease status groups $(h_{g},h_{g}^{\prime})$  across the right CST} 
\begin{tabular}{rrrr}
  \hline
 & Normal-MCI & AD-MCI & Normal-AD \\ 
  \hline
FA & 0.039 & 0.090 & 0.097 \\ 
  ODI & 0.015 & 0.030 & 0.041 \\ 
  FWF & 0.010 & 0.047 & 0.058 \\ 
  NDI & 0.008 & 0.030 & 0.034 \\ 
   \hline
\end{tabular}

\end{table}
\begin{table}[!htbp]
\centering
\caption{$\text{CGD}_{\beta}(h_{g},h_{g}^{\prime})$ between pairs of disease status groups $(h_{g},h_{g}^{\prime})$  across the right DLF} 
\begin{tabular}{rrrr}
  \hline
 & Normal-MCI & AD-MCI & Normal-AD \\ 
  \hline
FA & 0.012 & 0.048 & 0.058 \\ 
  ODI & 0.026 & 0.096 & 0.128 \\ 
  FWF & 0.031 & 0.060 & 0.089 \\ 
  NDI & 0.052 & 0.076 & 0.139 \\ 
   \hline
\end{tabular}

\end{table}
\begin{table}[!htbp]
\centering
\caption{$\text{CGD}_{\beta}(h_{g},h_{g}^{\prime})$ between pairs of disease status groups $(h_{g},h_{g}^{\prime})$  across the right FPT} 
\begin{tabular}{rrrr}
  \hline
 & Normal-MCI & AD-MCI & Normal-AD \\ 
  \hline
FA & 0.038 & 0.071 & 0.093 \\ 
  ODI & 0.028 & 0.061 & 0.062 \\ 
  FWF & 0.018 & 0.057 & 0.077 \\ 
  NDI & 0.015 & 0.037 & 0.040 \\ 
   \hline
\end{tabular}

\end{table}
\begin{table}[!htbp]
\centering
\caption{$\text{CGD}_{\beta}(h_{g},h_{g}^{\prime})$ between pairs of disease status groups $(h_{g},h_{g}^{\prime})$  across the right MFT} 
\begin{tabular}{rrrr}
  \hline
 & Normal-MCI & AD-MCI & Normal-AD \\ 
  \hline
FA & 0.034 & 0.099 & 0.119 \\ 
  ODI & 0.025 & 0.078 & 0.093 \\ 
  FWF & 0.040 & 0.106 & 0.187 \\ 
  NDI & 0.019 & 0.070 & 0.081 \\ 
   \hline
\end{tabular}

\end{table}
\begin{table}[!htbp]
\centering
\caption{$\text{CGD}_{\beta}(h_{g},h_{g}^{\prime})$ between pairs of disease status groups $(h_{g},h_{g}^{\prime})$  across the right MLF} 
\begin{tabular}{rrrr}
  \hline
 & Normal-MCI & AD-MCI & Normal-AD \\ 
  \hline
FA & 0.032 & 0.034 & 0.038 \\ 
  ODI & 0.016 & 0.075 & 0.113 \\ 
  FWF & 0.032 & 0.089 & 0.134 \\ 
  NDI & 0.013 & 0.087 & 0.125 \\ 
   \hline
\end{tabular}

\end{table}
\begin{table}[!htbp]
\centering
\caption{$\text{CGD}_{\beta}(h_{g},h_{g}^{\prime})$ between pairs of disease status groups $(h_{g},h_{g}^{\prime})$  across the right NST} 
\begin{tabular}{rrrr}
  \hline
 & Normal-MCI & AD-MCI & Normal-AD \\ 
  \hline
FA & 0.038 & 0.084 & 0.094 \\ 
  ODI & 0.024 & 0.047 & 0.069 \\ 
  FWF & 0.027 & 0.066 & 0.103 \\ 
  NDI & 0.015 & 0.071 & 0.055 \\ 
   \hline
\end{tabular}

\end{table}
\begin{table}[!htbp]
\centering
\caption{$\text{CGD}_{\beta}(h_{g},h_{g}^{\prime})$ between pairs of disease status groups $(h_{g},h_{g}^{\prime})$  across the right PPT} 
\begin{tabular}{rrrr}
  \hline
 & Normal-MCI & AD-MCI & Normal-AD \\ 
  \hline
FA & 0.029 & 0.058 & 0.061 \\ 
  ODI & 0.008 & 0.018 & 0.022 \\ 
  FWF & 0.008 & 0.024 & 0.032 \\ 
  NDI & 0.008 & 0.019 & 0.021 \\ 
   \hline
\end{tabular}

\end{table}
\begin{table}[!htbp]
\centering
\caption{$\text{CGD}_{\beta}(h_{g},h_{g}^{\prime})$ between pairs of disease status groups $(h_{g},h_{g}^{\prime})$  across the right SCP} 
\begin{tabular}{rrrr}
  \hline
 & Normal-MCI & AD-MCI & Normal-AD \\ 
  \hline
FA & 0.017 & 0.042 & 0.049 \\ 
  ODI & 0.018 & 0.057 & 0.058 \\ 
  FWF & 0.019 & 0.104 & 0.126 \\ 
  NDI & 0.018 & 0.062 & 0.067 \\ 
   \hline
\end{tabular}

\end{table}
\begin{table}[!htbp]
\centering
\caption{$\text{CGD}_{\beta}(h_{g},h_{g}^{\prime})$ between pairs of disease status groups $(h_{g},h_{g}^{\prime})$  across the right STT} 
\begin{tabular}{rrrr}
  \hline
 & Normal-MCI & AD-MCI & Normal-AD \\ 
  \hline
FA & 0.038 & 0.087 & 0.089 \\ 
  ODI & 0.012 & 0.037 & 0.037 \\ 
  FWF & 0.011 & 0.050 & 0.060 \\ 
  NDI & 0.010 & 0.027 & 0.030 \\ 
   \hline
\end{tabular}
\end{table}
\begin{table}[!htbp]
\centering
\caption{$\text{CGD}_{\beta}(h_{g},h_{g}^{\prime})$ between pairs of disease status groups $(h_{g},h_{g}^{\prime})$  across the right TPT} 
\begin{tabular}{rrrr}
  \hline
 & Normal-MCI & AD-MCI & Normal-AD \\ 
  \hline
FA & 0.014 & 0.059 & 0.066 \\ 
  ODI & 0.008 & 0.014 & 0.015 \\ 
  FWF & 0.011 & 0.046 & 0.054 \\ 
  NDI & 0.010 & 0.053 & 0.045 \\ 
   \hline
\end{tabular}
\end{table}
\begin{table}[!htbp]
\centering
\caption{$\text{CGD}_{\beta}(h_{g},h_{g}^{\prime})$ between pairs of disease status groups $(h_{g},h_{g}^{\prime})$  across the corpus callosum} 
\begin{tabular}{rrrr}
  \hline
 & Normal-MCI & AD-MCI & Normal-AD \\ 
  \hline
FA & 0.071 & 0.109 & 0.149 \\ 
  ODI & 0.029 & 0.198 & 0.229 \\ 
  FWF & 0.016 & 0.036 & 0.053 \\ 
  NDI & 0.031 & 0.076 & 0.137 \\ 
   \hline
\end{tabular}
\end{table}

\newpage

\section{Additional Details and Figures for the Simulation Experiments}
\label{sec: sm sim add details and figs}

As outlined in Section \ref{sec: sim study} in the main paper, our simulation study evaluates the benefits of our flexible and scalable compact HOSVD-based model with smoothly varying functional mode matrices, compared to both smooth and non-smooth CP-based models and a non-smooth compact HOSVD alternative.

The compact HOSVD, an equivalent representation of Tucker decomposition with semi-orthogonal mode matrices (Lemma \ref{lem: Tucker-HOSVD}), enhances posterior computation (Section \ref{sec: sm post comp}). For a fair comparison, we also consider a mixed-effects longitudinal CP factorized model with varying ranks and a voxel-wise mixed model (VMW).

First, to compare our approach with flexible CP decomposition-based alternatives (CP model), we again use mixtures of B-splines with time-varying coefficients $\wt\beta(h_{g},h_{1},h_{2},h_{3},h_{t})$ to model the longitudinal changes, 
but now decompose them using a CP approach instead of a compact HOSVD. 
To our knowledge, a model of this exact form has not appeared in the literature, but 
the closest is \cite{niyogi2024tensor} who considered a similar approach in an image-on-image regression setting but with no random effects. 
However, to make our comparisons fair and strict, we still allow for multiplicative random effects in the (now diagonal) core coefficients, 
and further apply the Laplacian regularization on $\{a^{(s)}_{\beta,z_{r}}(h_{s})\}_{1 \leq h_{s}\leq d_{s}}$. 

Finally, we consider a similar CP-based model for the $\alpha$'s as well. 
Specifically, we let  
\bse 
& \alpha^{(i)}(h_{g},h_{1},h_{2},h_{3}) = 
\sum_{z_{r}=1}^{r_{\alpha}} \eta_{\alpha,z_{r}}^{(i)} \prod_{s \in \S_{g}} a^{(s)}_{\alpha,z_{r}}(h_{s}), ~~~
& \eta_{\alpha,z_{r}}^{(i)} \simind \Normal(c_{\alpha,z_{r}}, \tau^{2}_{\alpha}\sigma_{\alpha,z_{r}}^{2}),\\ 
& \wt\beta^{(i)}(h_{g},h_{1},h_{2},h_{3},h_{t}) = \sum_{z_{r}=1}^{r_{\beta}} \eta_{\beta,z_{r}}^{(i)} \prod_{s \in \S_{g,t}} a_{\beta,z_{r}}^{(s)}(h_{s}), ~~~
&\eta_{\beta,z_{r}}^{(i)} \simind \Normal(c_{\beta,z_{r}},\tau_{\beta}^{2} \sigma_{\beta,z_{r}}^{2}),
\ese
other model components remaining similar. 
The cumulative shrinkage prior on the 
the mode matrices (Section \ref{sec: priors}) still allows for semi-automated rank selection and safeguards against overfitting.
We implemented CP models with starting ranks $r_{\alpha}=r_{\beta}\in\{5,10\}$.
Unlike the Tucker case, the mode matrices in CP are not set to be semi-orthogonal as the CP rank is often greater than the dimensions of several of the directions.

Second, to highlight the importance of functional tensor components in modeling smoothly varying images, 
we compare our graph Laplacian-enforced smooth compact HOSVD model (Graph-Laplacian-Smooth, GLS)
with a version that still has the graph Laplacian but is non-smooth (Graph-Laplacian-Non-Smooth, GLNS), keeping the other elements unchanged. 

We also include a voxel-wise mixed (VWM) model in our comparisons. 
The VWM model fits $y^{(i)}(h_{g},h_{1},h_{2},h_{3},t)=f_{h_{g},h_{1},h_{2},h_{3}}(t)+\tau^{(i)}_{h_{g},h_{1},h_{2},h_{3}}+\epsilon^{(i)}_{h_{g},h_{1},h_{2},h_{3},t}$ at each voxel $(h_{1},h_{2},h_{3})$, 
setting the shape constraint $f_{h_{g},h_{1},h_{2},h_{3}}(0)=0$ by invoking the same spline-based model described in Section \ref{sec: models}.

As our scientific focus is on estimating longitudinal changes modeled by B-spline mixtures parameterized by $\bbeta$, we prioritize the accuracy of $\bbeta$ estimation in our comparisons.



To generate our synthetic datasets, 
we randomly select 20 subjects each from 3 groups and use their longitudinal visit times 
{(Figure \ref{fig: obs_time})}. 
The image dimensions are set to be $35 \times 35 \times 35$. 
The true $\alpha^{(i)}$'s are set at the corresponding real data estimates obtained in Section \ref{sec: adni analysis}. 
We form the true $\beta(h_{g},h_{1},h_{2},h_{3},t)=\sum_{\ell=1}^{6} 4t^{2}\gamma_{\ell,h_{g}}\exp\{-(h_{1}-n_{1}u_{1,\ell})^{2}/5-(h_{2}-n_{2}u_{2,\ell})^{2}/5-(h_{3}-n_{3}u_{3,\ell})^{2}/5\}$, where $\gamma_{\ell,h_{g}}=1$ for $\ell \leq 2 h_{g}$ otherwise 0, 
and $u_{m,\ell}$'s are set as 
$u_{1} = [0.50,0.50,0.30,0.80,0.30,0.60]\trans$, 
$u_{2} = [0.40,0.80,0.40,0.80,0.60,0.40]\trans$, 
and $u_{3} = [0.20,0.20,0.60,0.60,0.50,0.50]\trans$ 
This allows us to compare the methods under different signal strengths simultaneously. 
For example, $\beta_0(1,:,:,h_{3},0.6)$ varies in magnitude for various values of $h_{3}$ even at a given time point $t=0.6$ and group id $h_{g}=1$. 
We then set $\beta^{(i)}(\cdot)=\beta(\cdot)+e_{\beta}^{(i)}$, where $e_{\beta}^{(i)}$'s are independent $\Normal(0,0.1^{2})$. 
Finally, the $y^{(i)}(h_{g},h_{1},h_{2},h_{3},t)$'s are generated from $\Normal\{\alpha^{(i)}(h_{g},h_{1},h_{2},h_{3},t)+\beta^{(i)}(h_{g},h_{1},h_{2},h_{3},t), 0.5^{2}\}$. 



\begin{figure}[htbp]
     \centering
     \includegraphics[width=0.8\textwidth]{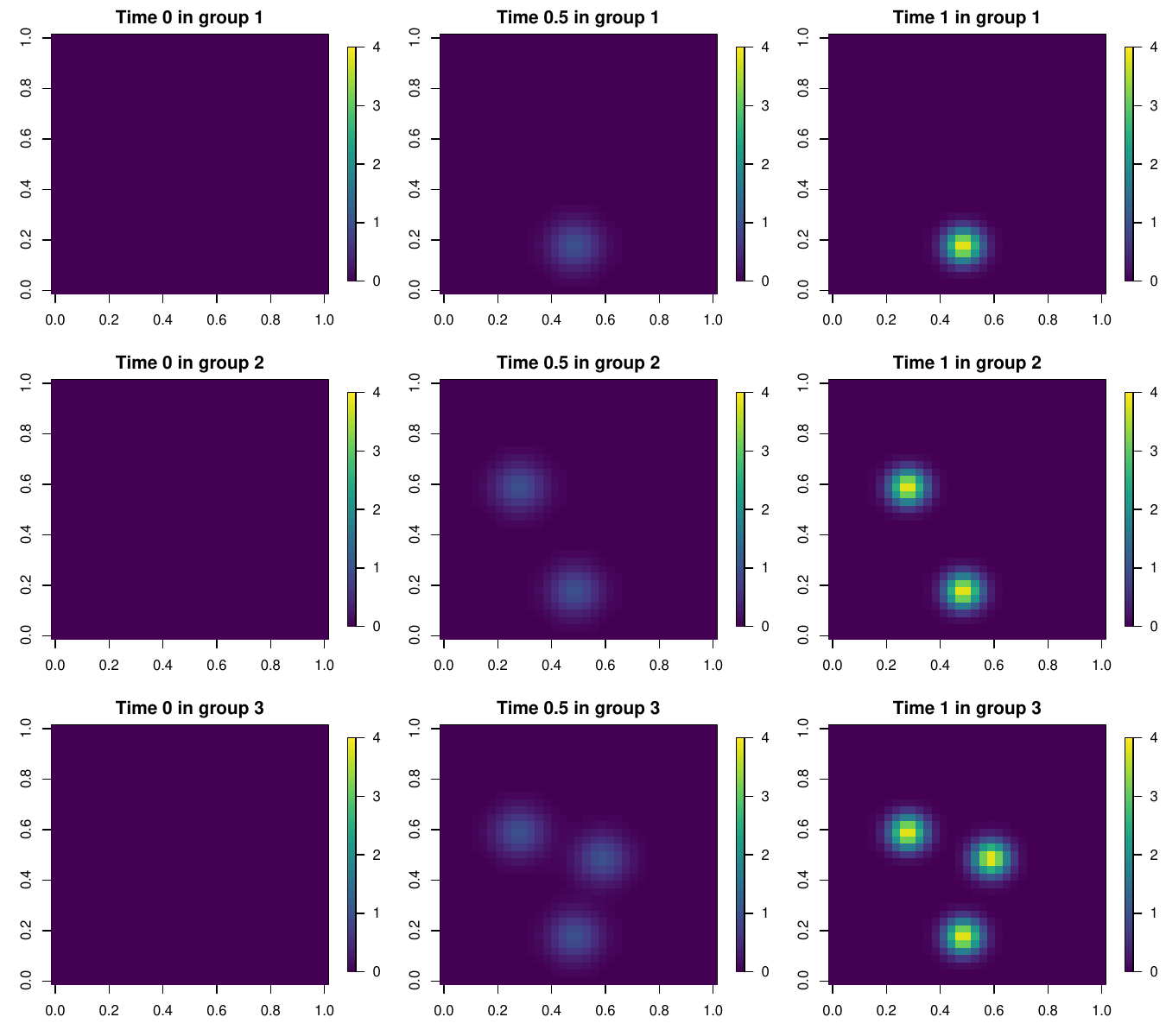}
     \caption{Inference for $\beta$'s: True $\beta(h_{g},h_{1},h_{2},h_{3},t)$ in slice $h_{3}=7$ for three groups $h_{g}=1,2,3$ at three-time points $t=0.0,0.5,1.0$ in simulated data. 
     See Section \ref{sec: sim study} in the main paper for additional details.
     }
     \label{fig: sim betas true}
\end{figure}

\begin{figure}[htbp]
     \centering
     \includegraphics[width=0.8\textwidth]{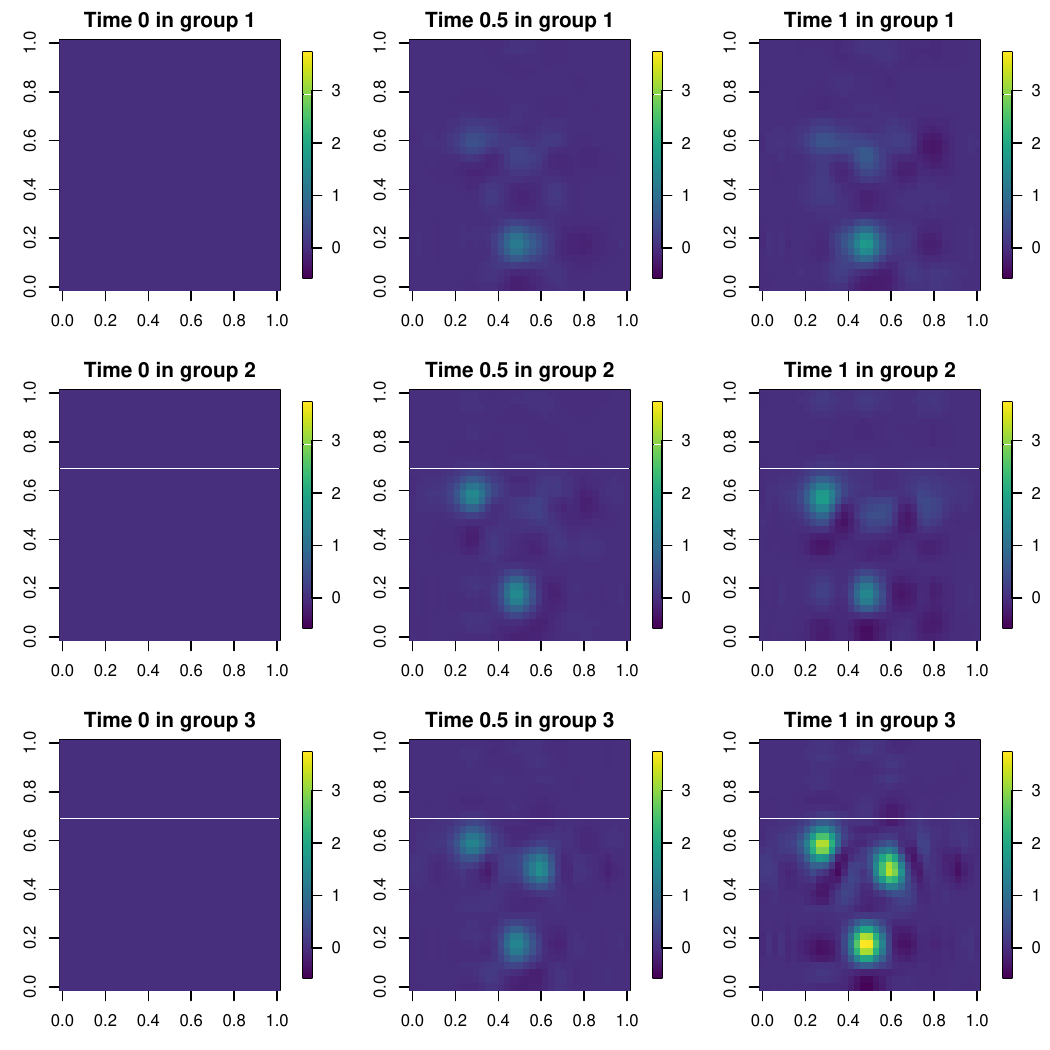}
     \caption{Inference for $\beta$'s: Estimated $\beta(h_{g},h_{1},h_{2},h_{3},t)$ in slice $h_{3}=7$ for three groups $h_{g}=1,2,3$ at three-time points $t=0.0,0.5,1.0$ from simulated data  obtained by our proposed approach.
     See Section \ref{sec: sim study} in the main paper for additional details.
     }
     \label{fig: sim betas esti}
\end{figure}

%
%

\clearpage
\bibliographystyle{natbib}
\bibliography{main}

\end{document}